\titleformat*{\section}{\bf\Large\center}
\def\T{{ \mathrm{\scriptscriptstyle T} }}
\newcommand{\ind}{\perp\!\!\!\!\perp} 
\newtheorem{definition}{Definition}
\newtheorem{lemma}{Lemma}
\newtheorem{theorem}{Theorem}
\newtheorem{proposition}{Proposition}
\newtheorem{corollary}{Corollary}
\apptocmd{\sloppy}{\hbadness 10000\relax}{}{} 
\newcites{sec}{References}
\begin{document}
\onehalfspacing

\title{\bf  
Interpretable sensitivity analysis for the Baron--Kenny approach to mediation with unmeasured confounding
} 
\author{Mingrui Zhang and Peng Ding
\footnote{Mingrui Zhang, Division of Biostatistics, University of California, Berkeley, CA 94720 U.S.A. (E-mail: mingrui\_zhang@berkeley.edu). Peng Ding, Department of Statistics, University of California, Berkeley, CA 94720 U.S.A. (E-mail: pengdingpku@berkeley.edu). 
}
}
\date{}
 
\maketitle

\begin{abstract}
Mediation analysis assesses the extent to which the exposure affects the outcome indirectly through a mediator and the extent to which it operates directly through other pathways. The popular Baron--Kenny approach estimates the indirect and direct effects of the exposure on the outcome based on linear regressions. However, when the exposure and the mediator are not randomized, the estimates may be biased due to unmeasured confounding. We first derive general omitted-variable bias formulas in linear regressions with vector responses and regressors. We then use the formulas to develop a sensitivity analysis method for the Baron--Kenny approach in the presence of unmeasured confounding. To ensure interpretability, we express the sensitivity parameters to correspond to the natural factorization of the joint distribution of the direct acyclic graph. They measure the partial correlation between the unmeasured confounder and the exposure, mediator, and outcome, respectively. We further propose a novel measure called the ``robustness value for mediation'' or simply the ``robustness value'', to assess the robustness of results based on the Baron--Kenny approach with respect to unmeasured confounding. Intuitively, the robustness value measures the minimum value of the maximum proportion of variability explained by the unmeasured confounding, for the exposure, mediator, and outcome, to overturn the results of the direct and indirect effect estimates. Importantly, we prove that all our sensitivity bounds are attainable and thus sharp. 
\end{abstract}

\medskip 
\noindent 
{\bf Keywords}: 
causal inference; Cochran's formula; direct effect; indirect effect; omitted-variable bias; robustness value 

\newpage
\section{Introduction}\label{sec::introduction}

\subsection{Baron--Kenny approach and unmeasured confounding}\label{subsec::introduction-bk}
Researchers are often interested in the causal mechanism from an exposure to an outcome. Mediation analysis is a tool for quantifying the extent to which the exposure acts on the outcome indirectly through a mediator and the extent to which the exposure acts on the outcome directly through other pathways. That is, it estimates the indirect and direct effects of the exposure on the outcome. The Baron--Kenny approach to mediation \citep{baron1986moderator} is the most popular approach for empirical mediation analysis, with more than one hundred thousand Google Scholar citations when we were preparing for this manuscript. It decomposes the total effect of the exposure on the outcome based on linear structural equation models. However, with non-randomized exposure or mediator, unmeasured confounding can bias the estimates of the indirect and direct effects. A central question is then to assess the sensitivity of the estimates with respect to unmeasured confounding. This is often called {\it sensitivity analysis} in causal inference \citep[e.g.,][]{Cornfield::1959, Rosenbaum::1983JRSSB, lin1998assessing, imbens2003sensitivity, rothman2008modern, ding2016sensitivity, cinelli2020making}.

Sensitivity analysis for mediation analysis has been an important research topic in recent years. \citet{tchetgen2012semiparametric}, \citet{ding2016sharp}, and \citet{smith2019mediational} propose sensitivity analysis methods based on the counterfactual formulation of mediation analysis. They focus on nonlinear structural equation models. \citet{vanderweele2010bias} proposes a general formula for sensitivity analysis that can be used for the Baron--Kenny approach, and \citet{le2016bias} also derives a formula for the Baron--Kenny approach. However, their resulting sensitivity parameters are not easy to interpret due to the unknown scale of the unmeasured confounding. \citet{imai2010identification} propose a sensitivity analysis method for mediation analysis based on linear structural equation models. However, they do not discuss the most general form of the Baron--Kenny approach. \citet{cox2013sensitivity} compare and evaluate some existing methods for assessing sensitivity in mediation analysis.

It is still an open problem to develop an interpretable sensitivity analysis method for the most general form of the Baron--Kenny approach. It is perhaps a little surprising due to the popularity of the Baron--Kenny approach. The main challenge is to find parsimonious and interpretable sensitivity parameters that can quantify the strength of the unmeasured confounding with respect to the exposure, mediator, and outcome. We fill this gap by leveraging the insight from a recent result in \citet{cinelli2020making}, with similar results also in \citet{mauro1990understanding}, \citet{frank2000impact}, and \citet{hosman2010sensitivity}. The key idea of \citet{cinelli2020making} is to use the partial $R^2$ as sensitivity parameters that measure the proportions of variability explained by unmeasured confounding conditional on the observed variables. The $R^2$ parameters are interpretable and bounded between $0$ and $1$, and are utilized in various sensitivity analysis settings, including those in \citet{small2007sensitivity}, \citet{oster2019unobservable}, and \citet{freidling2022optimization}. However, the formula in \citet{cinelli2020making} cannot be applied immediately to sensitivity analysis for the Baron--Kenny approach, especially for the case with multiple mediators and multiple unmeasured confounders. We solve this problem by deriving a general omitted-variable bias (OVB) formula with an $R$ parameterization, which will be explained thoroughly in the following discussion.

\subsection{Our contributions}\label{sec::introduction-contribution}

In this paper, we develop sensitivity analysis methods for the Baron--Kenny approach. Our framework allows the mediator and unmeasured confounder to be scalar or vector. 

With a single unmeasured confounder, we first derive a general OVB formula with vector responses and regressors, which extends \citet{cinelli2020making}. We propose a novel $R$ measure and express the multivariate OVB in terms of the $R$ parameters\footnote{In earlier arXiv versions of this paper, we proposed a novel matrix $R^2$ parameter and derived similar formulas using the matrix $R^2$ parameterization. In this version, we adopt the $R$ parameterization because it leads to more elegant mathematical results.}. The $R^2$ parameters can be viewed as the Euclidean norm of the corresponding vector $R$ parameters, and thus, the $R$ parameters are bounded in a Euclidean unit ball. Our OVB formula serves as the basis for the sensitivity analysis for the Baron--Kenny approach, especially in the setting with possibly multiple mediators. 

Then, we apply our OVB formula to develop a sensitivity analysis method for the Baron--Kenny approach with a single unmeasured confounder. The main challenge is that a direct application of our OVB formula requires specifying the sensitivity parameters that do not correspond to the natural factorization of the directed acyclic graph for mediation analysis. This is also a challenge encountered by \citet{vanderweele2010bias} and \citet{ding2016sharp}. They do not provide any solutions. We solve the problem by leveraging the properties of the $R$ parameters. Based on the $R$ parameters, we propose a novel measure and call it the ``robustness value for mediation'' or simply the ``robustness value''. It is the minimum value of the maximum strength of these $R$ parameters measured by their Euclidean norms, or equivalently, their corresponding $R^2$ parameters, such that the point estimate can be reduced to $0$ or the confidence interval can be altered to cover $0$. Intuitively, the corresponding $R^2$ parameters can be interpreted as the proportions of variability explained by the unmeasured confounding in the exposure, mediator, and outcome models given the observed variables. 

With multiple unmeasured confounders, a common trick is to consider a linear combination of the unmeasured confounders from the least squares regression that could fully characterize the OVB, and thus the problem reduces to the discussion with a single unmeasured confounder \citep{hosman2010sensitivity, cinelli2020omitted}. Although this trick works well for many problems, it does not work in sensitivity analysis for the indirect effect with multiple mediators. It breaks down because the exposure could affect the outcome through the mediators in multiple paths. Thus, a single confounder could not characterize all the confounding in these paths. As a solution, we further derive a general OVB formula with vector responses, regressors, and unmeasured confounders based on $R$ parameters. 

Then, we apply the general OVB formula to develop a sensitivity analysis method for the Baron--Kenny approach with multiple unmeasured confounders. We extend the definition of robustness values, where the strength of $R$ parameters are measured by their spectral norms, because our $R$ parameters can be matrices with multiple mediators and multiple unmeasured confounders. We provide methods to compute the robustness values. To the best of our knowledge, sensitivity analysis with vector unmeasured confounders has not been done before even for the simple least squares with vector responses, let alone the more complicated mediation analysis. 

Last but not least, we show that all our sensitivity analysis methods are {\it{sharp}}. That is, given the observables, the sensitivity parameters can freely take arbitrary values in their feasible regions. 

\subsection{Overview of our sensitivity analysis method for practitioners}\label{sec::introduction-overview}

The Baron--Kenny approach targets direct and indirect effects under the linear structural models. Based on the observed data, we can obtain the point estimates and $t$ statistics for direct and indirect effects. With unmeasured confounders, the result could be biased due to the OVB. We express direct and indirect effects by moments of observables and $R$ parameters. The moments of observables can be estimated by the observed data. The $R$ parameters measure the strength of association between the unmeasured confounder and the exposure, the mediators, and the outcome, respectively. Our proposed sensitivity analysis methods aim to solve the following problems: 

\begin{enumerate}
\item[(P1)] Given the upper bound of the $R$ parameters, report the worst point estimate and $t$ statistic for the direct or indirect effect. 
\item[(P2)] Report the lower bound of the $R$ parameters, such that the absolute $t$ statistic for the direct or indirect effect could reduce to 1.96 or 0. We define the lower bound as the ``robustness value for mediation''. 
\end{enumerate}

(P1) and (P2) have been standard in non-mediation settings \citep[e.g.,][]{ding2016sensitivity, cinelli2020making}. We propose analogous methods for mediation analysis based on the Baron--Kenny approach. All the methods discussed in this paper are implemented in the R package \texttt{BaronKennyU}, with guidance provided in Section \ref{sec::package}. The complete replication files for this paper are provided at: \url{https://github.com/mingrui229/Sensitivity-analysis-for-Baron-Kenny}. 



\subsection{Notation and definitions}\label{subsec::introduction-notation}

Let $\mathrm{dim}(\cdot)$ denote the dimension of a vector. For vector $x$, let $x_j$ denote the $j$th coordinate of $x$, and $x_{-j}$ denote the vector after removing the $j$th coordinate of $x$. Let $I$ denote the identity matrix with the dimension implicit and clear in the context. For positive semi-definite matrix $A$, let $A^{1/2}$ denote its symmetric square root. We use $\mathrm{diag}(\cdot)$ to denote a diagonal matrix formed by keeping only the diagonal elements of a square matrix, while setting all off-diagonal elements to 0. 

Let $\|\cdot\|_{\infty}$ denote the maximum of the absolute values of a vector or a matrix. Let $\| \cdot \|_2$ denote the Euclidean norm of a vector. We also use $\|\cdot\|_2$ to denote the matrix norm induced by the Euclidean norm, i.e., $\|A\|_2=\sup_{x\not=0}\|Ax\|_2/\|x\|_2$ for matrix $A$, which is also known as the spectral norm. When $A$ reduces to a vector, the spectral norm of $A$ reduces to the Euclidean norm of $A$. Let $\mathbb{B}_d$ denote the open unit ball of $d$-dimensional vectors with respect to the Euclidean norm, i.e., $\mathbb{B}_d=\{x\in \mathbb{R}^d:\|x\|_2<1\}$. Let $\partial \mathbb{B}_d$ denote the boundary of $d$-dimensional unit ball with respect to the Euclidean norm, i.e., $\partial \mathbb{B}_d=\{x\in\mathbb{R}^d:\|x\|_2=1\}$, and $\overline{\mathbb{B}}_d$ denote the closed $d$-dimensional unit ball with respect to the Euclidean norm, i.e., $\overline{\mathbb{B}}_d=\{x\in\mathbb{R}^d:\|x\|_2\leq 1\}$. Let $\mathbb{B}_{d_1\times d_2}$ denote the open unit ball of $d_1\times d_2$ matrices with respect to the spectral norm, i.e., $\mathbb{B}_{d_1\times d_2}=\{x\in \mathbb{R}^{d_1\times d_2}:\|x\|_2<1\}$. Let $\mathbb{S}_n^{++}$ denote the sets of $n\times n$ positive definite matrices. 

Let $\mathrm{corr}(\cdot,\cdot)$ denote the correlation coefficient of two random variables. Let $\hat{\mathrm{var}}(\cdot)$ denote the sample variance of a vector, $\hat{\mathrm{cov}}(\cdot)$ denote the sample covariance matrix of a data matrix, and $\hat{\mathrm{corr}}(\cdot,\cdot)$ denote the sample correlation coefficient of two scalars. Let $\ind$ denote independence of random variables.

We review the definition of least squares below. For least squares, we use lowercase letters to denote random variables, uppercase letters to denote the observed vectors or matrices for independent samples from the corresponding random variables, and Greek letters to denote the coefficients in regressions, respectively. We assume that all random vectors are column vectors and the coefficients from least squares are vectors or matrices with suitable dimensions. 

\begin{definition}
\label{def::least-squares}

(i) For scalar $y$ and vector $x$, the population least squares regression of $y$ on $x$ is $y=\beta^{\T} x+y_{\perp x}$, where $\beta=\arg\min_bE\{(y-b^{\T} x)^2\}$, and $y_{\perp x}  = y - \beta^{\T} x$ denotes the population residual. 
If $y$ is a vector, then $\beta$ is a matrix that minimizes $E\{ \| y-b^{\T} x \|_2^2\}$.

(ii) For vector $Y$ and matrix $X$, the sample least squares regression of $Y$ on $X$ is $Y=X\hat{\beta}+Y_{\perp X}$, where $\hat{\beta}=\arg\min_b\|Y-Xb\|_2^2$, and $Y_{\perp X} = Y - X\hat{\beta}$ denotes the sample residual vector. 
If $Y$ is a matrix, then  $\hat{\beta}$ is a matrix with each column corresponding to the sample least squares coefficient of the corresponding column of $Y$ on $X$.  
\end{definition}

Definition \ref{def::least-squares} above does not assume that the linear model is correct. We also review the definition of the population $R^2$ and population partial $R^2$ in Definition \ref{def::traditional-r2} below. 

\begin{definition}
\label{def::traditional-r2}
For scalar $y$ and vectors $x$ and $z$, define the population $R^2$ of $y$ on $x$ as $R_{y\sim x}^2=1-\mathrm{var}(y_{\perp x})/\mathrm{var}(y)$ if $\mathrm{var}(y)\not=0$, and the population partial $R^2$ of $y$ on $x$ given $z$ as $R_{y\sim x\mid z}^2=1-\mathrm{var}(y_{\perp x,z})/\mathrm{var}(y_{\perp z})$ if $\mathrm{var}(y_{\perp z})\not=0$. 
\end{definition}

\section{Omitted-variable bias in least squares}\label{sec::ovb}

The Baron--Kenny approach for mediation analysis builds on linear structural equation models. To deal with unmeasured confounding for the Baron--Kenny approach, we first develop theoretical tools for the simpler least squares problem. Section \ref{subsec::ovb-r2} reviews the $R^2$ parameterization of the omitted-variable bias (OVB) formula from \citet{cinelli2020making}. Their formula applies only when the outcome, exposure, and unmeasured confounder are scalars, and is not general enough for the Baron–Kenny approach with multiple mediators. To extend their framework, Section \ref{subsec::new-r} introduces a novel $R$ measure for multivariate partial correlation. Building on the $R$ measure, Sections \ref{subsec::ovb-r} and \ref{subsec::ovb-vector-u} develop an OVB formula that accommodates vector outcome and exposure when the unmeasured confounder is scalar and vector, respectively. 

\subsection{Review of \citet{cinelli2020making}'s OVB with the $R^2$ parameterization}\label{subsec::ovb-r2}
We review \citet{cinelli2020making}'s key result on OVB based on the $R^2$ parametrization. Let $y$ denote the outcome, $a$ the exposure, $c$ the observed covariates including the constant $1$, and $u$ the unmeasured confounder. Assume $y,a,u$ are scalars and $c$ can be a vector. Consider the long and short population least squares regressions with and without $u$:
\begin{equation}\label{eq::ols-long-short}
\begin{split}
y&=\gamma_1 a+\gamma_2^{\T} c+\gamma_3u+y_{\perp a,c,u},\\
y&=\tilde{\gamma}_1a+\tilde{\gamma}_2^{\T} c+y_{\perp a,c}.
\end{split}
\end{equation}
The difference between the coefficients of $a$ in the long and short regressions in \eqref{eq::ols-long-short} equals 
\begin{equation}
\label{eq::ovb-cochran}
\tilde{\gamma}_1-\gamma_1=\gamma_3\cdot \rho_1,
\end{equation}
where $\rho_1$ is the coefficient of $a$ in the population least squares regression of $u$ on $(a,c)$. The result in \eqref{eq::ovb-cochran} is called Cochran's formula in statistics \citep{cochran1938omission, cox2007generalization} and the OVB formula in econometrics \citep{angrist2008mostly}. It has a simple form and is useful for sensitivity analysis with respect to unmeasured confounding. \citet{vanderweele2010bias} derives a similar formula without the linear forms in \eqref{eq::ols-long-short}. However, it can be difficult to interpret the regression coefficients $\gamma_3$ and $\rho_1$ in \eqref{eq::ovb-cochran} because $u$ is unmeasured with an unknown scale.

\citet{cinelli2020making} propose a novel $R^2$ parametrization of the OVB: 

\begin{equation}
\label{eq::ovb-r2}
\tilde{\gamma}_1 - \gamma_1 = \mathrm{sgn}(\tilde{\gamma}_1 - \gamma_1) \left(\frac{R_{y\sim u\mid a,c}^2 R_{a\sim u\mid c}^2}{1- R_{a\sim u\mid c}^2} \right)^{1/2} \left\{ \frac{\mathrm{var}(y_{\perp a,c})}{\mathrm{var}(a_{\perp c})} \right\}^{1/2}.
\end{equation}
The formula \eqref{eq::ovb-r2} states that the difference $\gamma_1-\tilde{\gamma}_1$ depends on three types of quantities. First, it depends on the variances of the residuals $\mathrm{var}(y_{\perp a,c})$ and $\mathrm{var}(a_{\perp c})$, which are determined by the moments of the observables. Second, it depends on two $R^2$ parameters: $R_{a\sim u\mid c}^2$ measures the confounder-exposure partial correlation given observed covariates $c$, and $R_{y\sim u\mid a,c}^2$ measures the confounder-outcome partial correlation given the exposure $a$ and observed covariates $c$. Third, it depends on the sign of the bias $\mathrm{sgn}(\tilde{\gamma}_1 - \gamma_1)$. Since $u$ is unobserved, the bias $\tilde{\gamma}_1 - \gamma_1 $ can be positive or negative. The formula \eqref{eq::ovb-r2} is more interpretable than \eqref{eq::ovb-cochran}. Even with an unknown scale of $u$, the $R^2$ is bounded between $0$ and $1$ and measures the proportion of variability explained by $u$. \citet{imbens2003sensitivity} and \citet{blackwell2014selection} also propose to use the $R^2$-type sensitivity parameters for other settings. 

The OVB formula \eqref{eq::ovb-r2} assumes that both $y$ and $a$ are scalar, and it cannot be immediately deployed to perform sensitivity analysis in the Baron--Kenny approach with multiple mediators. In the following part of this section, we extend the OVB formula to allow for both $y$ and $a$ to be vectors, based on novel $R$ parameters. 

\subsection{Introducing $R$ parameters: measuring multivariate partial correlation}\label{subsec::new-r}

We propose a novel $R$ measure for developing our sensitivity analysis results. It measures the correlation between two centered vectors, possibly after controlling for a third vector. We give the formal definition of the $R$ measure in Definition \ref{def::new-r} below. 
\begin{definition}
\label{def::new-r}
For centered vectors $y$ and $x$, define 
$$R_{y\sim x}=\mathrm{cov}(y)^{-1/2}\mathrm{cov}(y,x)\mathrm{cov}(x)^{-1/2}$$
if $\mathrm{cov}(y)$ and $\mathrm{cov}(x)$ are invertible. For vectors $y,x,z$, where $z$ includes the constant $1$, define 
$$R_{y\sim x\mid z}=\mathrm{cov}(y_{\perp z})^{-1/2}\mathrm{cov}(y_{\perp z},x_{\perp z})\mathrm{cov}(x_{\perp z})^{-1/2}$$
if $\mathrm{cov}(y_{\perp z})$ and $\mathrm{cov}(x_{\perp z})$ are invertible.  
\end{definition}
Consider the special case when $y$ and $x$ are scalars in Definition \ref{def::new-r}. Then, $R_{y\sim x}$ reduces to the population Pearson correlation coefficient of $y$ and $x$, and $R_{y\sim x\mid z}$ reduces to the population partial correlation coefficient of $y$ and $x$ after controlling for $z$. Moreover, the squared $R_{y\sim x}$ and $R_{y\sim x\mid z}$ in Definition \ref{def::new-r} are $R_{y\sim x}^2$ and $R_{y\sim x\mid z}^2$ in Definition \ref{def::traditional-r2}, respectively. 

\citet{cinelli2020making} also use the same notation $R$ for the partial correlation coefficient in their discussion, but our Definition \ref{def::new-r} allows for vectors $y$ and $x$. When $y$ and $x$ are vectors, $R_{y\sim x}$ or $R_{y\sim x\mid z}$ is a matrix with dimension $\mathrm{dim}(y)\times \mathrm{dim}(x)$, satisfying $R_{y\sim x}=R_{x\sim y}^{\T}$ and $R_{y\sim x\mid z}=R_{x\sim y\mid z}^{\T}$. 

To unify the definition of $R_{y\sim x}$ and $R_{y\sim x\mid z}$ in Definition \ref{def::new-r}, we can write $R_{y\sim x\mid z}=R_{y_{\perp z}\sim x_{\perp z}}$, where $y_{\perp z}$ and $x_{\perp z}$ are centered residual vectors. By the property of least squares, $R_{y\sim x\mid z}^2=R_{y_{\perp z}\sim x_{\perp z}}^2$, for $R^2$ and partial $R^2$ in Definition \ref{def::traditional-r2}. Thus, we can focus on the discussion of the $R$ measure of two centered vectors and the population $R^2$. Lemma \ref{lem::r2-and-r} states that the Euclidean norm or spectral norm of the $R$ measure in Definition \ref{def::new-r} is related to the $R^2$ in Definition \ref{def::traditional-r2}. 

\begin{lemma}
\label{lem::r2-and-r}
For centered scalar $y$ and centered vector $x$, we have $\|R_{y\sim x}^{\T}\|_2^2=R_{y\sim x}^2$, if $\mathrm{var}(y)\not=0$ and $\mathrm{cov}(x)$ is invertible. For centered vectors $y$ and $x$, $\|R_{y\sim x}\|_2^2$ is the maximum of $R_{\alpha^{\T} y\sim x}^2$ over all linear combinations $\alpha^{\T} y\not=0$, if $\mathrm{cov}(y)$ and $\mathrm{cov}(x)$ are invertible. 
\end{lemma}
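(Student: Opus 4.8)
The plan is to handle the two claims separately: the scalar-$y$ identity follows from a single algebraic computation, and the vector-$y$ claim reduces to maximizing a generalized Rayleigh quotient, which I can resolve by an eigenvalue argument that reuses the first claim. Throughout I would write $\Sigma_y=\mathrm{cov}(y)$, $\Sigma_x=\mathrm{cov}(x)$, $\Sigma_{yx}=\mathrm{cov}(y,x)$, and $\Sigma_{xy}=\Sigma_{yx}^\T$, all with the stated invertibility, and I would use two standard least-squares facts for centered variables: the population coefficient of $y$ on $x$ is $\Sigma_x^{-1}\Sigma_{xy}$, and the residual variance is $\mathrm{var}(y_{\perp x})=\mathrm{var}(y)-\Sigma_{yx}\Sigma_x^{-1}\Sigma_{xy}$.

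For the first claim, with $y$ scalar the object $R_{y\sim x}=\mathrm{var}(y)^{-1/2}\Sigma_{yx}\Sigma_x^{-1/2}$ from Definition \ref{def::new-r} is a row vector, so $\|R_{y\sim x}^\T\|_2^2=R_{y\sim x}R_{y\sim x}^\T=\mathrm{var}(y)^{-1}\Sigma_{yx}\Sigma_x^{-1}\Sigma_{xy}$. Dividing the residual-variance identity by $\mathrm{var}(y)$ and comparing with Definition \ref{def::traditional-r2} gives $R_{y\sim x}^2=1-\mathrm{var}(y_{\perp x})/\mathrm{var}(y)=\mathrm{var}(y)^{-1}\Sigma_{yx}\Sigma_x^{-1}\Sigma_{xy}$, which is exactly $\|R_{y\sim x}^\T\|_2^2$. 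This is immediate once the two expressions are written side by side.

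For the second claim, I would first apply the first claim to the scalar $\alpha^\T y$. Using $\mathrm{var}(\alpha^\T y)=\alpha^\T\Sigma_y\alpha$ and $\mathrm{cov}(\alpha^\T y,x)=\alpha^\T\Sigma_{yx}$, the first claim yields
$$
R_{\alpha^\T y\sim x}^2=\frac{\alpha^\T\Sigma_{yx}\Sigma_x^{-1}\Sigma_{xy}\alpha}{\alpha^\T\Sigma_y\alpha}=\frac{\alpha^\T M\alpha}{\alpha^\T\Sigma_y\alpha},\qquad M:=\Sigma_{yx}\Sigma_x^{-1}\Sigma_{xy}.
$$
Since $\Sigma_y$ is positive definite, $\alpha^\T\Sigma_y\alpha>0$ for every $\alpha\neq 0$, so the constraint $\alpha^\T y\neq 0$ amounts to $\alpha\neq 0$ and $R_{\alpha^\T y\sim x}^2$ is well defined there. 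Substituting $\beta=\Sigma_y^{1/2}\alpha$ turns the quotient into the ordinary Rayleigh quotient $\beta^\T(\Sigma_y^{-1/2}M\Sigma_y^{-1/2})\beta/(\beta^\T\beta)$, whose maximum over $\beta\neq 0$ is the largest eigenvalue of the symmetric matrix $\Sigma_y^{-1/2}M\Sigma_y^{-1/2}=R_{y\sim x}R_{y\sim x}^\T$. That largest eigenvalue equals $\|R_{y\sim x}\|_2^2$, and the maximum is attained at the leading eigenvector.

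The computations are short, so the only points requiring care are bookkeeping rather than depth: confirming that $\Sigma_y^{-1/2}M\Sigma_y^{-1/2}$ really coincides with $R_{y\sim x}R_{y\sim x}^\T$ (so that the top generalized eigenvalue is identified with the squared spectral norm, not merely bounded by it), and checking that the invertibility hypotheses make every inverse and square root well defined and make the supremum an attained maximum. I expect no genuine obstacle; the main subtlety is simply ensuring the change of variables $\beta=\Sigma_y^{1/2}\alpha$ is a bijection of the nonzero vectors, which holds because $\Sigma_y^{1/2}$ is invertible.
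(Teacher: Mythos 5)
Your proof is correct. The paper does not actually include a proof of this lemma (its supplement only proves the later lemmas and the theorems, treating this as standard, cf.\ the canonical-correlation identity cited in Remark \ref{rmk::ccc}), and your argument---verifying the scalar identity $\|R_{y\sim x}^\T\|_2^2=\mathrm{var}(y)^{-1}\Sigma_{yx}\Sigma_x^{-1}\Sigma_{xy}=R_{y\sim x}^2$ directly and then reducing the vector case to the Rayleigh quotient of $\Sigma_y^{-1/2}\Sigma_{yx}\Sigma_x^{-1}\Sigma_{xy}\Sigma_y^{-1/2}=R_{y\sim x}R_{y\sim x}^\T$ via the substitution $\beta=\Sigma_y^{1/2}\alpha$---is exactly the standard route, with all the inverses, the attainment of the maximum, and the identification of the top eigenvalue with the squared spectral norm handled correctly.
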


By Lemma \ref{lem::r2-and-r}, when $y$ is a scalar, $\|R_{y\sim x}\|_2^2$ represents the maximum proportion of the variance in $y$ that can be explained by linear predictors of $x$. When $y$ is a vector, $\|R_{y\sim x}\|_2^2$ represents the maximum proportion of the variance in linear combinations of $y$ that can be explained by linear predictors of $x$. Therefore, in Section \ref{subsec::scale-u-report} and Section \ref{subsec::vector-u-report}, we use the Euclidean norm or spectral norm of an $R$ parameter to measure its strength of correlation when we report our sensitivity analysis results based on upper bounds of $R$ parameters. Relatedly, $\|R_{y\sim x}\|_2^2$ or $\|R_{x\sim y}\|_2^2$ equals the maximal squared canonical correlation coefficient of $y$ and $x$ \citep{anderson1984introduction}, which gives an alternative interpretation of the spectral norm of $R$ parameters. 

Our $R$ parameters are closely related to marginal correlation coefficients. In Section \ref{sec::marginal-cc}, we will provide a detailed discussion on the relationship between $R_{y\sim x}$ and the marginal correlation coefficients of $y$ and $x$, as well as the use of marginal correlation coefficients as sensitivity parameters.

\subsection{A novel OVB formula with a single unmeasured confounder based on the $R$ parameterization}\label{subsec::ovb-r}

Now, we present a novel OVB formula based on the $R$ parametrization. We allow the outcome $y$ and exposure $a$ to be vectors but still assume a scalar unmeasured confounder $u$. We consider the long and short population least squares regressions with and without $u$: 
\begin{equation}\label{eq::multi-ols-long-short}
\begin{split}
y&=\gamma_1 a+\gamma_2 c+\gamma_3u+y_{\perp a,c,u},\\
y&=\tilde{\gamma}_1 a+\tilde{\gamma}_2 c+y_{\perp a,c}.
\end{split}
\end{equation}
The coefficients $\gamma_j$'s and $\tilde{\gamma}_j$'s in \eqref{eq::multi-ols-long-short} are matrices with suitable dimensions. Similar to \eqref{eq::ovb-cochran} with matrix coefficients, the difference between the coefficients of $a$ in the long and short regressions in \eqref{eq::multi-ols-long-short} equals 
\begin{equation}
\label{eq::ovb-cochran-extend}
\tilde{\gamma}_1-\gamma_1=\gamma_3\rho_1,
\end{equation}
where $\rho_1$ is the coefficient of $a$ in the population least squares regression of $u$ on $(a,c)$. Since $u$ is a scalar, $\gamma_3$ and $\rho_1$ have rank one, so the difference $\tilde{\gamma}_1-\gamma_1$ must be a rank-one matrix. Theorem \ref{thm::multi-ols} below extends \citet{cinelli2020making} to represent $\tilde{\gamma}_1-\gamma_1$ by the $R$ parameters. 

\begin{theorem}\label{thm::multi-ols}
Consider the regressions in \eqref{eq::multi-ols-long-short} with possibly vector $y$, $a$ and $c$ but scalar $u$. Assume the covariance matrix of $(y,a,u)_{\perp c}$ is invertible. 

(i) We have
\begin{equation}
\label{eq::ovb-vector-ya}
\gamma_1=\tilde{\gamma}_1-\frac{\mathrm{cov}(y_{\perp a,c})^{1/2}R_{y\sim u\mid a,c}R_{a\sim u\mid c}^{\T}\mathrm{cov}(a_{\perp c})^{-1/2}}{(1-\|R_{a\sim u\mid c}\|_2^2)^{1/2}}.
\end{equation}

(ii) Given the observables $\{y,a,c\}$, the parameters $\{R_{y\sim u\mid a,c}$, $R_{a\sim u\mid c}\}$ can take arbitrary values in $\mathbb{B}_{\mathrm{dim}(y)}\times \mathbb{B}_{\mathrm{dim}(a)}$, recalling the notation of unit balls $\mathbb{B}_d$ in Section \ref{sec::introduction}. 
\end{theorem}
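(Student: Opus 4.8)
For part (i), the plan is to first reduce to centered quantities by the Frisch--Waugh--Lovell theorem: partialling $c$ out of every variable leaves the matrix coefficient of $a$ in each regression of \eqref{eq::multi-ols-long-short} unchanged, so it suffices to analyze the regression of $y_{\perp c}$ on $a_{\perp c}$ (short) and on $(a_{\perp c},u_{\perp c})$ (long). Projecting the long regression onto $a_{\perp c}$ alone and using that its residual is orthogonal to $a_{\perp c}$ yields the multivariate Cochran identity $\tilde\gamma_1-\gamma_1=\gamma_3\rho^\T$, where $\gamma_3$ is the $\mathrm{dim}(y)\times 1$ coefficient of $u$ in the long regression and $\rho$ is the $\mathrm{dim}(a)$-vector coefficient from regressing $u_{\perp c}$ on $a_{\perp c}$.

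Next I would rewrite $\gamma_3$ and $\rho$ through the $R$ parameters of Definition \ref{def::new-r}. By Frisch--Waugh--Lovell again, $\gamma_3=\mathrm{cov}(y_{\perp a,c},u_{\perp a,c})\,\mathrm{var}(u_{\perp a,c})^{-1}=\mathrm{cov}(y_{\perp a,c})^{1/2}R_{y\sim u\mid a,c}\,\mathrm{var}(u_{\perp a,c})^{-1/2}$, and similarly $\rho^\T=\mathrm{var}(u_{\perp c})^{1/2}R_{a\sim u\mid c}^\T\,\mathrm{cov}(a_{\perp c})^{-1/2}$. Multiplying these, the matrix factors assemble into exactly $\mathrm{cov}(y_{\perp a,c})^{1/2}R_{y\sim u\mid a,c}R_{a\sim u\mid c}^\T\mathrm{cov}(a_{\perp c})^{-1/2}$, and the only remaining piece is the scalar ratio $\mathrm{var}(u_{\perp c})^{1/2}/\mathrm{var}(u_{\perp a,c})^{1/2}$. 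The crux of part (i) is to identify this ratio: the decomposition $\mathrm{var}(u_{\perp a,c})=\mathrm{var}(u_{\perp c})\,(1-R^2_{u\sim a\mid c})$ from Definition \ref{def::traditional-r2}, together with $R^2_{u\sim a\mid c}=\|R_{u\sim a\mid c}\|_2^2=\|R_{a\sim u\mid c}\|_2^2$ from Lemma \ref{lem::r2-and-r} and the symmetry $R_{u\sim a\mid c}=R_{a\sim u\mid c}^\T$, turns it into $(1-\|R_{a\sim u\mid c}\|_2^2)^{-1/2}$, giving \eqref{eq::ovb-vector-ya}.

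For part (ii), the strategy is to exhibit, for any target pair $(r_a,r_y)\in\mathbb{B}_{\mathrm{dim}(a)}\times\mathbb{B}_{\mathrm{dim}(y)}$, a scalar $u$ whose joint law with $(y,a,c)$ matches the given observables and realizes $R_{a\sim u\mid c}=r_a$ and $R_{y\sim u\mid a,c}=r_y$. The key idea is to build $u$ from the two orthogonal observed blocks left after partialling out $c$, namely $a_{\perp c}$ and $y_{\perp a,c}$, so that the two constraints decouple. I would posit $u=\alpha^\T a_{\perp c}+\beta^\T y_{\perp a,c}+\epsilon$ with $\epsilon$ independent of $(y,a,c)$, mean zero, and variance $\sigma_\epsilon^2$. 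A direct computation gives $R_{a\sim u\mid c}=\mathrm{cov}(a_{\perp c})^{1/2}\alpha/\mathrm{var}(u_{\perp c})^{1/2}$ and, since regressing out $a$ removes only the $\alpha^\T a_{\perp c}$ term so that $u_{\perp a,c}=\beta^\T y_{\perp a,c}+\epsilon$, also $R_{y\sim u\mid a,c}=\mathrm{cov}(y_{\perp a,c})^{1/2}\beta/\mathrm{var}(u_{\perp a,c})^{1/2}$, with $\mathrm{var}(u_{\perp c})=\|\mathrm{cov}(a_{\perp c})^{1/2}\alpha\|_2^2+\|\mathrm{cov}(y_{\perp a,c})^{1/2}\beta\|_2^2+\sigma_\epsilon^2$ and $\mathrm{var}(u_{\perp a,c})=\|\mathrm{cov}(y_{\perp a,c})^{1/2}\beta\|_2^2+\sigma_\epsilon^2$.

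Finally I would solve these equations. Normalizing the free scale of $u$ by $\mathrm{var}(u_{\perp c})=1$, matching $r_a$ forces $\|\mathrm{cov}(a_{\perp c})^{1/2}\alpha\|_2^2=\|r_a\|_2^2$ and $\mathrm{var}(u_{\perp a,c})=1-\|r_a\|_2^2$, and then matching $r_y$ forces $\sigma_\epsilon^2=(1-\|r_a\|_2^2)(1-\|r_y\|_2^2)$; hence $\alpha=\mathrm{cov}(a_{\perp c})^{-1/2}r_a$ and $\beta=\mathrm{cov}(y_{\perp a,c})^{-1/2}r_y(1-\|r_a\|_2^2)^{1/2}$, which are well defined under the stated invertibility of the observed covariances. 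The main obstacle, and the point where the open balls appear, is the positivity of the noise: $\sigma_\epsilon^2>0$ holds exactly when $\|r_a\|_2<1$ and $\|r_y\|_2<1$, and only then is the construction a genuine distribution. To confirm validity I would note that $(a_{\perp c},y_{\perp c},u_{\perp c})$ is the image of the mutually orthogonal triple $(a_{\perp c},y_{\perp a,c},\epsilon)$ under a block lower-triangular map with identity diagonal blocks and bottom-right entry $1$, so the constructed covariance matrix of $(y,a,u)_{\perp c}$ is invertible, establishing sharpness.
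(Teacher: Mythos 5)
Your proposal is correct and follows essentially the same route as the paper: part (i) is Cochran's formula combined with rewriting the long-regression coefficient of $u$ and the $u$-on-$(a,c)$ coefficient in terms of $R_{y\sim u\mid a,c}$, $R_{a\sim u\mid c}$ and the variance ratio $\mathrm{var}(u_{\perp c})/\mathrm{var}(u_{\perp a,c})=(1-\|R_{a\sim u\mid c}\|_2^2)^{-1}$, exactly as in the paper's proof. For part (ii), your explicit construction $u=\alpha^\T a_{\perp c}+\beta^\T y_{\perp a,c}+\epsilon$ with independent noise is precisely the construction the paper packages into its general sharpness lemma (verified there via a Schur-complement positive-definiteness argument), so the two arguments coincide up to whether the construction is inlined or abstracted.
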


Theorem \ref{thm::multi-ols}(i) states that the difference $\tilde{\gamma}_1-\gamma_1$ depends on two types of quantities. First, it depends on the covariances of residuals $\mathrm{cov}(y_{\perp a,c})$ and $\mathrm{cov}(a_{\perp c})$, which are determined by the moments of the observables. Second, it depends on two $R$ parameters, $R_{a\sim u\mid c}$ and $R_{y\sim u\mid a,c}$, which measure the confounder-exposure partial correlation and confounder-outcome partial correlation, respectively. When both $y$ and $a$ are scalars, $R_{y\sim u\mid a,c}$ and $R_{a\sim u\mid c}$ are scalars, and thus the formula simplifies to 
\begin{equation}\label{eq::thm2-scalar-ay}
\gamma_1=\tilde{\gamma}_1-\frac{R_{y\sim u\mid a,c}R_{a\sim u\mid c}}{(1-R_{a\sim u\mid c}^2)^{1/2}}\cdot\left\{\frac{\mathrm{var}(y_{\perp a,c})}{\mathrm{var}(a_{\perp c})}\right\}^{1/2}.
\end{equation}
Formula \eqref{eq::thm2-scalar-ay} is slightly different from formula \eqref{eq::ovb-r2}. Formula \eqref{eq::thm2-scalar-ay} no longer needs the sign parameters because $R_{y\sim u\mid a,c}$ and $R_{a\sim u\mid c}$ contain additional sign information that determines $\mathrm{sgn}(\tilde{\gamma}_1-\gamma_1)$ in formula \eqref{eq::ovb-r2}.

Theorem \ref{thm::multi-ols}(ii) shows that \eqref{eq::ovb-vector-ya} is {\it{sharp}} in the sense that given the observed random variables, the two parameters $R_{y\sim u\mid a,c}$ and $R_{a\sim u\mid c}$ can freely take arbitrary values in $\mathbb{B}_{\mathrm{dim}(y)}$ and $\mathbb{B}_{\mathrm{dim}(a)}$, respectively. By Lemma \ref{lem::r2-and-r}, the Euclidean norm or spectral norm of any $R$ parameter must be less than $1$. Therefore, the two parameters $R_{y\sim u\mid a,c}$ and $R_{a\sim u\mid c}$ can freely take arbitrary feasible values. Our notion of sharpness implies that the sensitivity analysis procedure is not conservative under any feasible specification of the sensitivity parameters.

Since the parameters $\{R_{y\sim u\mid a,c},R_{a\sim u\mid c}\}$ depend on the unmeasured covariate $u$, we cannot estimate them based on the observed data and must treat them as sensitivity parameters. With known $R_{y\sim u\mid a,c}$ and $R_{a\sim u\mid c}$, Theorem \ref{thm::multi-ols} motivates the following plug-in estimator
\begin{equation}\label{eq::plug-in-estimator}
\hat{\gamma}_1=\hat{\gamma}_1^{\text{obs}}-\frac{\hat{\mathrm{cov}}(Y_{\perp A,C})^{1/2}R_{y\sim u\mid a,c}R_{a\sim u\mid c}^{\T}\hat{\mathrm{cov}}(A_{\perp C})^{-1/2}}{(1-\|R_{a\sim u\mid c}\|_2^2)^{1/2}},
\end{equation}
where $\hat{\gamma}_1^{\text{obs}}$ is the coefficients of $A$ of the sample least squares fit of $Y$ on $(A,C)$. Given the sensitivity parameters $\{R_{y\sim u\mid a,c},R_{a\sim u\mid c}\}$, the estimator $\hat{\gamma}_1$ in \eqref{eq::plug-in-estimator} is consistent for $\gamma_1$ and asymptotically normal. We can use the nonparametric bootstrap for statistical inference, because \eqref{eq::plug-in-estimator} is a function of sample moments given the sensitivity parameters. Varying these sensitivity parameters within a plausible region yields a range of sensitivity bounds. 


One may wonder whether Theorem \ref{thm::multi-ols}  is really needed for a vector $y$, given the classic result that least squares with vector outcome $y$ reduces to separate least squares for each component of $y$. Indeed, we can apply \eqref{eq::thm2-scalar-ay} to each coordinate of $y$ to conduct marginal sensitivity analysis. However, if we want to conduct a joint sensitivity analysis across coordinates of $y$, we must use Theorem \ref{thm::multi-ols}. Otherwise, the $R$ parameters $\{R_{y_j\sim u\mid a,c}:1\leq j\leq \mathrm{dim}(y)\}$ in the marginal sensitivity analysis cannot take arbitrary values in $(-1,1)^{\mathrm{dim}(y)}$ in general, due to the correlation of $y$'s. Put another way, Theorem \ref{thm::multi-ols} ensures a sharp joint sensitivity analysis. More crucially, the formula in Theorem \ref{thm::multi-ols} serves as the stepstone for sensitivity analysis with vector mediator $m$ in Section \ref{sec::scale-u} below.

\subsection{OVB formula with multiple unmeasured confounders}\label{subsec::ovb-vector-u}
Now, we present an even more general OVB formula with multiple unmeasured confounders. We allow the unmeasured confounder $u$ to be vector. We consider the long and short population least squares regressions with and without $u$: 
\begin{equation}\label{eq::most-general-long-short-ols}
\begin{split}
y&=\gamma_1a+\gamma_2c+\gamma_3u+y_{\perp a,c,u},\\
y&=\tilde{\gamma}_1a+\tilde{\gamma}_2c+y_{\perp a,c},\\
\end{split}
\end{equation}
where all of $y,a,c,u$ are allowed to be vectors. Again, the formula \eqref{eq::ovb-cochran-extend} still holds when $u$ is a vector, but unlike Section \ref{subsec::ovb-r}, the difference $\tilde{\gamma}_1-\gamma_1$ may not be a rank-one matrix. This gives the mathematical reason why we cannot simply assume $u$ is a scalar in many multivariate problems. Theorem \ref{thm::most-general-ovb} below extends Theorem \ref{thm::multi-ols} with vector $u$ to represent $\tilde{\gamma}_1-\gamma_1$ by the $R$ parameters. 

\begin{theorem}
\label{thm::most-general-ovb}
Consider the regressions in \eqref{eq::most-general-long-short-ols} with possibly vector $y,a,c,u$. Assume the covariance matrix of $(y,a,u)_{\perp c}$ is invertible.

 (i) We have
 $$\gamma_1=\tilde{\gamma}_1-\mathrm{cov}(y_{\perp a,c})^{1/2}R_{y\sim u\mid a,c}\mathrm{cov}(u_{\perp a,c})^{-1/2}\mathrm{cov}(u_{\perp c})^{1/2}R_{a\sim u\mid c}^{\T}\mathrm{cov}(a_{\perp c})^{-1/2},$$
 where
 $$\mathrm{cov}(u_{\perp a,c})=\mathrm{cov}(u_{\perp c})-\mathrm{cov}(u_{\perp c})^{1/2}R_{a\sim u\mid c}^{\T} R_{a\sim u\mid c}\mathrm{cov}(u_{\perp c})^{1/2}.$$

(ii) Given observables $\{y,a,c\}$, the parameters $\{R_{y\sim u\mid a,c},R_{a\sim u\mid c},\mathrm{cov}(u_{\perp c})\}$ can take arbitrary values in $\mathbb{B}_{\mathrm{dim}(y)\times\mathrm{dim}(u)}\times \mathbb{B}_{\mathrm{dim}(a)\times\mathrm{dim}(u)}\times \mathbb{S}_{\mathrm{dim}(u)}^{++}$. 
\end{theorem}

Theorem \ref{thm::most-general-ovb}(i) states that the difference $\tilde{\gamma}_1-\gamma_1$ depends on three types of quantities: moment of observables, $R$ parameters involving $u$, and $\mathrm{cov}(u_{\perp c})$. The quantity $\mathrm{cov}(u_{\perp c})$ in Theorem \ref{thm::most-general-ovb}(i) is new compared with Theorem \ref{thm::most-general-ovb}(i). If $u$ is a scalar, then $\mathrm{cov}(u_{\perp a,c})^{-1/2}\mathrm{cov}(u_{\perp c})^{1/2}$ reduces to $(1-\|R_{a\sim u\mid c}\|_2^2)^{-1/2}$, and Theorem \ref{thm::most-general-ovb}(i) reduces to Theorem \ref{thm::multi-ols}(i). However, if $u$ is a vector, the quantity $\mathrm{cov}(u_{\perp c})$ is involved. This is mainly because $R_{y\sim u\mid a,c}$ and $R_{a\sim u\mid c}$ are invariant to any invertible linear transformation of $u$ if and only if $u$ is a scalar. Theorem \ref{thm::most-general-ovb}(ii) states the sharpness. 

Since $u$ is unobserved, we need to specify the sensitivity parameters $\{R_{y\sim u\mid a,c},R_{a\sim u\mid c},\mathrm{cov}(u_{\perp c})\}$ to apply Theorem \ref{thm::most-general-ovb} and obtain the point estimate of $\gamma_1$. However, it is difficult to specify these sensitivity parameters, especially when the dimension of $u$ is undetermined. Fortunately, Theorem \ref{thm::most-general-ovb} is useful when we want to compute the sensitivity bound, because the worst point estimate or $t$ statistic under the upper bound of $R$ parameters does not depend on the choice of $\mathrm{cov}(u_{\perp c})$; see Proposition \ref{prop::assume-cov-u-ols} in the Supplementary Material for the precise statement.

Here the worst point estimate (or $t$ statistic) means the minimum point estimate (or $t$ statistic) when the observed point estimate (or $t$ statistic) is positive, and the maximum point estimate (or $t$ statistic) when the observed point estimate (or $t$ statistic) is negative. To simplify the presentation, we use the terminology ``upper/lower bound of $R$ parameters'' to refer to ``upper/lower bound of spectral norms of $R$ parameters''. Recall Lemma \ref{lem::r2-and-r} for an interpretation of spectral norms of $R$ parameters. We also use the terms ``spectral norm of $R$ parameters'' and ``strength of $R$ parameters'' interchangeably. 

 As a more concrete example, we will show how to compute the worst-case result for the Baron--Kenny approach with multiple unmeasured confounders in Section \ref{sec::vector-u}. To simplify the presentation and highlight the key concepts, we will start with the case with a single unmeasured confounder in Section \ref{sec::scale-u} below.

\section{Sensitivity analysis for the Baron--Kenny approach with a single unmeasured confounder}\label{sec::scale-u}

Equipped with the theoretical foundations in Section \ref{sec::ovb}, we now turn to the problem of sensitivity analysis for the Baron--Kenny approach. To fix ideas, we start with the case with a scalar unmeasured confounder $u$. Section \ref{subsec::scale-u-bk} reviews the linear structural models for the Baron--Kenny approach. Section \ref{subsec::scale-u-parameter} applies the OVB formula in Section \ref{subsec::ovb-r} to represent the direct and indirect effects in terms of the \textit{natural} $R$ parameters, which correspond to the natural factorization of the joint distribution based on the direct acyclic graph in Figure \ref{fg::DAG-mediation-confounding}. Section \ref{subsec::scale-u-report} discusses how to report the sensitivity analysis results to answer (P1) and (P2) in Section \ref{sec::introduction-overview}. Section \ref{subsubsec::fb} discusses the strategies of calibrating sensitivity parameters, a fundamental challenge in sensitivity analysis.

\subsection{Baron--Kenny approach with multiple mediators}\label{subsec::scale-u-bk}

\begin{figure} 
\begin{subfigure}{0.50\textwidth}
$$
\begin{xy}
\xymatrix{
&  & c \ar[dll] \ar[d] \ar[drr] \\
a \ar[rr] \ar@/_1.5pc/[rrrr] & & m\ar[rr] & & y}
\end{xy}
$$
\caption{without unmeasured confounding}
\label{fg::DAG-mediation}
\end{subfigure}\hspace*{\fill}
\begin{subfigure}{0.50\textwidth}
$$
\begin{xy}
\xymatrix{
&  & u \ar[dll] \ar[d] \ar[drr] \\
a \ar[rr] \ar@/_1.5pc/[rrrr] & & m\ar[rr] & & y}
\end{xy}
$$
\caption{with unmeasured $u$, conditional on $c$}
\label{fg::DAG-mediation-confounding}
\end{subfigure}
\caption{Directed acyclic graphs for mediation analysis.} 
\end{figure}
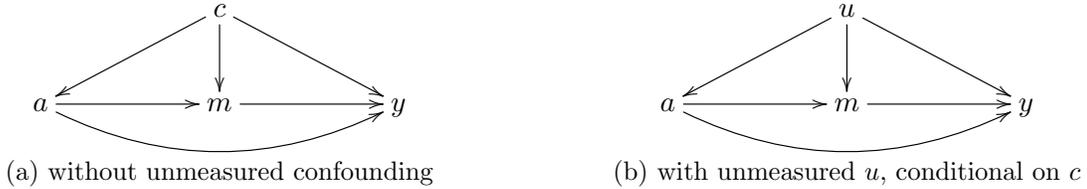

Without unmeasured confounding, Figure \ref{fg::DAG-mediation} illustrates the relationship among the exposure $a$, outcome $y$, mediator $m$, and observed covariates $c$ that include the constant $1$. We allow $m$ to be a vector, and all the discussion can be applied to the case with a scalar $m$. Mediation analysis quantifies the indirect effect of the exposure $a$ on the outcome $y$ that acts through the mediator $m$ and the direct effect that acts independently through other pathways. The Baron--Kenny approach to mediation analysis is based on the following two population least squares regressions
\begin{equation}
\label{eq::multi-med-long-short2}
\begin{alignedat}{3}
m&=\tilde{\beta}_1a+\tilde{\beta}_2c+m_{\perp a,c},\\
y&=\tilde{\theta}_1a+\tilde{\theta}_2^{\T} c+\tilde{\theta}_3^{\T} m+y_{\perp a,c,m}, 
\end{alignedat}
\end{equation}
and then interprets $\tilde{\theta}_1$ as the direct effect and $\tilde{\theta}_3^{\T} \tilde{\beta}_1$ as the indirect effect, respectively. If we further fit the short regression of the outcome on the exposure and covariates without the mediator
$$
y=\tilde{\gamma}_1a+\tilde{\gamma}_2^{\T} c+y_{\perp a,c},
$$
then Cochran's formula in \eqref{eq::ovb-cochran} implies the indirect effect also equals $\tilde{\theta}_3^{\T} \tilde{\beta}_1=\tilde{\gamma}_1-\tilde{\theta}_1$. Run sample least squares to obtain the estimators of these coefficients, with the ``hat'' notation and the superscript ``obs'' signifying the estimates. We can estimate the direct effect by $\hat{\theta}_{1}^{\text{obs}}$. We can estimate the indirect effect by either $(\hat{\theta}_{3}^{\text{obs}})^{\T} \hat{\beta}_{1}^{\text{obs}}$ or $\hat{\gamma}_{1}^{\text{obs}}-\hat{\theta}_{1}^{\text{obs}}$, with the former known as the {\it product method} and the latter known as the {\it difference method} for the indirect effect \citep{vanderweele2015explanation}.

With unmeasured confounding, Figure \ref{fg::DAG-mediation-confounding} illustrates how the causal relationship is confounded by $u$. We consider the population least squares regressions 
\begin{equation}
\label{eq::multi-med-long-short1}
\begin{alignedat}{3}
m&=\beta_1a+\beta_2c+\beta_3u+m_{\perp a,c,u},\\
y&=\theta_1a+\theta_2^{\T} c+\theta_3^{\T} m+\theta_4u+y_{\perp a,c,m,u},\\
\end{alignedat}
\end{equation}
and the true direct effect is $\theta_1$ and the true indirect effect is $\theta_3^{\T} \beta_1$ or $\gamma_1-\theta_1$ based on \eqref{eq::ols-long-short} and \eqref{eq::multi-med-long-short1}.

Our central goal is to develop a sensitivity analysis method for both the direct and indirect effects using the $R$ parameterization. The equivalence of the product method and the difference method leads to the equivalence of the corresponding sensitivity analysis. We focus on the product method in our discussion due to its simplicity. 

\subsection{Natural parameterization of direct and indirect effects}\label{subsec::scale-u-parameter}

\subsubsection{Direct effect}
We first discuss the direct effect. By Theorem \ref{thm::multi-ols}, we can express the true direct effect $\theta_1$ as
\begin{equation}
\label{eq::direct}
\theta_1=\tilde{\theta}_1-R_{y\sim u\mid a,m,c}\cdot \frac{R_{a\sim u\mid m,c}}{(1-R_{a\sim u\mid m,c}^2)^{1/2}}\cdot\left\{\frac{{\mathrm{var}}(y_{\perp a,m,c})}{{\mathrm{var}}(a_{\perp m,c})}\right\}^{1/2}. 
\end{equation}
Unfortunately, $R_{a\sim u\mid m,c}$ in \eqref{eq::direct} is not a parameter corresponding to the factorization of the joint distribution based on Figure \ref{fg::DAG-mediation-confounding}. It measures the partial correlation between $a$ and $u$ conditional on their common descendent $m$. \citet{vanderweele2010bias} and \citet{ding2016sharp} encounter similar difficulties in specifying the sensitivity parameters for mediation based on other approaches. They do not provide any solutions. Fortunately, we can express the term $R_{a\sim u\mid m,c}/(1-R_{a\sim u\mid m,c}^2)^{1/2}$ in \eqref{eq::direct} using other natural $R$ parameters corresponding to Figure \ref{fg::DAG-mediation-confounding}, as shown in Proposition \ref{prop::aucm-decom} below. 

\begin{proposition}\label{prop::aucm-decom}
Assume the covariance matrix of $(y,m,a,u)_{\perp c}$ is invertible for possibly vector $c$ but scalar $y,m,a,u$. We have
\begin{equation}\label{eq::multi-aucm-formula2}
\frac{R_{a\sim u\mid m,c}}{(1-R_{a\sim u\mid m,c}^2)^{1/2}}=\frac{R_{a\sim u\mid c}(1-\|R_{m\sim a\mid c}\|_2^2)^{1/2}}{(1-R_{a\sim u\mid c}^2)^{1/2}(1-\|R_{m\sim u\mid a,c}\|_2^2)^{1/2}}-\frac{R_{m\sim a\mid c}^{\T} \mathrm{cov}(m_{\perp c})^{-1/2}\mathrm{cov}(m_{\perp a,c})^{1/2}R_{m\sim u\mid a,c}}{(1-\|R_{m\sim a\mid c}\|_2^2)^{1/2}(1-\|R_{m\sim u\mid a,c}\|_2^2)^{1/2}}. 
\end{equation}
\end{proposition}

By Proposition \ref{prop::aucm-decom}, we can express $R_{a\sim u\mid m,c}/(1-R_{a\sim u\mid m,c}^2)^{1/2}$ using parameters $R_{a\sim u\mid c}$, $R_{m\sim u\mid a,c}$, and $R_{m\sim a\mid c}$. The parameter $R_{m\sim a\mid c}$ is estimable based on the observed data, whereas $R_{a\sim u\mid c}$ and $R_{m\sim u\mid a,c}$ depend on the unmeasured confounder $u$. Define 
\begin{equation}\label{eq::ru2}
R_u = \{R_{y\sim u\mid a,m,c},R_{m\sim u\mid a,c},R_{a\sim u\mid c}\}.
\end{equation}
These $R$ parameters in $R_u$ correspond to the natural factorization of the joint distribution based on Figure \ref{fg::DAG-mediation-confounding}, which facilitates the interpretability of our sensitivity analysis. We present the sharpness result in Corollary \ref{cor::direct} below.  
\begin{corollary}
\label{cor::direct}
Assume the covariance matrix of $(y,m,a,u)_{\perp c}$ is invertible for possibly vector $c$ but scalar $y,m,a,u$. Given the observables $\{y,m,a,c\}$, the parameters $\{R_{y\sim u\mid a,m,c},R_{m\sim u\mid a,c},R_{a\sim u\mid c}\}$ can take arbitrary values in $(-1,1)\times \mathbb{B}_{\mathrm{dim}(m)}\times (-1,1)$. 
\end{corollary}


\subsubsection{Indirect effect}
We next discuss the indirect effect based on the product method. By Theorem \ref{thm::multi-ols}, we can express $\beta_1$ and $\theta_3$ in the true indirect effect $\theta_3^{\T}\beta_1$ as
\begin{align}
\beta_1&=\tilde{\beta}_1-\frac{R_{a\sim u\mid c}}{(1-R_{a\sim u\mid c}^2)^{1/2}{\mathrm{var}}(a_{\perp c})^{1/2}}\cdot{\mathrm{cov}}(m_{\perp a,c})^{1/2}R_{m\sim u\mid a,c},\label{eq::multi-indirect-product1}\\
{\theta}_3&=\tilde{\theta}_3-\frac{R_{y\sim u\mid a,m,c}{\mathrm{var}}(y_{\perp a,c,m})^{1/2}}{(1-\|R_{m\sim u\mid a,c}\|_2^2)^{1/2}}\cdot{\mathrm{cov}}(m_{\perp a,c})^{-1/2}R_{m\sim u\mid a,c}.\label{eq::multi-indirect-product2}
\end{align}
The above formulas \eqref{eq::multi-indirect-product1} and \eqref{eq::multi-indirect-product2} are parameterized by natural parameters in $R_u$. The sharpness for indirect effect is also guaranteed by Corollary \ref{cor::direct} above. 

\subsubsection{Special case: randomized exposure}
As a special case, if the exposure is randomized under the assumption $a\ind u\mid c$, then the arrow from $u$ to $a$ disappears in Figure \ref{fg::DAG-mediation-confounding} and $R_{a\sim u\mid c}=0$. This simplifies our sensitivity analysis. For the direct effect, we can apply \eqref{eq::direct} and \eqref{eq::multi-aucm-formula2} to obtain 
\begin{equation}
\label{eq::random-experiment-multi-direct}
\theta_1=\tilde{\theta}_1+\frac{R_{y\sim u\mid a,m,c}{\mathrm{var}}(y_{\perp a,m,c})^{1/2}}{{\mathrm{var}}(a_{\perp m,c})^{1/2}}\cdot\frac{{R}_{m\sim a\mid c}^{\T}{\mathrm{cov}}(m_{\perp c})^{-1/2}{\mathrm{cov}}(m_{\perp a,c})^{1/2}R_{m\sim u\mid a,c}}{(1-\|{R}_{m\sim a\mid c}\|_2^2)^{1/2}(1-\|R_{m\sim u\mid a,c}\|_2^2)^{1/2}},
\end{equation}
with sensitivity parameters $\{R_{y\sim u\mid a,m,c},R_{m\sim u\mid a,c}\}$. For the indirect effect based on the product method, we have $\beta_1=\tilde{\beta}_1$ and $\theta_3$ in \eqref{eq::multi-indirect-product2} with sensitivity parameters $\{R_{y\sim u\mid a,m,c},R_{m\sim u\mid a,c}\}$.

\citet{imai2010identification} propose a sensitivity analysis method for mediation with a randomized exposure. Under linear structural equation models, their method is similar to ours, although they use a different parametrization. They impose homoskedasticity of the linear models, and their method is thus not as general as ours.

\subsection{Report the sensitivity analysis results}\label{subsec::scale-u-report}

With prespecified $R_u$, we can use the natural parameterization in Section \ref{subsec::scale-u-parameter} to compute the point estimates, standard errors, and $t$ statistics for the direct and indirect effects. Varying the sensitivity parameters, we can report a series of point estimates, standard errors, and $t$ statistics. Therefore, a straightforward way to report sensitivity analysis is to display a table or several tables of point estimates and confidence intervals corresponding to different combinations of the sensitivity parameters. This is a common strategy in sensitivity analysis in causal inference, e.g., \citet{Rosenbaum::1983JRSSB}. However, this approach can be cumbersome to implement in practice because it requires specifying all the sensitivity parameters and displaying a large number of tables or figures. Below, we focus on two simpler ways to report the sensitivity analysis results.

\subsubsection{Report the worst cases under upper bound of the $R$ parameters}\label{subsubsec::bound-r}
We can report the worst point estimate or $t$ statistic given the upper bound of $R$ parameters. To simplify the presentation, we assume the observed point estimate is positive throughout this section. With specified $0\leq \rho_y,\rho_m,\rho_a<1$, we compute the minimum point estimate and minimum $t$ statistic under the constraint 
\begin{equation}\label{eq::rho-yma}
R_{y\sim u\mid a,m,c}^2\leq \rho_y,\quad \|R_{m\sim u\mid a,c}\|_2^2\leq \rho_m,\quad R_{a\sim u\mid c}^2\leq \rho_a.
\end{equation}
In general, the point estimate or $t$ statistic may not be a monotone function of all of $R_{y\sim u\mid a,m,c}^2$, $\|R_{m\sim u\mid a,c}\|_2^2$, and $R_{a\sim u\mid c}^2$. Thus, the minimum $t$ statistic is not necessarily attained at $R_{y\sim u\mid a,m,c}^2=\rho_y$, $\|R_{m\sim u\mid a,c}\|_2^2=\rho_m$, and $R_{a\sim u\mid c}^2=\rho_a$. While explicit results can be derived by analyzing the point estimate and $t$ statistic for direct and indirect effects on a case-by-case basis, we focus below on a unified numerical approach to solving the optimization problem.

Based on the natural parameterization in Section \ref{subsec::scale-u-parameter}, we can re-express the direct and indirect effects as 
\begin{equation}\label{eq::direct-indirect-scalar-u}
\begin{split}
\theta_1&=\tilde{\theta}_1+T_1\phi_1+T_2^{\T}\phi_2,\\
\theta_3^{\T}\beta_1&=(\tilde{\theta}_3+T_3\phi_2)^{\T}(\tilde{\beta}_1+T_4\phi_3),
\end{split}
\end{equation}
where
\begin{equation}\label{eq::t1234}
\begin{split}
T_1&=-\frac{(1-\|R_{m\sim a\mid c}\|_2^2)^{1/2}\mathrm{var}(y_{\perp a,m,c})^{1/2}}{\mathrm{var}(a_{\perp m,c})^{1/2}},\\
T_2&=\frac{\mathrm{var}(y_{\perp a,m,c})^{1/2}\cdot \mathrm{cov}(m_{\perp a,c})^{1/2}\mathrm{cov}(m_{\perp c})^{-1/2}R_{m\sim a\mid c}}{(1-\|R_{m\sim a\mid c}\|_2^2)^{1/2}\mathrm{var}(a_{\perp m,c})^{1/2}},\\
T_3&=-\frac{\mathrm{cov}(m_{\perp a,c})^{1/2}}{\mathrm{var}(a_{\perp c})^{1/2}},\\
T_4&=-\mathrm{var}(y_{\perp a,c,m})^{1/2}\mathrm{cov}(m_{\perp a,c})^{-1/2}
\end{split}
\end{equation}
depend only on moments of observables, and 
\begin{equation}\label{eq::phi123}
\begin{split}
\phi_1&=\frac{R_{y\sim u\mid a,m,c}R_{a\sim u\mid c}}{(1-R_{a\sim u\mid c}^2)^{1/2}(1-\|R_{m\sim u\mid a,c}\|_2^2)^{1/2}},\\
\phi_2&=\frac{R_{y\sim u\mid a,m,c}\cdot R_{m\sim u\mid a,c}}{(1-\|R_{m\sim u\mid a,c}\|_2^2)^{1/2}},\\
\phi_3&=\frac{R_{a\sim u\mid c}\cdot R_{m\sim u\mid a,c}}{(1-R_{a\sim u\mid c}^2)^{1/2}}
\end{split}
\end{equation}
depend only on sensitivity parameters $R_u$. Proposition \ref{prop::scalar-u-range} below gives the range of $(\phi_1,\phi_2)$ for the direct effect and the range of $(\phi_2,\phi_3)$ for the indirect effect. 

\begin{proposition}\label{prop::scalar-u-range}
Under the constraint in \eqref{eq::rho-yma}, the range of $(\phi_1,\phi_2)$ is 
$$\left\{(x_1,x_2)\in \mathbb{R}\times \mathbb{R}^{\mathrm{dim}(m)}:x_1^2\leq \frac{\rho_a(\rho_y+\|x_2\|_2^2)}{1-\rho_a},\quad \|x_2\|_2^2\leq \frac{\rho_m\rho_y}{1-\rho_m}\right\},$$
and the range of $(\phi_2,\phi_3)$ is
$$\left\{(x_2,x_3)\in \mathbb{R}^{\mathrm{dim}(m)}\times \mathbb{R}^{\mathrm{dim}(m)}:\|x_2\|_2^2\leq \frac{\rho_m\rho_y}{1-\rho_m},\quad \|x_3\|_2^2\leq \frac{\rho_a\rho_m}{1-\rho_a}, \quad \text{there exists }t\in\mathbb{R} \text{ such that } x_2=tx_3\right\}.$$
\end{proposition}

Proposition \ref{prop::scalar-u-range} is the theoretical basis for computing the worst-case point estimates and $t$-statistics. For the point estimates, with estimated $T_j$ ($j=1,\ldots, 4$), the plug-in estimates for direct and indirect effects in  \eqref{eq::direct-indirect-scalar-u} are functions of $(\phi_1,\phi_2)$ and $(\phi_2,\phi_3)$, respectively. Given the $\phi$ terms in \eqref{eq::phi123} that depend only on the sensitivity parameters, the plug-in estimates for direct and indirect effects depend only on moments of observables, so we can use the nonparametric bootstrap to compute their standard errors and the corresponding $t$ statistics. Therefore, the $t$ statistics for direct and indirect effects are functions of $(\phi_1,\phi_2)$ and $(\phi_2,\phi_3)$, respectively. We can obtain the minimum point estimates or $t$ statistics for direct and indirect effects by solving optimization problems over $(\phi_1,\phi_2)$ and $(\phi_2,\phi_3)$, respectively, subject to the constraints in Proposition \ref{prop::scalar-u-range}. In both cases, the optimization problem involves variables of $(\mathrm{dim}(m)+1)$ degrees of freedom. 

In our R package, we implement the existing optimization solver NLopt to compute the worst-case point estimates and $t$-statistics. Our numerical tests show that NLopt outperforms other existing solvers, such as BARON, COBYLA, and Gurobi, in finding the global optimum for our specific problems. However, due to the complexity and non-convexity of the objective functions, NLopt may not find the global optimum. We recommend trying multiple initial values to improve the performance of NLopt.

\subsubsection{Report the robustness value for mediation analysis}\label{subsubsec::rv-scalar-u}

Researchers often ask the following questions: 
\begin{enumerate}
\item[(Q1)] What is the lower bound of the $R$ parameters such that the direct or indirect effect point estimate can be reduced to $0$?

\item[(Q2)] What is the lower bound of $R$ parameters such that the confidence interval for the direct or indirect effect can be altered to cover $0$?
\end{enumerate}
These questions motivate the development of the E-value for the risk ratio \citep{ding2016sensitivity, ding2016sharp, vanderweele2017sensitivity} and the ``robustness value'' in linear regression \citep{cinelli2020making}. 

To answer question (Q1), we need to find the minimal $\rho\geq 0$ such that 
\begin{equation}\label{eq::upper-bound-rho}
\max\{R_{y\sim u\mid a,m,c}^2,\|R_{m\sim u\mid a,c}\|_2^2,R_{a\sim u\mid c}^2\}\leq \rho
\end{equation} 
and the worst $t$ statistic is less than $0$. Analogously, to answer question (Q2), we need to find the minimal $\rho\geq 0$ such that \eqref{eq::upper-bound-rho} holds and the worst $t$ statistic is less than the $1-\alpha/2$ upper quantile of a standard normal random variable. For simplicity of description, we will focus on the threshold $1.96$ for $\alpha=0.05$. We have discussed how to obtain the minimum $t$ statistic under the constraint \eqref{eq::upper-bound-rho} in Section \ref{subsubsec::bound-r}. Here, we conduct the grid search to gradually increase $\rho$, and find the critical values of $\rho$ for the confidence interval and point estimate such that the $t$ statistic hits $1.96$ and $0$, respectively. We call the critical values the ``robustness values for mediation'' or simply the ``robustness values'' following \citet{cinelli2020making}. 

However, reporting the robustness values can be an overly conservative strategy when the $R$ parameters in \eqref{eq::upper-bound-rho} are not comparable in magnitude. If this kind of conservativeness is not desired in specific applications, then we recommend going back to the strategy in Section \ref{subsubsec::bound-r} or even going back to the strategy of reporting a series of estimates at the cost of specifying the sensitivity parameters.  

\subsection{Calibrate the sensitivity parameters}\label{subsubsec::fb}

As in all sensitivity analyses, it is fundamentally challenging to specify the threshold of the robustness values for mediation analysis without additional information. The threshold is often problem-specific. If a study has rich observed covariates, then researchers may not worry too much about OVB. On the contrary, if a study misses important covariates that may confound the exposure, mediator, and outcome, then researchers may specify a high threshold for the strength of $R$ parameters. Although we cannot solve this problem mathematically, we can report some observed sample $R^2$'s as the reference values. For instance, we can report the sample $R^2$'s, $\{\hat{R}_{y\sim a\mid m}^2, \hat{R}_{y\sim m\mid a}^2, \hat{R}_{m\sim a}^2\}$ when there is no observed covariate (See Section \ref{subsec::illustrations-single} below), and $\{\max_{j}\hat{R}_{y\sim c_j\mid a,m,c_{-j}}^2, \max_{j}\hat{R}_{m\sim c_j\mid a,c_{-j}}^2\}$ when there are observed covariates $c_j$'s (See Section \ref{subsec::illustrations-multiple} below).

To address the fundamental difficulty of specifying the range of the sensitivity parameters, \citet{cinelli2020making} propose an alternative strategy called {\it formal benchmarking}. We can also extend their strategy to the Baron--Kenny approach. Due to the space limit, we relegate the discussion to Section \ref{sec::formal-benchmarking} of the Supplementary Material and implement it in our R package.

\section{Sensitivity analysis for the Baron--Kenny approach with multiple unmeasured confounders}\label{sec::vector-u}

Previously, \citet{hosman2010sensitivity} and \citet{cinelli2020omitted} have shown that assuming a scalar unmeasured confounder $u$ is a conservative strategy in linear regression, even when the true $u$ is a vector. The intuition behind this is that we can always redefine a scalar $u$ as the linear combination that enters the outcome regression model. However, this intuition does not generalize to all sensitivity analysis problems. We have discussed the case with vector $y$ and vector $u$ in Section \ref{subsec::ovb-vector-u} before. This issue also arises in mediation analysis.  

We will show that when the exposure is nonrandomized and there are multiple mediators, assuming a scalar versus a vector unmeasured confounder $u$ can lead to different sensitivity analysis results for the indirect effect. We will also demonstrate that for all other cases, we can simply assume a scalar $u$. 

In this section, we allow $u$ to be a vector in \eqref{eq::multi-med-long-short1}, and consider the following long population least squares regression with $u$
\begin{equation}
\label{eq::multi-med-long-short1-vu}
\begin{alignedat}{3}
m&=\beta_1a+\beta_2c+\beta_3u+m_{\perp a,c,u},\\
y&=\theta_1a+\theta_2^{\T} c+\theta_3^{\T} m+\theta_4^{\T} u+y_{\perp a,c,m,u},\\
\end{alignedat}
\end{equation}
and short population least squares regression without $u$ in \eqref{eq::multi-med-long-short2}. We apply Theorem \ref{thm::most-general-ovb} to extend the results in Section \ref{sec::scale-u}. Section \ref{subsec::vector-u-parameter} extends Section \ref{subsec::scale-u-parameter} to parameterize direct and indirect effects using natural $R$ parameters $R_{y\sim u\mid a,m,c}$, $R_{m\sim u\mid a,c}$, $R_{a\sim u\mid c}$, and an additional nuisance parameter $\mathrm{cov}(u_{\perp c})$. Based on the natural parameterization, Section \ref{subsec::vector-u-report} extends Section \ref{subsec::scale-u-report} to discuss the strategies of reporting sensitivity analysis results. Section \ref{subsec::scalar-u-vs-vector-u} discusses when we cannot simply assume there is a single unmeasured confounder, based on results in Section \ref{subsec::scale-u-report} and Section \ref{subsec::vector-u-report}. 

\subsection{Natural parameterization of direct and indirect effects}\label{subsec::vector-u-parameter}


\subsubsection{Direct effect}

We first discuss the direct effect. By Theorem \ref{thm::most-general-ovb}, we can express the true direct effect $\theta_1$ as 
\begin{equation}\label{eq::direct-vu}
\theta_1=\tilde{\theta}_1-\mathrm{var}(y_{\perp a,m,c})^{1/2}R_{y\sim u\mid a,m,c}\mathrm{cov}(u_{\perp a,m,c})^{-1/2}\mathrm{cov}(u_{\perp c,m})^{1/2}R_{a\sim u\mid c,m}^{\T} \mathrm{var}(a_{\perp c,m})^{-1/2},
\end{equation}
where $\mathrm{cov}(u_{\perp a,m,c})=\mathrm{cov}(u_{\perp c,m})-\mathrm{cov}(u_{\perp c,m})^{1/2}R_{a\sim u\mid m,c}^{\T} R_{a\sim u\mid m,c}\mathrm{cov}(u_{\perp c,m})^{1/2}$. Similarly, an unnatural parameter $R_{a\sim u\mid m,c}$ appears in both \eqref{eq::direct-vu} and the relationship between $\mathrm{cov}(u_{\perp a,m,c})$ and $\mathrm{cov}(u_{\perp m,c})$. As a solution, we express the term $R_{a\sim u\mid c,m}\mathrm{cov}(u_{\perp c,m})^{1/2}$ using parameters $\{R_{a\sim u\mid c},R_{m\sim u\mid a,c},\mathrm{cov}(u_{\perp c}), \mathrm{cov}(u_{\perp a,c})\}$ in Proposition \ref{prop::aucm-vu} below. 

\begin{proposition}\label{prop::aucm-vu}
Assume the covariance matrix of $(y,m,a,u)_{\perp c}$ is invertible for possibly vector $c$ but scalar $y,m,a,u$. We have
\begin{equation}\label{eq::aucm-vu1}
\begin{split}
R_{a\sim u\mid c,m}\mathrm{cov}(u_{\perp c,m})^{1/2}=&(1-\|R_{m\sim a\mid c}\|_2^2)^{1/2}R_{a\sim u\mid c}\mathrm{cov}(u_{\perp c})^{1/2}\\
&-\frac{R_{m\sim a\mid c}^{\T} \mathrm{cov}(m_{\perp c})^{-1/2}\mathrm{cov}(m_{\perp a,c})^{1/2}R_{m\sim u\mid a,c}\mathrm{cov}(u_{\perp a,c})^{1/2}}{(1-\|R_{m\sim a\mid c}\|_2^2)^{1/2}}. 
\end{split}
\end{equation}
\end{proposition}

Moreover, we can express the additional terms $\mathrm{cov}(u_{\perp a,c})$ and $\mathrm{cov}(u_{\perp a,m,c})$ in terms of $\{R_{a\sim u\mid c}$, $R_{m\sim u\mid a,c}$, $\mathrm{cov}(u_{\perp c})\}$ below: 
\begin{align}
\mathrm{cov}(u_{\perp a,c})&=\mathrm{cov}(u_{\perp c})-\mathrm{cov}(u_{\perp c})^{1/2}R_{a\sim u\mid c}^{\T} R_{a\sim u\mid c}\mathrm{cov}(u_{\perp c})^{1/2},\label{eq::cov-uac}\\
\mathrm{cov}(u_{\perp a,m,c})&=\mathrm{cov}(u_{\perp a,c})-\mathrm{cov}(u_{\perp a,c})^{1/2}R_{m\sim u\mid a,c}^{\T} R_{m\sim u\mid a,c}\mathrm{cov}(u_{\perp a,c})^{1/2}. \label{eq::cov-uamc}
\end{align}

Combining \eqref{eq::direct-vu}--\eqref{eq::cov-uamc}, we express the true direct effect $\theta_1$ in terms of sensitivity parameters $R_{y\sim u\mid a,m,c}$, $R_{m\sim u\mid a,c}$, $R_{a\sim u\mid c}$, $\mathrm{cov}(u_{\perp c})$ and moments of observables. We renew the notation $R_u = \{R_{y\sim u\mid a,m,c}$, $R_{m\sim u\mid a,c}$, $R_{a\sim u\mid c}\}$ with vector $u$. Again, these $R$ parameters in $R_u$ correspond to the natural factorization of the joint distribution based on Figure \ref{fg::DAG-mediation-confounding}. We present the sharpness result in Corollary \ref{cor::vu-direct} below.  
\begin{corollary}
\label{cor::vu-direct}
Assume the covariance matrix of $(y,m,a,u)_{\perp c}$ is invertible for possibly vector $c$ but scalar $y,m,a,u$. Given the observables $\{y,m,a,c\}$, the parameters $\{R_{y\sim u\mid a,m,c},R_{m\sim u\mid a,c},R_{a\sim u\mid c}$, $\mathrm{cov}(u_{\perp c})\}$ can take arbitrary values in $(-1,1)\times \mathbb{B}_{\mathrm{dim}(m)}\times (-1,1)\times \mathbb{S}_{\mathrm{dim}(u)}^{++}$. 
\end{corollary}

\subsubsection{Indirect effect}
We next discuss the indirect effect based on the product method. By Theorem \ref{thm::most-general-ovb}, we can express $\beta_1$ and $\theta_3$ in the true indirect effect $\theta_3^{\T}\beta_1$ as
\begin{align}
\beta_1&=\tilde{\beta}_1-\mathrm{cov}(m_{\perp a,c})^{1/2}R_{m\sim u\mid a,c}\mathrm{cov}(u_{\perp a,c})^{-1/2}\mathrm{cov}(u_{\perp c})^{1/2}R_{a\sim u\mid c}^{\T} \mathrm{var}(a_{\perp c})^{-1/2},\label{eq::vu-multi-indirect-product1}\\
\theta_3&=\tilde{\theta}_3-\mathrm{var}(y_{\perp a,m,c})^{1/2}R_{y\sim u\mid a,m,c}\mathrm{cov}(u_{\perp a,m,c})^{-1/2}\mathrm{cov}(u_{\perp a,c})^{1/2}R_{m\sim u\mid a,c}^{\T} \mathrm{cov}(m_{\perp a,c})^{-1/2}.\label{eq::vu-multi-indirect-product2}
\end{align}
The above formulas \eqref{eq::vu-multi-indirect-product1} and \eqref{eq::vu-multi-indirect-product2} involve parameters $\{R_u,\mathrm{cov}(u_{\perp c}),\mathrm{cov}(u_{\perp a,c}),\mathrm{cov}(u_{\perp a,m,c})\}$. Moreover, the additional terms $\mathrm{cov}(u_{\perp a,c})$ and $\mathrm{cov}(u_{\perp a,m,c})$ in \eqref{eq::cov-uac}--\eqref{eq::cov-uamc} depend only on $\{R_{a\sim u\mid c}$, $R_{m\sim u\mid a,c}$, $\mathrm{cov}(u_{\perp c})\}$. Combining \eqref{eq::cov-uac}--\eqref{eq::cov-uamc} and \eqref{eq::vu-multi-indirect-product1}--\eqref{eq::vu-multi-indirect-product2}, we express the true indirect effect $\theta_3^{\T}\beta_1$ in terms of sensitivity parameters $\{R_u,\mathrm{cov}(u_{\perp c})\}$ and moments of observables. The sharpness for indirect effect is also guaranteed by Corollary \ref{cor::vu-direct} above. 

\subsection{Report the sensitivity analysis results with vector confounder}\label{subsec::vector-u-report}

This section extends Section \ref{subsec::scale-u-report} to the case with vector $u$. Similar to the discussion in Section \ref{subsec::ovb-vector-u}, it could be difficult to specify the sensitivity parameters $\{R_u,\mathrm{cov}(u_{\perp c})\}$, especially when the dimension of $u$ is undetermined. Therefore, reporting a series of estimates is not a practical method with vector $u$. Fortunately, we can derive similar results as in Section \ref{subsec::scale-u-report} to report the worst point estimate and $t$ statistic for the direct or indirect effect, and to report their robustness values.

\subsubsection{Report the worst case under upper bound of the $R$ parameters}\label{subsubsec::bound-r-vu}

In parallel with Section \ref{subsubsec::bound-r}, we now discuss the worst point estimate or $t$ statistic given the upper bound of $R$ parameters. Assume the observed point estimate is positive throughout this section. With specified $0\leq \rho_y,\rho_m,\rho_a<1$, we compute the minimum point estimate and minimum $t$ statistic under the constraint 
\begin{equation}\label{eq::rho-yma-2}
R_{y\sim u\mid a,m,c}^2\leq \rho_y,\quad \|R_{m\sim u\mid a,c}\|_2^2\leq \rho_m,\quad R_{a\sim u\mid c}^2\leq \rho_a.
\end{equation}
Similar to the discussion in Section \ref{subsec::ovb-vector-u}, the worst point estimate or $t$ statistic under the upper bound of $R$ parameters does not depend on the choice of $\mathrm{cov}(u_{\perp c})$; see Proposition \ref{prop::assume-cov-u-direct} and Proposition \ref{prop::assume-cov-u-indirect} in the Supplementary Material for the precise statement. For mathematical convenience, we assume $\mathrm{cov}(u_{\perp c})=(I-R_{a\sim u\mid c}^{\T} R_{a\sim u\mid c})^{-1}$, which implies $\mathrm{cov}(u_{\perp a,c})=I$ and $\mathrm{cov}(u_{\perp a,c,m})=I-R_{m\sim u\mid a,c}^{\T} R_{m\sim u\mid a,c}$. 

Based on the natural parameterization in Section \ref{subsec::vector-u-parameter}, we can re-express the direct and indirect effects as 
\begin{equation}\label{eq::direct-indirect-vector-u}
\begin{split}
\theta_1&=\tilde{\theta}_1+T_1\phi_1^*+T_2^{\T}\phi_2^*,\\
\theta_3^{\T}\beta_1&=(\tilde{\theta}_3+T_3\phi_2^*)^{\T}(\tilde{\beta}_1+T_4\phi_3^*),
\end{split}
\end{equation}
where $T_1,T_2,T_3,T_4$ in \eqref{eq::t1234} depend only on moments of observables and 
\begin{equation*}
\begin{split}
\phi_1^*&=\frac{R_{a\sim u\mid c}(I-R_{m\sim u\mid a,c}^{\T} R_{m\sim u\mid a,c})^{-1/2}R_{y\sim u\mid a,m,c}^{\T}}{(1-\|R_{a\sim u\mid c}\|_2^2)^{1/2}},\\
\phi_2^*&=R_{m\sim u\mid a,c}(I-R_{m\sim u\mid a,c}^{\T} R_{m\sim u\mid a,c})^{-1/2}R_{y\sim u\mid a,m,c}^{\T},\\
\phi_3^*&=\frac{R_{m\sim u\mid a,c}R_{a\sim u\mid c}^{\T}}{(1-\|R_{a\sim u\mid c}\|_2^2)^{1/2}}
\end{split}
\end{equation*}
depend only on sensitivity parameters $R_u$. Proposition \ref{prop::vector-u-range} below gives the range of $(\phi_1^*,\phi_2^*)$ for the direct effect, and the range of $(\phi_2^*,\phi_3^*)$ for the indirect effect. 

\begin{proposition}\label{prop::vector-u-range}
Under the constraint in \eqref{eq::rho-yma-2}, the range of $(\phi_1^*,\phi_2^*)$ is 
$$\left\{(x_1,x_2)\in \mathbb{R}\times \mathbb{R}^{\mathrm{dim}(m)}:x_1^2\leq \frac{\rho_a(\rho_y+\|x_2\|_2^2)}{1-\rho_a},\quad \|x_2\|_2^2\leq \frac{\rho_m\rho_y}{1-\rho_m}\right\},$$
and the range of $(\phi_2^*,\phi_3^*)$ is
$$\left\{(x_2,x_3)\in \mathbb{R}^{\mathrm{dim}(m)}\times \mathbb{R}^{\mathrm{dim}(m)}:\|x_2\|_2^2\leq \frac{\rho_m\rho_y}{1-\rho_m},\quad \|x_3\|_2^2\leq \frac{\rho_a\rho_m}{1-\rho_a}\right\}.$$
\end{proposition}

Proposition \ref{prop::vector-u-range} is the theoretical basis for computing the worst-case point estimates and $t$-statistics. Similar to the discussion in Section \ref{subsubsec::bound-r}, we can obtain the minimum point estimates or $t$ statistics for direct and indirect effects in \eqref{eq::direct-indirect-vector-u} by solving optimization problems over $(\phi_1^*,\phi_2^*)$ and $(\phi_2^*,\phi_3^*)$, subject to the constraints in Proposition \ref{prop::vector-u-range}. For the direct effect, the optimization problem involves variables of $(\mathrm{dim}(m)+1)$ degrees of freedom; and for the indirect effect, the optimization problem involves variables of $2\times \mathrm{dim}(m)$ degrees of freedom.

\subsubsection{Report the robustness value for mediation analysis}\label{subsubsec::rv-vu}
Based on the sensitivity parameters, we extend \eqref{eq::upper-bound-rho} to 
\begin{equation}\label{eq::upper-bound-rho-r-vu}
\max\{\|R_{y\sim u\mid a,m,c}\|_2^2,\|R_{m\sim u\mid a,c}\|_2^2,\|R_{a\sim u\mid c}\|_2^2\}\leq \rho, 
\end{equation}
with vector $u$. We define the robustness value for mediation as the minimum $\rho$ in \eqref{eq::upper-bound-rho-r-vu} such that the point estimate can be reduced to $0$ or the confidence interval can be altered to cover $0$. Similar to the discussion in Section \ref{subsubsec::rv-scalar-u}, we obtain the minimum $t$ statistic under the constraint \eqref{eq::upper-bound-rho-r-vu} in Section \ref{subsubsec::bound-r-vu}. Then, we conduct the grid search to gradually increase $\rho$ and find the critical values of $\rho$ for the confidence interval and point estimate such that the $t$ statistic hits $1.96$ and $0$, respectively.

\subsection{Scalar or vector confounder: when does it matter?}\label{subsec::scalar-u-vs-vector-u}

The following Proposition \ref{prop::range-connection} is a direct corollary of Proposition \ref{prop::scalar-u-range} and Proposition \ref{prop::vector-u-range}. 

\begin{proposition}\label{prop::range-connection}
(i) For the direct effect, the range of $(\phi_1,\phi_2)$ under the constraint \eqref{eq::rho-yma} is exactly the same as the range of $(\phi_1^*,\phi_2^*)$ under the constraint \eqref{eq::rho-yma-2}. (ii) For the indirect effect, when $\mathrm{dim}(m)=1$ or the exposure $a$ is randomized conditional on the observed covariates $c$, the range of $(\phi_2,\phi_3)$ under the constraint \eqref{eq::rho-yma} is exactly the same as the range of $(\phi_2^*,\phi_3^*)$ under the constraint \eqref{eq::rho-yma-2}. 
\end{proposition}

Proposition \ref{prop::range-connection}(i) states that assuming scalar or vector $u$ does not change the sensitivity analysis results for the direct effect. Proposition \ref{prop::range-connection}(ii) states that assuming scalar or vector $u$ does not change the sensitivity analysis results for indirect effect if the mediator $m$ is a scalar or the exposure $a$ is randomized conditional on observed covariates $c$. However, with nonrandomized exposure $a$ and vector mediator $m$, the sensitivity analysis results for indirect effect depends on whether $u$ is a scalar or vector, because the range of $(\phi_2,\phi_3)$ under the constraint \eqref{eq::rho-yma} is not the same as the range of $(\phi_2^*,\phi_3^*)$ under the constraint \eqref{eq::rho-yma-2}. In Section \ref{sec::computing-rv} of the Supplementary Material, we present a simulation study on how much we miss if we assume $u$ is a scalar to compute robustness values for the indirect effect with a vector mediator.  

In the following two studies of Section \ref{sec::illustrations}, we have that either $\mathrm{dim}(m)=1$ or the exposure is randomized, so it is not restrictive to assume $u$ is a scalar for the purpose of reporting sensitivity bounds under the upper bound of $R$ parameters or reporting robustness values. 

\section{Illustrations}\label{sec::illustrations}

\subsection{Mediation analysis with a single mediator}\label{subsec::illustrations-single}

\citet{jose2013doing} conducts a mediation analysis to assess the extent to which positive life events affect happiness through the level of gratitude. The data came from the first of five times of positive psychology measurement of 364 respondents to the International Wellbeing Study separated by three months each. The positive life events $a$, gratitude $m$, and happiness $y$ are measured by the sum of scores from some designed questions \citep{mccullough2002grateful, lyubomirsky1999measure}. In this study, $y,a,m$ are continuous variables, and no covariates are collected. By the Baron--Kenny approach, the estimate for the direct effect is $0.269$, with the $95\%$ confidence interval $[0.132,0.406]$, and the estimate for the indirect effect is $0.215$, with the $95\%$ confidence interval $[0.131,0.299]$. So, both are statistically significant. Because this is a non-randomized study, we reanalyze the data and assess the sensitivity of the conclusion with respect to unmeasured confounding.

The sensitivity analysis results are based on possibly multiple unmeasured confounders. By Proposition \ref{prop::range-connection}, it is not restrictive to assume $u$ is a scalar to compute the robustness values, since there is a single mediator. Figure \ref{fig::ex1} shows the minimum $t$ statistics under the condition $\max \{R_{y\sim u\mid a,m,c}^2,R_{m\sim u\mid a,c}^2,R_{a\sim u\mid c}^2\}\leq \rho$. Table \ref{tab::ex1} reports some statistics of the direct and indirect effects and their robustness values for the point estimates and $95\%$ confidence intervals. To interpret them, the robustness value for the estimate refers to the minimal $\rho$ such that the point estimate, or, equivalently, the $t$ statistic could be reduced to 0, under \eqref{eq::upper-bound-rho}; the robustness value for the confidence interval refers to the minimal $\rho$ such that the $t$ statistic could be reduced to $1.96$, under \eqref{eq::upper-bound-rho}. In this study, we need larger confounding to alter the conclusion about the indirect effect than about the direct effect.

\begin{figure}[h]
\begin{center}
\includegraphics[width=4in]{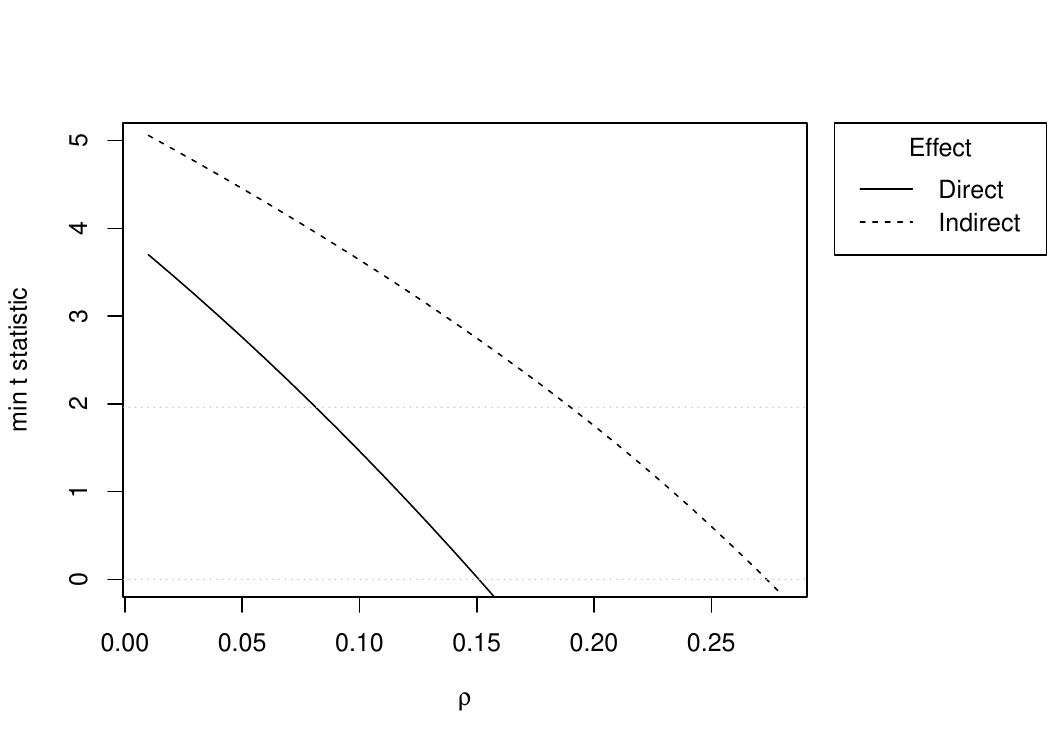}
\caption{The minimum $t$ statistics of the direct and indirect effects, under the restriction $\max \{R_{y\sim u\mid a,m,c}^2,R_{m\sim u\mid a,c}^2,R_{a\sim u\mid c}^2\}\leq \rho$. The grey reference lines indicate the $t=1.96$ and $t=0$ thresholds.}
\label{fig::ex1}
\end{center}
\end{figure}

\begin{table}[h]
\caption{Sensitivity analysis for \citet{jose2013doing}'s mediation analysis. Some reference sample $R^2$ values are $\hat{R}_{y\sim a\mid m}^2 =0.05$, $\hat{R}_{y\sim m\mid a}^2 =0.25$ and $\hat{R}_{m\sim a}^2=0.09$. The R.V. is short for the ``robustness value''. }\label{tab::ex1}
\begin{center}
\begin{tabular}{lccccc}
\hline 
& Est. & Std. Err. & $t$-value & R.V. for Est. & R.V. for $95\%$ C.I. \\
\hline 
Direct effect & $0.269$ & $0.069$ & $3.923$ & $0.15$ & $0.08$ \\
Indirect effect & $0.215$ & $0.041$ & $5.205$ & $0.27$ & $0.19$ \\
\hline 
\end{tabular}
\end{center}
\end{table}

\subsection{Mediation analysis with multiple mediators}\label{subsec::illustrations-multiple}

\citet{chong2016iron} conducts a randomized study on 219 students of a rural secondary school in the Cajamarca district of Peru during the 2009 school year. They are interested in whether or not iron deficiency could contribute to the intergenerational persistence of poverty by affecting the school performance and aspirations of anemic students. They encourage students to visit the clinic daily to take an iron pill. They randomly assign students to be exposed multiple times to one of the following three videos: in the first ``soccer'' group, a popular soccer player is encouraging iron supplements to maximize energy; in the second ``physician'' group, a doctor is encouraging iron supplements for overall health; in the third ``placebo'' group, the video does not mention iron at all. They measure various variables of the students after randomization, such as the anemia status at the follow-up survey, the cognitive function measured by the cognitive Wii game, and the school performance measured by average grade in the third and fourth quarters of 2009. 

We reanalyze their data. We are interested in the effect of the video assignment $a$ on the school performance $y$, with the anemia status $m_1$ and cognitive function $m_2$ as mediators. We focus on two levels of the exposure $a$: the ``physician'' group and the ``placebo'' group. The covariates $c$ include the gender, class level, anemia status at baseline survey, household monthly income, electricity in home, mother's years of education. By the Baron--Kenny approach, the estimate for the direct effect is $0.221$, with the $95\%$ confidence interval $[-0.167,0.609]$, and the estimate for the indirect effect is $0.138$, with the $95\%$ confidence interval $[-0.017,0.293]$. Neither is significant.

For illustration purposes, we only assess the sensitivity of the point estimates with respect to unmeasured confounding between the mediator and outcome. The sensitivity analysis results are based on possibly multiple unmeasured confounders. By Proposition \ref{prop::range-connection}, it is not restrictive to assume $u$ is a scalar to compute the robustness values, since $R_{a\sim u\mid c}=0$ by the design. Figure \ref{fig::ex1} shows the minimum $t$ statistics under the condition $\max \{R_{y\sim u\mid a,m,c}^2,R_{u\sim m\mid a,c}^2\}\leq \rho$. Table \ref{tab::ex2} reports some statistics of the direct and indirect effects and their robustness values for the point estimates. We present the sensitivity analysis result based on the formal benchmarking in Section \ref{sec::formal-benchmarking} of the Supplementary Material. 

In addition to Table \ref{tab::ex2}, we report three additional robustness values, pretending that \citet{chong2016iron}'s study is not a randomized experiment. For the direct effect, the robustness value for the point estimate is 0.07, which is substantially smaller than the previously reported value of 0.38 with $R_{a\sim u\mid c}=0$. For the indirect effect, the robustness value for the point estimate is 0.16 when $u$ is a scalar and 0.13 when $u$ is a vector. This also shows a small difference between robustness values between the scalar and vector $u$ cases.

\begin{figure}[h]
\begin{center}
\includegraphics[width=4in]{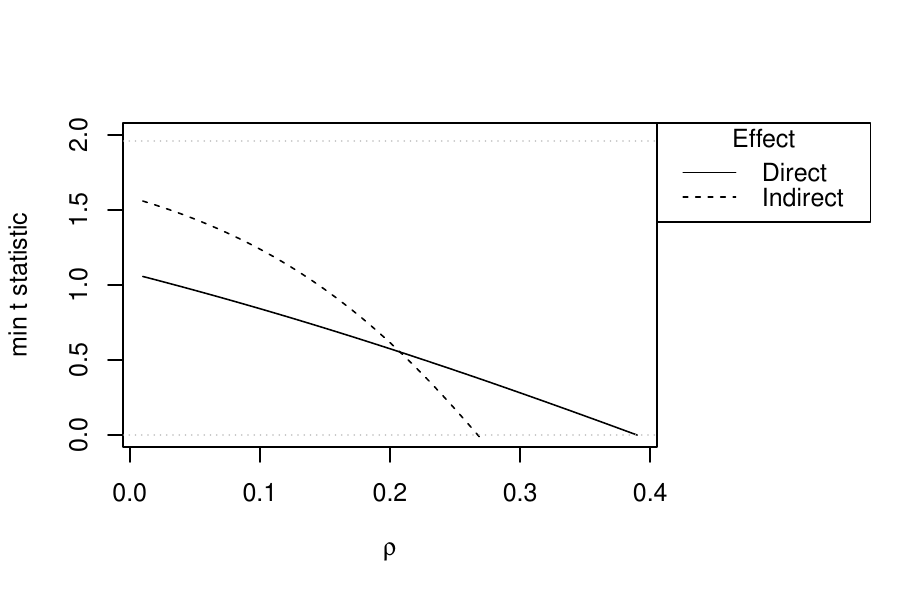}
\caption{The minimum $t$ statistics of the direct effect and indirect effect, under the restriction $\max \{R_{y\sim u\mid a,m,c}^2,\|R_{m\sim u\mid a,c}\|_2^2\}\leq \rho$. The grey reference lines indicate the $t=1.96$ and $t=0$ thresholds. }
\label{fig::ex2}
\end{center}
\end{figure}

\begin{table}[h]
\caption{Sensitivity analysis for mediation analysis based on \citet{chong2016iron}'s study. Some reference sample $R^2$ values are $\max_{j}\hat{R}_{y\sim c_j\mid a,c_{-j},m}^2=0.113$ and $\max_{j}\hat{R}_{c_j\sim m\mid a,c_{-j}}^2=0.135$. The R.V. is short for the ``robustness value''. }\label{tab::ex2}
\begin{center}
\begin{tabular}{lccccc} 
\hline 
& Est. & Std. Err. & $t$-value & R.V. for Est. & R.V. for $95\%$ C.I. \\ 
\hline 
Direct effect & 0.221 & 0.205 & 1.079 & 0.38 & 0.00 \\
Indirect effect & 0.138 & 0.087 & 1.586 & 0.26 & 0.00 \\
\hline 
\end{tabular}
\end{center}
\end{table}

\section{Sensitivity analysis based on marginal correlation coefficeints}\label{sec::marginal-cc}

In this section, we discuss an alternative sensitivity analysis strategy based on marginal correlation coefficients, which are closely related to our $R$ parameters. Let $\rho_{yx}\in\mathbb{R}^{\mathrm{dim}(y)\times \mathrm{dim}(x)}$ denote the marginal correlation matrix of $y$ and $x$, with the $(i,j)$th element of $\rho_{yx}$ equal to the correlation of $y_i$ and $x_j$. By Definition \ref{def::new-r}, we have
\begin{equation}
\label{eq::chol-cov}
R_{y\sim x}=\mathrm{cov}(y)^{-1/2}\mathrm{diag}\{\mathrm{cov}(y)\}^{1/2}\rho_{yx}\mathrm{diag}\{\mathrm{cov}(x)\}^{1/2}\mathrm{cov}(x)^{-1/2}=\Phi_y\rho_{yx}\Phi_x^{\T},
\end{equation}
where $\Phi_y=\mathrm{cov}(y)^{-1/2}\mathrm{diag}\{\mathrm{cov}(y)\}^{1/2}$ and $\Phi_x=\mathrm{cov}(x)^{-1/2}\mathrm{diag}\{\mathrm{cov}(x)\}^{1/2}$. The $\Phi_y$ matrix can be interpreted as the linear transformation matrix between the whitened variable $\mathrm{cov}(y)^{-1/2}y$ that has covariance matrix $I$ and the scaled variable $\mathrm{diag}\{\mathrm{cov}(y)\}^{-1/2}y$ that has unit marginal variances. A similar interpretation applies to $\Phi_x$. The $\Phi$ matrix reduces to 1 and the $R$ parameter reduces to the $\rho$ parameter under the scalar case. 

Equation \eqref{eq::chol-cov} suggests the possibility of using marginal correlations as sensitivity parameters. If some researchers find the Pearson correlation coefficients more interpretable than the $R$ parameters, then they can consider using them as the sensitivity parameters coupled with \eqref{eq::chol-cov}. Specifically, we denote these marginal correlation coefficients as $\rho_{a \sim u \mid c}$, $\rho_{m \sim u \mid a, c}$, and $\rho_{y \sim u \mid a, m, c}$ for the exposure-confounder, mediator-confounder, and outcome-confounder associations, respectively. In this section, we explore the sensitivity analysis using the $\rho$ parameters for the cases when $u$ is a scalar and when $u$ is a vector, respectively. With scalar $m$ and $u$, it is identical to our sensitivity analysis because the $R$ parameter equals the $\rho$ parameter for scalars. However, we will point out some difficulties in implementing the sensitivity analysis using the $\rho$ parameters with vector $m$ and $u$. In particular, we may lose the sharpness of the sensitivity analysis, and to ensure sharpness, we must rely on the results of the $R$ parameters.


\subsection{When $u$ is a scalar}\label{subsec::marginal-cc-scalar}

We start with the case when $u$ is a scalar. By \eqref{eq::chol-cov}, we can express $R_u$ in terms of new parameters $\{\rho_{a\sim u\mid c}, \rho_{m\sim u\mid a,c}, \rho_{y\sim u\mid a,m,c}\}$ as below: 
\begin{equation}\label{eq::r-rho}
\begin{split}
R_{a\sim u\mid c}&=\rho_{a\sim u\mid c},\\
R_{m\sim u\mid a,c}&=\mathrm{cov}(m_{\perp a,c})^{-1/2}\mathrm{diag}\{\mathrm{cov}(m_{\perp a,c})\}^{1/2}\rho_{m\sim u\mid a,c},\\
R_{y\sim u\mid a,m,c}&=\rho_{y\sim u\mid a,m,c}.
\end{split}
\end{equation}
Recall that \eqref{eq::direct-indirect-scalar-u} expresses the direct and indirect effects in terms of $T$'s from \eqref{eq::t1234}, which depend only on the sensitivity parameters $R_u$ and the $\phi$'s in \eqref{eq::phi123}. Combining \eqref{eq::direct-indirect-scalar-u}--\eqref{eq::phi123} with \eqref{eq::r-rho}, we can express the direct and indirect effects as functions of the sensitivity parameters $\{\rho_{a \sim u \mid c}, \rho_{m \sim u \mid a, c}, \rho_{y \sim u \mid a, m, c}\}$, the $T$'s in \eqref{eq::t1234}, and the covariance matrix $\mathrm{cov}(m_{\perp a, c})$.

With this parameterization of the direct and indirect effects, we are now ready to discuss how to report the results of the sensitivity analysis based on these new sensitivity parameters. In Section \ref{subsec::scale-u-report}, our approach uses the Euclidean norms of the $R$ parameters to measure their strengths, which are equivalent to the corresponding $R^2$ parameters and can be interpreted as the proportion of the variance explained by the unmeasured confounder $u$. However, the interpretability of $\|\rho_{m \sim u \mid a, c}\|_2^2$ becomes more complex when $m$ is a vector. Motivated by \eqref{eq::rho-yma} and \citet{freidling2022optimization}, we suggest reporting the worst point estimate or $t$ statistic under the constraint
\begin{equation}\label{eq::constraint-rhos}
|\rho_{y\sim u\mid a,m,c}|\leq \rho_y,\quad \|\rho_{m\sim u\mid a,c}\|_{\infty}\leq \rho_m,\quad |\rho_{a\sim u\mid c}|\leq \rho_a.
\end{equation}
The constraint $\|\rho_{m\sim u\mid a,c}\|_{\infty}\leq \rho_m$ is equivalent to $|\rho_{m_j\sim u\mid a,c}|\leq \rho_m$ for each $1\leq j\leq \mathrm{dim}(m)$. We focus on the constraint \ref{eq::constraint-rhos}, but similar discussions apply to other constraints for the sensitivity parameters $\{\rho_{a \sim u \mid c}, \rho_{m \sim u \mid a, c}, \rho_{y \sim u \mid a, m, c}\}$.

We can compute the worst point estimate or $t$ statistic by solving an optimization problem over parameters $\{\rho_{a\sim u\mid c}, \rho_{m\sim u\mid a,c}, \rho_{y\sim u\mid a,m,c}\}$ subject to the constraint \eqref{eq::constraint-rhos}. For the point estimates, with estimated $T_j$ ($j=1,\ldots, 4$) and estimated covariance matrix $\mathrm{cov}(m_{\perp a,c})$, the plug-in estimates for direct and indirect effects are functions of $\{\rho_{a\sim u\mid c}, \rho_{m\sim u\mid a,c}, \rho_{y\sim u\mid a,m,c}\}$. For the $t$ statistics, using the bootstrapped sample of $T_j$ ($j=1,\ldots, 4$) and $\mathrm{cov}(m_{\perp a,c})$, the nonparametric bootstrap standard errors for direct and indirect effects are also functions of $\{\rho_{a\sim u\mid c}, \rho_{m\sim u\mid a,c}, \rho_{y\sim u\mid a,m,c}\}$. Therefore, the point estimates and $t$ statistics for direct and indirect effects are functions of $\{\rho_{a\sim u\mid c}, \rho_{m\sim u\mid a,c}, \rho_{y\sim u\mid a,m,c}\}$, and we compute their minimum or maximum subject to the constraint \eqref{eq::constraint-rhos}. 

However, this approach might result in conservative sensitivity bounds because the sensitivity parameters $\{\rho_{a \sim u \mid c}, \rho_{m \sim u \mid a, c}, \rho_{y \sim u \mid a, m, c}\}$ may not attain all possible values within constraint \eqref{eq::constraint-rhos}. As discussed in Corollary \ref{cor::direct}, the parameters $\{R_{y \sim u \mid a, m, c}, R_{m \sim u \mid a, c}, R_{a \sim u \mid c}\}$ can take arbitrary values in $(-1, 1) \times \mathbb{B}_{\mathrm{dim}(m)} \times (-1, 1)$. This implies that $\rho_{a \sim u \mid c}$ and $\rho_{y \sim u \mid a, m, c}$ can take any value in $(-1, 1)$, while $\rho_{m \sim u \mid a, c}$ is constrained to
\begin{equation}\label{eq::rho-additional-constraint}
\left\{\mathrm{diag}\{\mathrm{cov}(m_{\perp a,c})\}^{-1/2}\mathrm{cov}(m_{\perp a,c})^{1/2}x:x\in\mathbb{R}^{\mathrm{dim}(m)},\ \|x\|_2^2\leq 1\right\}, 
\end{equation}
which depends on the distribution of the observables $\{a, m, c\}$. To ensure the sharpness of the sensitivity analysis based on the $\rho$ parameters, we must calculate the worst point estimate or $t$ statistic under the constraints in both \eqref{eq::constraint-rhos} \textit{and} \eqref{eq::rho-additional-constraint}. However, the constraint \eqref{eq::rho-additional-constraint}  involves unknown quantities that need to be estimated based on the observed data. We must modify the inferential procedures with an estimated feasible region, which is a nontrivial task compared with the inferential procedures based on the nonparametric bootstrap with the $R$ parameters. Moreover, deriving the feasible region of the $\rho$ parameters requires deriving the feasible region of the $R$ parameters. As such, we view the $R$ parameters as more natural choices for sensitivity analysis for mediation.

\citet{freidling2022optimization} consider general sensitivity analysis problems that incorporate multiple benchmarking constraints for least squares problems. We can also extend their results to mediation analysis under our optimization framework. Due to the space limit, we leave the numerical implementation for future research. 

\subsection{When $u$ is a vector}\label{subsec::marginal-cc-vector}
We next discuss the case when $u$ is a vector. By \eqref{eq::chol-cov}, we can express $R_u$ in terms of the new parameters $\{\rho_{a \sim u \mid c}, \rho_{m \sim u \mid a, c}, \rho_{y \sim u \mid a, m, c}\}$ as follows:
\begin{equation}\label{eq::r-rho-vu}
\begin{split}
R_{a\sim u\mid c}&=\rho_{a\sim u\mid c}\mathrm{diag}\{\mathrm{cov}(u_{\perp c})\}^{1/2}\mathrm{cov}(u_{\perp c})^{-1/2},\\
R_{m\sim u\mid a,c}&=\mathrm{cov}(m_{\perp a,c})^{-1/2}\mathrm{diag}\{\mathrm{cov}(m_{\perp a,c})\}^{1/2}\rho_{m\sim u\mid a,c}\mathrm{diag}\{\mathrm{cov}(u_{\perp a,c})\}^{1/2}\mathrm{cov}(u_{\perp a,c})^{-1/2},\\
R_{y\sim u\mid a,m,c}&=\rho_{y\sim u\mid a,m,c}\mathrm{diag}\{\mathrm{cov}(u_{\perp a,m,c})\}^{1/2}\mathrm{cov}(u_{\perp a,m,c})^{-1/2}.
\end{split}
\end{equation}
Recall that \eqref{eq::cov-uac}--\eqref{eq::cov-uamc} express $\mathrm{cov}(u_{\perp a,c})$ and $\mathrm{cov}(u_{\perp a,m,c})$ in terms of $\{R_{a\sim u\mid c}$, $R_{m\sim u\mid a,c}$, $\mathrm{cov}(u_{\perp c})\}$. Combining \eqref{eq::cov-uac}--\eqref{eq::cov-uamc} and \eqref{eq::r-rho-vu}, our original sensitivity parameters $\{R_u,\mathrm{cov}(u_{\perp c})\}$ are functions of new sensitivity parameters $\{\rho_{a\sim u\mid c}, \rho_{m\sim u\mid a,c}, \rho_{y\sim u\mid a,m,c},\mathrm{cov}(u_{\perp c})\}$. Thus, we can further write direct and indirect effects as functions of new sensitivity parameters $\{\rho_{a\sim u\mid c}, \rho_{m\sim u\mid a,c}, \rho_{y\sim u\mid a,m,c},\mathrm{cov}(u_{\perp c})\}$ and moment of observables. Similar to the scalar case in \eqref{eq::constraint-rhos}, we may wish to compute the worst point estimate or $t$ statistic under the constraint
\begin{equation}\label{eq::constraint-rhos2}
\|\rho_{y\sim u\mid a,m,c}\|_{\infty}\leq \rho_y,\quad \|\rho_{m\sim u\mid a,c}\|_{\infty}\leq \rho_m,\quad \|\rho_{a\sim u\mid c}\|_{\infty}\leq \rho_a. 
\end{equation}

However, unlike solving the optimization problems under the constraint \eqref{eq::rho-yma-2} with the $R$ parameterization, the corresponding optimization problems under the constraint \eqref{eq::constraint-rhos2} depend on the choice of $\mathrm{cov}(u_{\perp c})$. Moreover, we must specify the dimension of $u$; otherwise, the point estimate or $t$ statistic could become unbounded if the number of unmeasured confounders increases indefinitely with a constant correlation between $y$, $m$, and $a$. In practice, we can start by assuming $\mathrm{dim}(u) = 2$ or $\mathrm{dim}(u) = 3$. We then solve an appropriate optimization problem over the parameters $\{\rho_{a \sim u \mid c}, \rho_{m \sim u \mid a, c}, \rho_{y \sim u \mid a, m, c}, \mathrm{cov}(u_{\perp c})\}$ subject to the constraint \eqref{eq::constraint-rhos2}. When the dimension of $u$ is small, this optimization problem remains computationally feasible. However, similar to the scalar case, this approach may be prone to conservative sensitivity bounds due to the sharpness issue, as the range of $\rho_{m \sim u \mid a, c}$ depends on the observables $\{a, m, c\}$.

\section{Discussion}\label{sec::discussion}


Our sensitivity analysis focuses on computing sensitivity bounds for the direct or indirect effect separately. When $u$ is a scalar, Proposition \ref{prop::scalar-u-range} is useful to formulate it as an optimization problem over  $(\phi_1,\phi_2)$ or $(\phi_2,\phi_3)$; when $u$ is a vector, Proposition \ref{prop::vector-u-range} is useful to formulate it as an optimization problem over  $(\phi_1^*,\phi_2^*)$ or $(\phi_2^*,\phi_3^*)$. However, for joint sensitivity analysis involving direct and indirect effects simultaneously, Proposition \ref{prop::scalar-u-range} and Proposition \ref{prop::vector-u-range} are no longer enough. For scalar $u$, since the direct and indirect effects in \eqref{eq::direct-indirect-scalar-u} are expressed by $R$ parameters and moments of observables, the test statistic must be a function of $R$ parameters. Thus, we can formulate it as an optimization problem over $R$ parameters, which involve $(\mathrm{dim}(m)+2)$ degrees of freedom. For vector $u$, the problem is harder due to the undetermined dimension of $u$, and we present a numerical solution in Section \ref{sec::general-vector-u} of the Supplementary Material. 



Theorems \ref{thm::multi-ols}--\ref{thm::most-general-ovb} underpin our sensitivity analysis methods. The linear regression formulation is particularly useful for the Baron--Kenny approach. \citet{cinelli2019sensitivity} extend it to sensitivity analysis in linear structural causal models, and \citet{cinelli2020omitted} extend it to sensitivity analysis in the linear instrumental variable model. It is of interest to further explore the applications of Theorems \ref{thm::multi-ols}--\ref{thm::most-general-ovb} in these settings.  \citet{zheng2021copula} discuss sensitivity analysis with vector $y$ under additional factor-model-type assumptions. It is of interest to derive results that combine Theorem \ref{thm::multi-ols} and \citet{zheng2021copula}. \citet{chernozhukov2022long} provide an extension of the OVB formula under general nonlinear models. It is of interest to extend their formula to general mediation analysis.

 \citet{vanderweele2015explanation} surveys many advanced approaches for mediation analysis based on general definitions of direct and indirect effects. Despite the serious efforts made by \citet{vanderweele2010bias}, \citet{tchetgen2012semiparametric}, and \citet{ding2016sharp}, general sharp and interpretable sensitivity analysis methods are still lacking in the literature. This is an open research direction.

With multiple mediators, we follow the formulation of \citet{vanderweele2013mediation} and view the whole vector $m$ as the mediator. Sometimes, it is of scientific interest to model the causal mechanisms among the mediators and decompose the total effect of the exposure on the outcome into path-specific effects \citep{avin2005identifiability, miles2017quantifying}. Due to the possible unmeasured confounding, it is important to develop a sensitivity analysis method for path-specific effects. We leave it to future research.

We focus on a single mediator or a low-dimensional vector of mediators. Modern genetic studies often have high-dimensional mediators \citep{song2020bayesian, zhou2020estimation, yang2021estimation}. It is important to extend our sensitivity analysis method to handle this more challenging setting. Moreover, \citet{song2020bayesian} and \citet{yang2021estimation} discuss alternative measures of mediation. It is curious to extend our results to those measures.

\section*{Acknowledgement}

Ding was partially funded by the U.S. National Science Foundation (grant \# 1945136). The Associate Editor and two reviewers made helpful comments on our paper.

\bibliographystyle{Chicago}

\bibliography{bku}

\newpage

\appendix

\renewcommand{\thetheorem}{S\arabic{theorem}}
\setcounter{theorem}{0}
\renewcommand{\thelemma}{S\arabic{lemma}}
\setcounter{lemma}{0}
\renewcommand{\theproposition}{S\arabic{proposition}}
\setcounter{proposition}{0}
\renewcommand{\thecorollary}{S\arabic{corollary}}
\setcounter{corollary}{0}
\renewcommand{\thedefinition}{S\arabic{definition}}
\setcounter{definition}{0}
\renewcommand{\thealgo}{S\arabic{algo}}
\setcounter{algo}{0}
\renewcommand{\thepage}{S\arabic{page}}
\setcounter{page}{1}
\renewcommand{\theequation}{S\arabic{equation}}
\setcounter{equation}{0}
\renewcommand{\thetable}{S\arabic{table}}
\setcounter{table}{0}
\renewcommand{\thefigure}{S\arabic{figure}}
\setcounter{figure}{0}

\begin{center}
\huge \bfseries  Supplementary material
\end{center}

\bigskip 

Section \ref{sec::additional} presents additional results, including more rigorous statements about multiple unmeasured confounders in Section \ref{sec::vector-u-more-result}, details about formal benchmarking in Section \ref{sec::formal-benchmarking}, a simulation study on how much we miss if we assume $u$ is a scalar to compute robustness values in Section \ref{sec::computing-rv}, and discussion of general testing problems with vector $u$ in Section \ref{sec::general-vector-u}. 

Section \ref{sec::package} presents guidance to our R package \texttt{BaronKennyU}. 

Section \ref{sec::proofs} presents the proofs. 

\section{Additional results}\label{sec::additional}

\subsection{Rigorous statements with multiple unmeasured confounders}\label{sec::vector-u-more-result}

\subsubsection{Ordinary least squares}\label{subsec::vector-u-ols-discussion}
Proposition \ref{prop::assume-cov-u-ols} below shows that we can compute the worst point estimate or $t$ statistic for some functions of $\gamma_1$ in \eqref{eq::most-general-long-short-ols} by solving an easier optimization problem without the nuisance parameter $\mathrm{cov}(u_{\perp c})$. 

\begin{proposition}\label{prop::assume-cov-u-ols}
Assume the covariance matrix of $(y,a,u)_{\perp c}$ is invertible. Consider the optimization problem  
\begin{equation}\label{eq::optimization-ols-1}
\begin{aligned}
&\min\quad && f(R_{y\sim u\mid a,c}\mathrm{cov}(u_{\perp a,c})^{-1/2}\mathrm{cov}(u_{\perp c})^{1/2}R_{a\sim u\mid c}^{\T})\\
&\text{subject to }\quad &&\|R_{a\sim u\mid c}\|_2^2 \leq \rho_a, \quad \|R_{y\sim u\mid a,c}\|_2^2 \leq \rho_y
\end{aligned}
\end{equation}  
for any function $f$. The optimization problem does not depend on the choice of $\mathrm{cov}(u_{\perp c})$. Specifically, when $\Sigma=(I-R_{a\sim u\mid c}^{\T}R_{a\sim u\mid c})^{-1}$, we have $\mathrm{cov}(u_{\perp a,c})=I$ and the optimization problem reduces to  
\begin{equation}\label{eq::optimization-ols-2}
\begin{aligned}
&\min\quad && f(R_{y\sim u\mid a,c}(I-R_{a\sim u\mid c}^{\T} R_{a\sim u\mid c})^{-1/2}R_{a\sim u\mid c}^{\T})\\
&\text{subject to }\quad &&\|R_{a\sim u\mid c}\|_2^2 \leq \rho_a, \quad \|R_{y\sim u\mid a,c}\|_2^2 \leq \rho_y.
\end{aligned}
\end{equation} 
\end{proposition}

We can choose an appropriate function $f$ in Proposition \ref{prop::assume-cov-u-ols} to compute the worst point estimate or $t$ statistic for some functions of $\gamma_1$, under the upper bound of $R$ parameters. For the point estimate, the plug-in estimator
\begin{equation}\label{eq::gamma-plug-in}
\hat{\gamma}_1=\hat{\gamma}_1^{\text{obs}}-\hat{\mathrm{cov}}(Y_{\perp A,C})^{1/2}R_{y\sim u\mid a,c}\mathrm{cov}(u_{\perp a,c})^{-1/2}\mathrm{cov}(u_{\perp c})^{1/2}R_{a\sim u\mid c}^{\T}\hat{\mathrm{cov}}(A_{\perp C})^{-1/2}
\end{equation}
 is a function of $R_{y\sim u\mid a,c}\mathrm{cov}(u_{\perp a,c})^{-1/2}\mathrm{cov}(u_{\perp c})^{1/2}R_{a\sim u\mid c}^{\T}$, and so is the plug-in estimator of certain functions of $\gamma_1$. For the $t$ statistic, the nonparameteric bootstrap standard error of $\hat{\gamma}$ in \eqref{eq::gamma-plug-in} is a function of $R_{y\sim u\mid a,c}\mathrm{cov}(u_{\perp a,c})^{-1/2}\mathrm{cov}(u_{\perp c})^{1/2}R_{a\sim u\mid c}^{\T}$, and so is the $t$ statistic for certain functions of $\gamma_1$. 
 
 To solve the optimization problem \eqref{eq::optimization-ols-2}, we may want to derive the range of $R_{y\sim u\mid a,c}(I-R_{a\sim u\mid c}^{\T} R_{a\sim u\mid c})^{-1/2}R_{a\sim u\mid c}^{\T}$ under the constraint $\|R_{a\sim u\mid c}\|_2^2 \leq \rho_a$ and $\|R_{y\sim u\mid a,c}\|_2^2 \leq \rho_y$, which could further simplify the optimization problem \eqref{eq::optimization-ols-2} by reparameterization.
 
 \subsubsection{Baron--Kenny approach}
 
Proposition \ref{prop::assume-cov-u-direct} and Proposition \ref{prop::assume-cov-u-indirect} below show that we can compute the worst point estimate or $t$ statistic for the direct and indirect effects, respectively, by solving easier optimization problems without the nuisance parameter $\mathrm{cov}(u_{\perp c})$. 
 
 \begin{proposition}\label{prop::assume-cov-u-direct}
Assume the covariance matrix of $(y,m,a,u)_{\perp c}$ is invertible. Consider the optimization problem  
\begin{equation*}
\begin{aligned}
&\min\quad && f(R_{y\sim u\mid a,m,c}\mathrm{cov}(u_{\perp a,m,c})^{-1/2}\mathrm{cov}(u_{\perp c})^{1/2}R_{a\sim u\mid c}^{\T}, R_{m\sim u\mid a,c}\mathrm{cov}(u_{\perp a,c})^{1/2}\mathrm{cov}(u_{\perp a,m,c})^{-1/2}R_{y\sim u\mid a,m,c}^{\T})\\
&\text{subject to }\quad &&\|R_{a\sim u\mid c}\|_2^2 \leq \rho_a, \quad \|R_{m\sim u\mid a,c}\|_2^2 \leq \rho_m,\quad \|R_{y\sim u\mid a,m,c}\|_2^2\leq \rho_y
\end{aligned}
\end{equation*}  
for any function $f$. The optimization problem does not depend on the choice of \( \Sigma \). Specifically, when $\Sigma=(I-R_{a\sim u\mid c}^{\T}R_{a\sim u\mid c})^{-1}$, we have $\mathrm{cov}(u_{\perp a,c})=I$ and $\mathrm{cov}(u_{\perp a,m,c})=(I-R_{m\sim u\mid a,c}^{\T}R_{m\sim u\mid a,c})^{1/2}$. The optimization problem reduces to  
\begin{equation}\label{eq::reduced-vu-direct}
\begin{aligned}
&\min\quad && f\left(\frac{R_{y\sim u\mid a,m,c}(I-R_{m\sim u\mid a,c}^{\T}R_{m\sim u\mid a,c})^{-1/2}R_{a\sim u\mid c}^{\T}}{(1-\|R_{a\sim u\mid c}\|_2^2)^{1/2}},R_{m\sim u\mid a,c}(I-R_{m\sim u\mid a,c}^{\T} R_{m\sim u\mid a,c})^{-1/2}R_{y\sim u\mid a,m,c}^{\T}\right)\\
&\text{subject to }\quad &&\|R_{a\sim u\mid c}\|_2^2 \leq \rho_a, \quad \|R_{m\sim u\mid a,c}\|_2^2 \leq \rho_m,\quad \|R_{y\sim u\mid a,m,c}\|_2^2\leq \rho_y.
\end{aligned}
\end{equation} 
\end{proposition}

 \begin{proposition}\label{prop::assume-cov-u-indirect}
Assume the covariance matrix of $(y,m,a,u)_{\perp c}$ is invertible. Consider the optimization problem  
\begin{equation*}
\begin{aligned}
&\min\quad && f(R_{m\sim u\mid a,c}\mathrm{cov}(u_{\perp a,c})^{1/2}\mathrm{cov}(u_{\perp a,m,c})^{-1/2}R_{y\sim u\mid a,m,c}^{\T},R_{m\sim u\mid a,c}\mathrm{cov}(u_{\perp a,c})^{-1/2}\mathrm{cov}(u_{\perp c})^{1/2}R_{a\sim u\mid c}^{\T})\\
&\text{subject to }\quad &&\|R_{a\sim u\mid c}\|_2^2 \leq \rho_a, \quad \|R_{m\sim u\mid a,c}\|_2^2 \leq \rho_m,\quad \|R_{y\sim u\mid a,m,c}\|_2^2\leq \rho_y
\end{aligned}
\end{equation*}  
for any function $f$. The optimization problem does not depend on the choice of \( \Sigma \). Specifically, when $\Sigma=(I-R_{a\sim u\mid c}^{\T}R_{a\sim u\mid c})^{-1}$, we have $\mathrm{cov}(u_{\perp a,c})=I$ and $\mathrm{cov}(u_{\perp a,m,c})=(I-R_{m\sim u\mid a,c}^{\T}R_{m\sim u\mid a,c})^{1/2}$. The optimization problem reduces to  
\begin{equation}\label{eq::reduced-vu-indirect}
\begin{aligned}
&\min\quad && f\left(R_{m\sim u\mid a,c}(I-R_{m\sim u\mid a,c}^{\T} R_{m\sim u\mid a,c})^{-1/2}R_{y\sim u\mid a,m,c}^{\T},\frac{R_{m\sim u\mid a,c}R_{a\sim u\mid c}^{\T}}{(1-\|R_{a\sim u\mid c}\|_2^2)^{1/2}}\right)\\
&\text{subject to }\quad &&\|R_{a\sim u\mid c}\|_2^2 \leq \rho_a, \quad \|R_{m\sim u\mid a,c}\|_2^2 \leq \rho_m,\quad \|R_{y\sim u\mid a,m,c}\|_2^2\leq \rho_y.
\end{aligned}
\end{equation}
\end{proposition}

Based on the natural parameterization of direct and indirect effects in Section \ref{subsec::vector-u-parameter}, we can choose an appropriate function $f$ in Proposition \ref{prop::assume-cov-u-direct} and Proposition \ref{prop::assume-cov-u-indirect} to compute the worst point estimate or $t$ statistic for the direct and indirect effects, respectively, under the upper bound of $R$ parameters. The argument is similar to that in Section \ref{subsec::vector-u-ols-discussion}. The range result in Proposition \ref{prop::vector-u-range} is useful to numerically compute the optimization problem \eqref{eq::reduced-vu-direct} in Proposition \ref{prop::assume-cov-u-direct} and optimization problem \eqref{eq::reduced-vu-indirect} in Proposition \eqref{prop::assume-cov-u-indirect}.

\subsection{Report the sensitivity bound with formal benchmarking}\label{sec::formal-benchmarking}

\subsubsection{General method}

In this section, we focus on the discussion with a scalar $u$. Instead of directly bounding the original $R$ parameters, formal benchmarking bounds the relative strengths of $u$ compared to one covariate $c_j$, after controlling for the remaining covariates $c_{-j}$. Following \citet{cinelli2020making}, we define
\begin{equation}
\label{eq::ks-vector-u}
k_a^{(j)}=\frac{R_{a\sim u\mid c_{-j}}^2}{\hat{R}_{a\sim c_j\mid c_{-j}}^2},\quad k_m^{(j)}=\frac{\|R_{m\sim u\mid a,c_{-j}}\|_2^2}{\|\hat{R}_{m\sim c_j\mid a,c_{-j}}\|_2^2},\quad k_y^{(j)}=\frac{R_{y\sim u\mid a,m,c_{-j}}^2}{\hat{R}_{y\sim c_j\mid a,m,c_{-j}}^2},
\end{equation}
where $\hat{R}_{a\sim c_j\mid c_{-j}}$, $\hat{R}_{m\sim c_j\mid a,c_{-j}}$, $\hat{R}_{y\sim c_j\mid a,m,c_{-j}}$ are estimated $R$ parameters based on the observed data. They measure the norm ratios of the $R$ parameters after omitting the unmeasured confounder $u$ and after omitting the observed covariate $c_j$. 

The strategy of formal benchmarking aims to obtain the sensitivity bound based on the knowledge of the relative strengths $\{k_a^{(j)},k_m^{(j)},k_y^{(j)}\}$. We can assume that the maximum of $\{k_a^{(j)},k_m^{(j)},k_y^{(j)}\}$ is no greater than some threshold $\Delta$, for example, $\Delta=1$. It means that the strength of unmeasured confounder $u$ is no greater than $\Delta$ times the strength of the covariate $c_j$, after controlling for the remaining covariates $c_{-j}$. Then, we can report the worst point estimate or the worst $t$ statistic under the constraint $\max\{k_a^{(j)},k_m^{(j)},k_y^{(j)}\}\leq \Delta$. Another option is to report the minimum relative strength such that the point estimate can be reduced to $0$ or the confidence interval can be altered to cover $0$, similar to questions (Q1) and (Q2). We can also conduct the grid search to gradually increase $\Delta$, and find the critical values for the confidence interval and point estimate such that the $t$ statistic hits $1.96$ and $0$, respectively. 

Since $\hat{R}_{a\sim c_j\mid c_{-j}}$, $\hat{R}_{m\sim c_j\mid a,c_{-j}}$ and $\hat{R}_{y\sim c_j\mid a,m,c_{-j}}$ in \eqref{eq::ks-vector-u} are fixed and known from the observed data, we view formal benchmarking as optimization problems with respect to parameters $\{R_{a\sim u\mid c_{-j}}$, $R_{m\sim u\mid a,c_{-j}}$, $R_{y\sim u\mid a,m,c_{-j}}\}$. Following \citet{cinelli2020making}, we assume that $u$ is linearly independent of $c$, or equivalently, we consider only the part of $u$ not linearly explained by $c$. Proposition \ref{prop::benchmarking} below states that the original sensitivity parameters $R_u$ are functions of $\{R_{a\sim u\mid c_{-j}}$, $R_{m\sim u\mid a,c_{-j}}$, $R_{y\sim u\mid a,m,c_{-j}}\}$ and moment of observables. This implies that we can rewrite all the formulas for direct and indirect effects in terms of parameters $\{R_{a\sim u\mid c_{-j}}$, $R_{m\sim u\mid a,c_{-j}}$, $R_{y\sim u\mid a,m,c_{-j}}\}$.

\begin{proposition}
\label{prop::benchmarking}
Assume that the covariance matrix of $(y,m,a,c_j,u)_{\perp c_{-j}}$ is invertible for possibly vector $m,c$ but scalar $y,a,u$, and $u$ is linearly independent of $c$. We have
\begin{equation*}
\begin{split}
R_{a\sim u\mid c}&=\frac{R_{a\sim u\mid c_{-j}}}{(1-R_{a\sim c_j\mid c_{-j}}^2)^{1/2}}, \\
R_{m\sim u\mid a,c}&=\frac{\mathrm{cov}(m_{\perp a,c})^{-1/2}\mathrm{cov}(m_{\perp a,c_{-j}})^{1/2}(R_{m\sim u\mid a,c_{-j}}-R_{m\sim c_j\mid a,c_{-j}}R_{c_j\sim u\mid a,c_{-j}})}{(1-R_{c_j\sim u\mid a,c_{-j}}^2)^{1/2}}, \\
R_{y\sim u\mid a,m,c}&=\frac{R_{y\sim u\mid a,m,c_{-j}}-R_{y\sim c_j\mid a,m,c_{-j}}R_{c_j\sim u\mid a,m,c_{-j}}}{(1-R_{y\sim c_j\mid a,m,c_{-j}}^2)^{1/2}(1-R_{c_j\sim u\mid a,m,c_{-j}}^2)^{1/2}},\\
\end{split}
\end{equation*}
where
\begin{equation*}
\begin{split}
R_{c_j\sim u\mid a,c_{-j}}&=-\frac{R_{a\sim c_j\mid c_{-j}}R_{a\sim u\mid c_{-j}}}{(1-R_{a\sim u\mid c_{-j}}^2)^{1/2}(1-R_{a\sim c_j\mid c_{-j}}^2)^{1/2}},\\
R_{c_j\sim u\mid a,m,c_{-j}}&=\frac{R_{c_j\sim u\mid a,c_{-j}}-R_{m\sim c_j\mid a,c_{-j}}^{\T} R_{m\sim u\mid a,c_{-j}}}{(1-R_{c_j\sim m\mid a,c_{-j}}^2)^{1/2}\{1-\|R_{m\sim u\mid a,c_{-j}}\|_2^2)\}^{1/2}}.
\end{split}
\end{equation*}
\end{proposition}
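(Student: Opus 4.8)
The plan is to derive every identity from a single mechanism: the way residual covariances transform when one more variable is projected out of the conditioning set. Each quantity on the left conditions on the full covariate vector $c=(c_j,c_{-j})$, while each quantity on the right conditions only on $c_{-j}$ (sometimes with $a$ or $m$ adjoined). Thus every formula is an instance of ``adjoin one variable to the conditioning set'' applied to the $R$ measure of Definition \ref{def::new-r}, and the whole proposition reduces to stating one augmentation lemma and specializing it five times.

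First I would record the scalar-augmentation identity. For a conditioning set $S$ (including the constant) and a scalar $z$, projecting $z_{\perp S}$ out of $x_{\perp S}$ and $w_{\perp S}$ gives
$$\mathrm{cov}(x_{\perp z,S},w_{\perp z,S})=\mathrm{cov}(x_{\perp S},w_{\perp S})-\frac{\mathrm{cov}(x_{\perp S},z_{\perp S})\mathrm{cov}(z_{\perp S},w_{\perp S})}{\mathrm{var}(z_{\perp S})},$$
together with the analogous reductions of $\mathrm{var}(x_{\perp z,S})$ and $\mathrm{var}(w_{\perp z,S})$. Rewriting each covariance through its factorization $\mathrm{cov}(x_{\perp S},w_{\perp S})=\mathrm{cov}(x_{\perp S})^{1/2}R_{x\sim w\mid S}\mathrm{cov}(w_{\perp S})^{1/2}$ and dividing by the updated standard deviations produces the recursion for $R_{x\sim w\mid z,S}$; when $w$ (or $x$) is scalar the only surviving normalization is the factor $\mathrm{cov}(x_{\perp z,S})^{-1/2}\mathrm{cov}(x_{\perp S})^{1/2}$. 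I would state the companion identity for adjoining a vector $m$ in the same way, replacing the rank-one scalar correction by $\mathrm{cov}(\cdot,m_{\perp S})\mathrm{cov}(m_{\perp S})^{-1}\mathrm{cov}(m_{\perp S},\cdot)$.

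Second, I would invoke the linear-independence hypothesis. Since the best linear predictor of $u$ from $c$ is constant, $u_{\perp c_{-j}}$ equals $u$ up to centering and is uncorrelated with $c_{j,\perp c_{-j}}$, so $R_{c_j\sim u\mid c_{-j}}=0$. This single vanishing term collapses the general recursions into the clean forms stated. The five identities then follow by choosing $S$ and the adjoined variable: $R_{a\sim u\mid c}$ adjoins $c_j$ to $c_{-j}$ with $x=a$, $w=u$; $R_{c_j\sim u\mid a,c_{-j}}$ adjoins $a$ to $c_{-j}$ with $x=c_j$, $w=u$; $R_{m\sim u\mid a,c}$ adjoins $c_j$ to $(a,c_{-j})$ with vector $x=m$, producing the matrix prefactor $\mathrm{cov}(m_{\perp a,c})^{-1/2}\mathrm{cov}(m_{\perp a,c_{-j}})^{1/2}$; $R_{c_j\sim u\mid a,m,c_{-j}}$ adjoins the vector $m$ to $(a,c_{-j})$ with $x=c_j$, $w=u$; and $R_{y\sim u\mid a,m,c}$ adjoins $c_j$ to $(a,m,c_{-j})$ with scalars $x=y$, $w=u$, which is the classical scalar partial-correlation recursion. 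Presenting the two auxiliary formulas first is natural, since they feed into the main ones.

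The main obstacle I anticipate is bookkeeping in the vector-$m$ cases rather than any conceptual difficulty. When $m$ is a vector, the scalar denominator $(1-R^2)^{1/2}$ of the classical recursion is replaced by the matrix factor $\mathrm{cov}(m_{\perp a,c})^{-1/2}\mathrm{cov}(m_{\perp a,c_{-j}})^{1/2}$, which (as the text already notes) cannot be compressed into a single $R$ parameter. I must track whether each normalization multiplies on the left or the right and repeatedly use the symmetry $R_{x\sim w\mid S}=R_{w\sim x\mid S}^\T$, so that the $1\times\mathrm{dim}(m)$ and $\mathrm{dim}(m)\times1$ blocks combine correctly into the stated scalars $\|R_{m\sim u\mid a,c_{-j}}\|_2^2$ and $\|R_{m\sim c_j\mid a,c_{-j}}\|_2^2$. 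Once the augmentation lemma is in hand, the remaining scalar identities are routine.
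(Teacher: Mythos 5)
Your proposal is correct and follows essentially the same route as the paper: the paper proves this proposition by declaring it a direct corollary of its Lemma \ref{lem::r-exp} (the one-variable conditioning-update formula for the $R$ measure, itself derived from the residual-covariance update in Lemma \ref{lem::cov-formula}), which is exactly your augmentation identity, applied five times with the linear-independence hypothesis forcing $R_{c_j\sim u\mid c_{-j}}=0$. Your write-up simply makes explicit the five specializations and the vector-$m$ normalization factors that the paper leaves implicit.
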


Moreover, Corollary \ref{cor::benchmarking} below states that any feasbile specification of $\{R_{a\sim u\mid c_{-j}}$, $R_{m\sim u\mid a,c_{-j}}$, $R_{y\sim u\mid a,m,c_{-j}}\}$ is achievable, and thus guarantees the sharpness of the sensitivity bound. 

\begin{corollary}
\label{cor::benchmarking}
Assume that the covariance matrix of $(y,m,a,c_j,u)_{\perp c_{-j}}$ is invertible for possibly vector $m,c$ but scalar $y,a,u$, and $u$ is linearly independent of $c$. Given the observables $\{y,m,a,c\}$, the parameters $\{R_{y\sim u\mid a,m,c_{-j}},R_{m\sim u\mid a,c_{-j}},R_{a\sim u\mid c_{-j}}\}$ can take arbitrary values in $(-1,1)\times \mathbb{B}_{\mathrm{dim}(m)}\times (-1,1)$. 
\end{corollary}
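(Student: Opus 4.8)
The plan is to derive this statement from the sharpness already proved in Corollary~\ref{cor::direct-vm}, using the identities of Proposition~\ref{prop::benchmarking} as an explicit change of variables between the benchmarking triple $\{R_{a\sim u\mid c_{-j}},R_{m\sim u\mid a,c_{-j}},R_{y\sim u\mid a,m,c_{-j}}\}$ and the original triple $R_u=\{R_{a\sim u\mid c},R_{m\sim u\mid a,c},R_{y\sim u\mid a,m,c}\}$. Corollary~\ref{cor::direct-vm} tells us that, as $u$ ranges over all admissible confounders, $R_u$ realizes every point of $(-1,1)\times\mathbb{B}_{\mathrm{dim}(m)}\times(-1,1)$; and since every partial correlation conditioning on a superset of $c$ depends on $u$ only through $u_{\perp c}$, the realizing $u$ may be taken linearly independent of $c$, which is the standing assumption of Proposition~\ref{prop::benchmarking}. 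It therefore suffices to show that the change of variables of Proposition~\ref{prop::benchmarking} is a bijection of $(-1,1)\times\mathbb{B}_{\mathrm{dim}(m)}\times(-1,1)$ onto itself, after which the benchmarking triple inherits the same free range.

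First I would substitute the two auxiliary formulas for $R_{c_j\sim u\mid a,c_{-j}}$ and $R_{c_j\sim u\mid a,m,c_{-j}}$ into the three displayed expressions, so that each coordinate of $R_u$ becomes an explicit function of the benchmarking triple and of fixed observables such as $R_{a\sim c_j\mid c_{-j}}$, $R_{m\sim c_j\mid a,c_{-j}}$, and $R_{y\sim c_j\mid a,m,c_{-j}}$. For the two scalar coordinates $R_{a\sim u\mid c}$ and $R_{y\sim u\mid a,m,c}$ the resulting maps are rational in a single benchmarking parameter; I would check that each is strictly monotone, determine its image, and invert it in closed form. These scalar inversions are the familiar partial-correlation updates used throughout the paper, so they are routine once the substitutions are made.

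The hard part will be the vector coordinate $R_{m\sim u\mid a,c}$, whose formula is a matrix affine-then-rescale transformation of $R_{m\sim u\mid a,c_{-j}}$ that is coupled, through $R_{c_j\sim u\mid a,c_{-j}}$, to the value of $R_{a\sim u\mid c_{-j}}$ fixed at the previous step. Here I would argue that the linear factor $\mathrm{cov}(m_{\perp a,c})^{-1/2}\mathrm{cov}(m_{\perp a,c_{-j}})^{1/2}$ is precisely the whitening map relating the two residualizations of $m$, so that composing it with the scalar denominator $(1-R_{c_j\sim u\mid a,c_{-j}}^2)^{1/2}$ and the shift by $R_{m\sim c_j\mid a,c_{-j}}R_{c_j\sim u\mid a,c_{-j}}$ reproduces a Cochran-type update for the vector partial correlation. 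The crux is to exhibit the inverse of this composite explicitly and to check, via Lemma~\ref{lem::r2-and-r}, that it maps the open spectral-norm ball $\mathbb{B}_{\mathrm{dim}(m)}$ bijectively onto $\mathbb{B}_{\mathrm{dim}(m)}$, the spectral-norm constraint being preserved in both directions. Once the three coordinate maps are shown to be bijections whose images combine to give exactly $(-1,1)\times\mathbb{B}_{\mathrm{dim}(m)}\times(-1,1)$, the product map is a bijection of this set onto itself, and composing with Corollary~\ref{cor::direct-vm} yields the claim.
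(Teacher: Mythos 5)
The paper's own proof is omitted but declared to be ``similar to the proof of Corollary~\ref{cor::direct-vm}'': a direct construction in which, for any target triple, one writes down the positive-definite block matrices built from the residualizations on $c_{-j}$, $(a,c_{-j})$ and $(a,m,c_{-j})$ and invokes Lemmas~\ref{lem::block-matrix-psd} and~\ref{lem::general-matrix-sharpness} to manufacture a confounder $u$ realizing that triple. Your route --- pulling the sharpness of Corollary~\ref{cor::direct-vm} back through the change of variables of Proposition~\ref{prop::benchmarking} --- is genuinely different, and it hinges entirely on the claim that this change of variables is a bijection of $(-1,1)\times\mathbb{B}_{\mathrm{dim}(m)}\times(-1,1)$ onto itself. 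That claim is false, and it fails already in the coordinate you describe as routine. Under the standing assumption that $u$ is linearly independent of $c$, Proposition~\ref{prop::benchmarking} gives
\[
R_{a\sim u\mid c}=\frac{R_{a\sim u\mid c_{-j}}}{(1-R_{a\sim c_j\mid c_{-j}}^2)^{1/2}},
\]
a rescaling by the fixed factor $(1-R_{a\sim c_j\mid c_{-j}}^2)^{-1/2}\geq 1$. Whenever $R_{a\sim c_j\mid c_{-j}}\neq 0$ this carries $(-1,1)$ onto a strictly larger interval; equivalently, only the subinterval $|R_{a\sim u\mid c_{-j}}|<(1-R_{a\sim c_j\mid c_{-j}}^2)^{1/2}$ lands back inside $(-1,1)$. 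One sees the same constraint directly from positive definiteness of the correlation matrix of $(a,c_j,u)_{\perp c_{-j}}$, which under $\mathrm{corr}(c_{j,\perp c_{-j}},u_{\perp c_{-j}})=0$ forces $R_{a\sim u\mid c_{-j}}^2+R_{a\sim c_j\mid c_{-j}}^2<1$. So the map is a bijection from a \emph{proper subset} of the product region onto the full region, not of the region onto itself, and the pullback argument cannot close.

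Carried to its logical end, your strategy would show that under the hypothesis that $u$ is linearly independent of $c$ the achievable benchmarking triples form only the preimage of the full region, a proper subset whenever $c_j$ is partially correlated with $a$ (and analogously for the $m$ and $y$ coordinates) --- i.e., it exposes a tension with the statement as written rather than establishing it. To repair the argument you must either give up the pullback and argue as the paper intends --- apply Lemma~\ref{lem::general-matrix-sharpness}(ii) directly with $c_{-j}$ playing the role of the conditioning set, exactly as in the proof of Corollary~\ref{cor::direct-vm}, which prescribes $R_{a\sim u\mid c_{-j}}$, $R_{m\sim u\mid a,c_{-j}}$, $R_{y\sim u\mid a,m,c_{-j}}$ without ever passing through Proposition~\ref{prop::benchmarking} (though the $u$ so constructed need not be linearly independent of $c$) --- or else accept the shrunken feasible region that the linear-independence restriction actually imposes. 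Either way, the central bijection step of your proposal is not salvageable as stated.
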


\subsubsection{Illustration}

For illustration purposes, we report some sensitivity bounds with formal benchmarking based on \citet{chong2016iron}'s study in Section \ref{subsec::illustrations-multiple}. We consider the constraint $k_a^{(j)}=0$, $k_m^{(j)}\leq 1$, and $k_y^{(j)}\leq 1$ in \eqref{eq::ks-vector-u} for any single covariate $c_j$, and we report the worst point estimate for the direct and indirect effects. From Figure \ref{fig::fb1}, if the strength of unmeasured confounder could be as large as any single missing covariate, the worst point estimates would be still away from $0$. 

Since the gender variable is the most important covariate in Figure \ref{fig::fb1}, we focus on the specific covariate and report the minimum point estimates under the constraint $k_a^{(j)}=0$, $k_m^{(j)}\leq \Delta$, and $k_y^{(j)}\leq \Delta$ with varying $\Delta$. From Figure \ref{fig::fb2}, only if the strength of unmeasured confounder could be at least $3.9$ times as a missing gender variable, the point estimate for the indirect effect could be reduced to $0$; and only if the strength of unmeasured confounder could be at least $5.5$ times as a missing gender variable, the point estimate for the direct effect could be reduced to $0$.

\begin{figure} 
\begin{subfigure}{0.50\textwidth}
\includegraphics[width=3in]{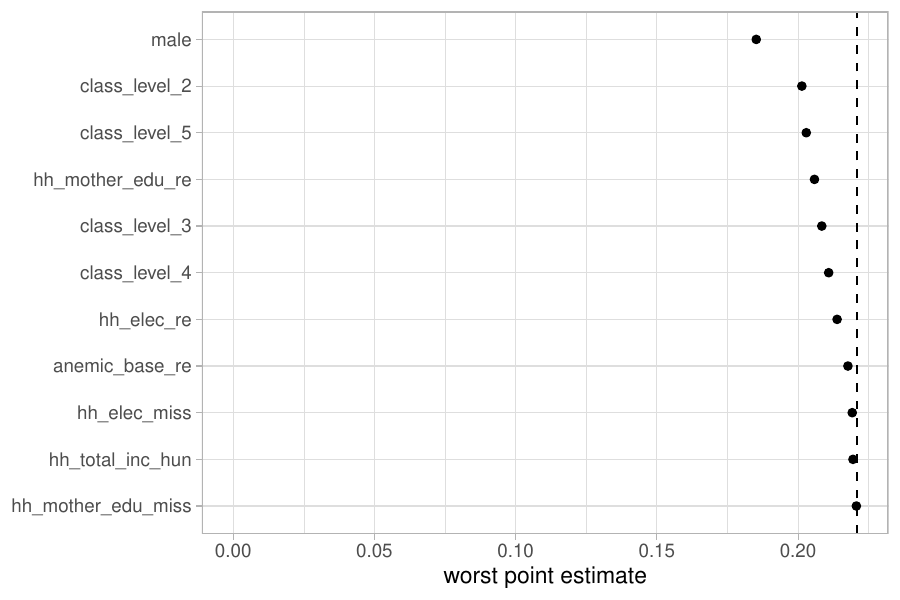}
\caption{direct effect}
\end{subfigure}\hspace*{\fill}
\begin{subfigure}{0.50\textwidth}
\includegraphics[width=3in]{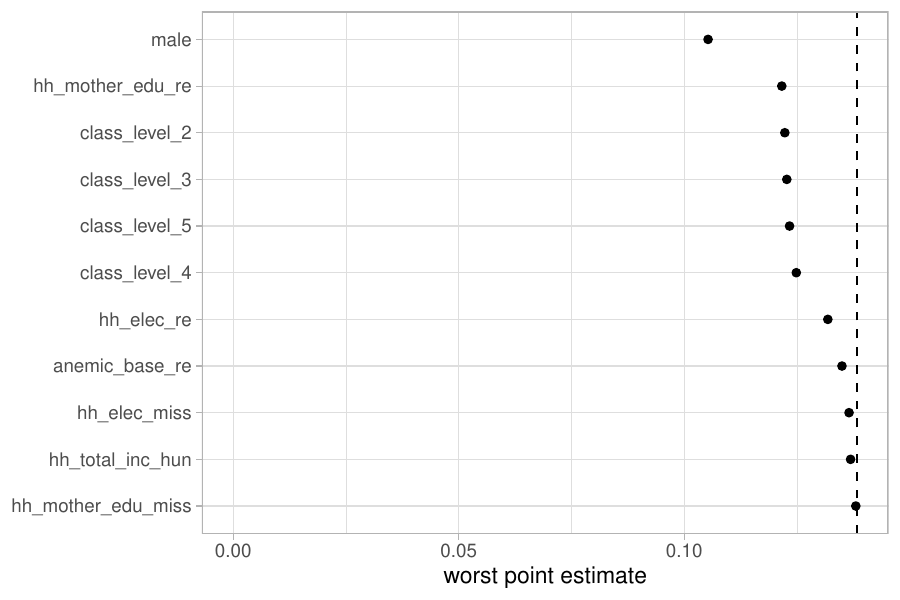}
\caption{indirect effect}
\end{subfigure}
\caption{The worst point estimates under the constraint $k_a^{(j)}=0$, $k_m^{(j)}\leq 1$, $k_y^{(j)}\leq 1$ for each covariate $c_j$. The reference lines indicate the observed point estimates. } 
\label{fig::fb1}
\end{figure}

\begin{figure} 
\centering
\includegraphics[width=4in]{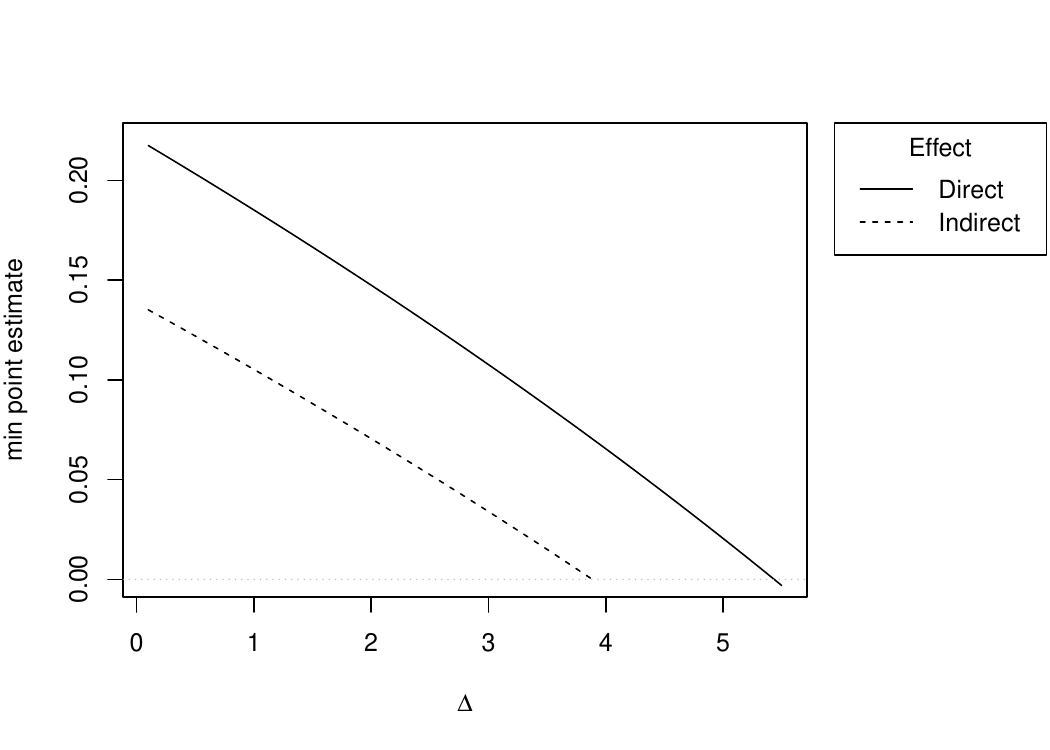}
\caption{The worst point estimates under the constraint $k_a^{(j)}=0$, $k_m^{(j)}\leq \Delta$, $k_y^{(j)}\leq \Delta$ for the gender variable $c_j$. } 
\label{fig::fb2}
\end{figure}

\subsection{Simulation: how much do we miss by assuming a scalar $u$?}\label{sec::computing-rv}

As we discuss in Section \ref{subsubsec::rv-vu}, it is restrictive to assume $u$ is a scalar to compute the robustness values for the indirect effect. In this section, we study how much we miss if we assume $u$ is a scalar by simulations. We consider the following simulation setting. The exposure $a$ follows a standard normal distribution $N(0,1)$. Given $a$, the mediator $m$ follows a multivariate normal distribution with mean $\alpha_1a$ and variance $\Sigma$, where 
$$\alpha_1=\lambda_1\left(0,\frac{1}{\mathrm{dim}(m)-1},\frac{2}{\mathrm{dim}(m)-1},\dots,\frac{\mathrm{dim}(m)-2}{\mathrm{dim}(m)-1},1\right)$$
with $\lambda_1$ selected later, and $\Sigma_{ij}=0.5^{|i-j|}$, for all $1\leq i,j\leq \mathrm{dim}(m)$. Given $a$ and $m$, the outcome $y$ follows a normal distribution with mean $a+\alpha_2^{\T} m$ and variance $1$, where 
$$\alpha_2=\lambda_2\left(1,1-\frac{1.5}{\mathrm{dim}(m)-1},1-\frac{3}{\mathrm{dim}(m)-1},\dots,1-\frac{1.5\{\mathrm{dim}(m)-2\}}{\mathrm{dim}(m)-1},-0.5\right)$$
with $\lambda_2$ selected later. 

We vary the three factors 
$$\mathrm{dim}(m)\in \{2,4,6\}, \quad R_{a\sim m\mid c}^2\in\{0.1,0.3,0.5,0.7\}, \quad R_{y\sim m\mid a,c}^2\in\{0.1,0.3,0.5,0.7\}$$
in our simulation design. For the $R^2$ parameters, we select $\lambda_1>0$ to achieve a specific level of $R_{a\sim m\mid c}^2$ and we select $\lambda_2>0$ to achieve a specific level of $R_{y\sim a\mid m,c}^2$. Tables \ref{tab::simulation1}--\ref{tab::simulation2} compare the robustness values with scalar $u$ and vector $u$, for the point estimates and $95\%$ confidence intervals, respectively. Numerically, we can observe that the robustness values with vector $u$ are about $3/4$ times the robustness values with scalar $u$. 

We also conduct an analysis-of-variance test to analyze the effect of the three factors on the ratio of robustness values with vector $u$ and with scalar $u$. From Table \ref{tab::simulation3}, we can find that the dimension of $m$ is the most important factor. However, we observe that when the dimension of $m$ increases, the average ratio of robustness values increases. Therefore, the higher dimension of $m$ would not lead to a smaller ratio of the robustness values, based on the current simulation results.

\begin{table*}[t]%
\centering
\caption{Simulation results of the average robustness values for point estimates with scalar $u$, over $20$ data replications. The values in the parentheses are the average robustness values for point estimates with vector $u$. }\label{tab::simulation1}%
\begin{tabular*}{\textwidth}{@{\extracolsep\fill}llllll@{\extracolsep\fill}}%
\toprule
\multirow{2}{*}{$\mathrm{dim}(m)$} & \multirow{2}{*}{$R_{a\sim m\mid c}^2$} &  \multicolumn{4}{c}{$R_{y\sim m\mid a,c}^2$}    \\
\cmidrule(lr){3-6}
&  &  $0.1$ & $0.3$ & $0.5$ & $0.7$   \\
\midrule
\multirow{4}{*}{$2$} & $0.1$ & 0.090 (0.064) & 0.126 (0.096) & 0.171 (0.136) & 0.173 (0.150) \\
& $0.3$ & 0.138 (0.104) & 0.236 (0.171) & 0.236 (0.176) & 0.280 (0.220) \\
& $0.5$ & 0.125 (0.102) & 0.245 (0.185) & 0.312 (0.232) & 0.364 (0.280) \\
& $0.7$ & 0.134 (0.115) & 0.264 (0.214) & 0.359 (0.279) & 0.429 (0.329) \\
\midrule
\multirow{4}{*}{$4$} & $0.1 $ & 0.054 (0.036) & 0.082 (0.062) & 0.078 (0.062) & 0.079 (0.064) \\
& $0.3$ & 0.071 (0.052) & 0.102 (0.072)  & 0.126 (0.092) & 0.118 (0.090) \\
& $0.5$ & 0.086 (0.068) & 0.116 (0.084) & 0.165 (0.119) & 0.180 (0.132)  \\
& $0.7$ & 0.080 (0.067) &  0.130 (0.100) & 0.162 (0.121)  & 0.238 (0.172) \\
\midrule
\multirow{4}{*}{$6$} & $0.1$ & 0.036 (0.024) & 0.056 (0.040) & 0.072 (0.054) & 0.061 (0.050) \\
& $0.3$ & 0.058 (0.043) & 0.089 (0.060) & 0.097 (0.069) & 0.093 (0.067) \\
& $0.5$ & 0.080 (0.061) & 0.095 (0.064)  & 0.119 (0.078) & 0.118 (0.082) \\
& $0.7$ & 0.076 (0.061) & 0.118 (0.085) & 0.103 (0.069) & 0.179 (0.126) \\
\bottomrule
\end{tabular*}
\end{table*}

\begin{table*}[t]%
\centering
\caption{Simulation results of the average robustness values for $95\%$ confidence intervals with scalar $u$, over $20$ data replications. The values in the parentheses are the average robustness values for $95\%$ confidence intervals with vector $u$. }\label{tab::simulation2}%
\begin{tabular*}{\textwidth}{@{\extracolsep\fill}llllll@{\extracolsep\fill}}%
\toprule
\multirow{2}{*}{$\mathrm{dim}(m)$} & \multirow{2}{*}{$R_{a\sim m\mid c}^2$} &  \multicolumn{4}{c}{$R_{y\sim m\mid a,c}^2$}    \\
\cmidrule(lr){3-6}
&  &  $0.1$ & $0.3$ & $0.5$ & $0.7$   \\
\midrule
\multirow{4}{*}{$2$} & $0.1$ & 0.002 (0.001) & 0.018 (0.013) & 0.062 (0.049) & 0.060 (0.052) \\
& $0.3$ & 0.024 (0.020) & 0.121 (0.086) & 0.124 (0.092) & 0.172 (0.134) \\
& $0.5$ & 0.022 (0.015) & 0.128 (0.094) & 0.207 (0.152) & 0.265 (0.199) \\
& $0.7$ & 0.035 (0.031) & 0.148 (0.118) & 0.259 (0.199) & 0.342 (0.258) \\
\midrule
\multirow{4}{*}{$4$} & $0.1 $ & 0.000 (0.000) & 0.006 (0.004) & 0.004 (0.002) & 0.010 (0.008) \\
& $0.3$ & 0.005 (0.003) & 0.016 (0.011)  & 0.024 (0.017) & 0.020 (0.014) \\
& $0.5$ & 0.005 (0.004) & 0.018 (0.012) & 0.048 (0.034) & 0.062 (0.045)  \\
& $0.7$ & 0.006 (0.006) &  0.026 (0.020) & 0.057 (0.042)  & 0.121 (0.087) \\
\midrule
\multirow{4}{*}{$6$} & $0.1$ & 0.000 (0.000) & 0.000 (0.000) & 0.006 (0.004) & 0.000 (0.000) \\
& $0.3$ & 0.002 (0.002) & 0.008 (0.005) & 0.014 (0.009) & 0.012 (0.010) \\
& $0.5$ & 0.002 (0.002) & 0.014 (0.010)  & 0.018 (0.010) & 0.020 (0.013) \\
& $0.7$ & 0.005 (0.004) & 0.013 (0.010) & 0.015 (0.009) & 0.065 (0.042) \\
\bottomrule
\end{tabular*}
\end{table*}

\begin{table*}[t]%
\centering
\caption{Analysis of variance for the ratio of robustness values with vector $u$ and with scalar $u$. }\label{tab::simulation3}%
\begin{tabular*}{\textwidth}{@{\extracolsep\fill}llllll@{\extracolsep\fill}}%
\toprule
Ratio of robustness values & Factor & df & MSE & $F$-value & $p$-value\\
 \midrule
\multirow{3}{*}{point estimate} & $\mathrm{dim}(m)$ & 2 & 0.387 & 26.075 & $9.63\times 10^{-12}$ \\
 & $R_{a\sim m\mid c}^2$ & 3 & 0.076 & 5.086 & 0.0017 \\
 & $R_{y\sim m\mid a,c}^2$ & 3 & 0.044 & 2.992 & 0.0301 \\
 \midrule
 \multirow{3}{*}{95\% confidence interval} & $\mathrm{dim}(m)$ & 2 & 0.215 & 9.250 & $1.17\times 10^{-4}$ \\
& $R_{a\sim m\mid c}^2$  & 3 & 0.119 & 5.125 & 0.0017 \\
 & $R_{y\sim m\mid a,c}^2$  & 3 & 0.057 & 2.465 & 0.0618 \\
\bottomrule
\end{tabular*}
\end{table*}

\subsection{General sensitivity analysis with multiple unmeasured confounders}\label{sec::general-vector-u}

In this section, we discuss sensitivity analysis in the context of general hypothesis testing. Previously, we emphasized testing whether the direct effect is zero and whether the indirect effect is zero. However, other hypotheses may also be of scientific interest. For example, we may wish to test whether both the direct and indirect effects are zero or assess a specific linear combination of these effects. In Sections \ref{sec::scale-u} and \ref{sec::vector-u}, we demonstrated that the direct and indirect effects are functions of the moments of observables and sensitivity parameters. Therefore, the test statistics for these general hypotheses can also be expressed in terms of sensitivity parameters. When there is a single unmeasured confounder, these problems are not more challenging. We can use the strategies in Section \ref{sec::scale-u} to report the sensitivity analysis based on the appropriate formulation of optimization problems with respect to $R$ parameters. When there are multiple unmeasured confounders, based on the strategy in Section \ref{sec::vector-u}, we can formulate it as the optimization problem
\begin{equation*}
\begin{aligned}
    &\min\quad &&f(R_{a\sim u\mid c}, R_{m\sim u\mid a,c}, R_{y\sim u\mid a,m,c},\mathrm{cov}(u_{\perp c}))\\
    &\text{subject to }\quad &&\|R_{a\sim u\mid c}\|_2^2\leq \rho_a,\quad  \|R_{m\sim u\mid a,c}\|_2^2\leq \rho_m,\quad  \|R_{y\sim u\mid a,m,c}\|_2^2\leq \rho_y
\end{aligned}
\end{equation*}
for some functions $f$ as our test statistic. The problem is challenging due to the unknown dimension of $u$. In this section, we provide a solution to the above optimization problem without the need to specify the dimension of $u$. The following Proposition \ref{prop::general-sensitivity} is our key result.

\begin{proposition}\label{prop::general-sensitivity}
    Assume that the covariance matrix of $(a,m,y,u)_{\perp c}$ is invertible and assume $\mathrm{cov}(u_{\perp a,c})=I$. Let 
    \begin{equation*}
    R=\begin{pmatrix}
    R_{a\sim u\mid c}\\
    R_{m\sim u\mid a,c}\\
    R_{y\sim u\mid a,m,c}
    \end{pmatrix}.
    \end{equation*}
    The covariance matrix of $(a_{\perp c,u}, m_{\perp c,u},y_{\perp c,u})$ depends only on the moments of observables and $RR^{\T}$. 
\end{proposition}

We make the following remarks regarding Proposition \ref{prop::general-sensitivity} below. First, we would like to compute the sensitivity bounds under the upper bound restriction of $R$ parameters. Similar to the discussion in Section \ref{subsec::ovb-vector-u}, we can assume $\mathrm{cov}(u_{\perp c})=(I-R_{a\sim u\mid c}^{\T} R_{a\sim u\mid c})^{-1}$, which implies $\mathrm{cov}(u_{\perp a,c})=I$ and $\mathrm{cov}(u_{\perp a,c,m})=I-R_{m\sim u\mid a,c}^{\T} R_{m\sim u\mid a,c}$. This is not restrictive for the purpose of reporting sensitivity bound under the upper bound of $R$ parameters.

Second, by Proposition \ref{prop::general-sensitivity}, the true direct effect $\theta_1$ and true indirect effect $\theta_3^{\T}\beta_1$ are functions of moments of observables and $RR^{\T}$. Therefore, the test statistic $f$ is also a function of $RR^{\T}$, regardless of the dimension of $u$. 

Third, by Proposition \ref{prop::general-sensitivity}, we can search over the space of $RR^{\T}$ instead, the space of $(\mathrm{dim}(m)+2)\times (\mathrm{dim}(m)+2)$ positive semidefinite matrices. For a specific positive semidefinite matrix $X$, there are numerous solutions to the equation $X=RR^{\T}$. However, the function $f$ will take the same value for all different solutions of $R$, and thus we only need to choose a specific solution, such as $R=X^{1/2}$. We use functions $\xi_1(R)$ to denote the first row of matrix $R$, $\xi_3(R)$ to denote the last row of matrix $R$, and $\xi_2(R)$ to denote all other rows of matrix $R$. Under the assumption $\mathrm{cov}(u_{\perp a,c})=I$, we have $\mathrm{cov}(u_{\perp c})=(I-R_{a\sim u\mid c}^{\T}R_{a\sim u\mid c})^{-1}$. Then, we can reformulate the optimization problem below: 
\begin{equation*}
\begin{aligned}
    &\min\quad &&f(\xi_1(X^{1/2}), \xi_2(X^{1/2}), \xi_3(X^{1/2}),\{I-\xi_1(X^{1/2})^{\T}\xi_1(X^{1/2})\}^{-1})\\
    &\text{subject to }\quad && X \text{ is positive semidefinite, }\\
    & &&\|\xi_1(X^{1/2})\|_2^2\leq \rho_a,\quad \|\xi_2(X^{1/2})\|_2^2\leq \rho_m,\quad \|\xi_3(X^{1/2})\|_2^2\leq \rho_y.
\end{aligned}
\end{equation*}
This is an optimization problem without the need to specify the dimension of $u$. 

Fourth, the above optimization problem is more complicated than semidefinite programming because the objective function is not linear in $X$. We can leverage the result in \citet{zhang2025stronger} to solve the problem, which provides a reparameterization to nonlinear semidefinite programming by translating the positive semidefinite constraint into some elementwise constraints.

\section{R package: \texttt{BaronKennyU}}\label{sec::package}
We provide an R package \texttt{BaronKennyU} to conduct the sensitivity analysis for the Baron--Kenny approach. The package contains R functions to compute the point estimates, standard errors, and $t$ statistics for direct and indirect effects with prespecified $R$ parameters when $u$ is a scalar. As discussed in Section \ref{subsec::scale-u-report} and Section \ref{subsec::vector-u-report}, the package also contains R functions to compute the worst-case result under the upper bound $R$ parameters, and to determine the robustness values for mediation analysis, both when $u$ is a scalar and when $u$ is a vector. Additionally, as discussed in Section \ref{sec::formal-benchmarking}, the package contains R functions to compute the sensitivity bounds with formal benchmarking when $u$ is a scalar. The package can be installed by the R command: 
\begin{verbatim}
devtools::install_github("mingrui229/BaronKennyU", upgrade = "never")
\end{verbatim}

First, with prespecified $R$ parameters, the R functions \texttt{bku\_direct} and \texttt{bku\_indirect} compute the point estimate, standard error, and $t$ statistic for the direct and indirect effects, respectively. Second, with prespecified norm constraints $\rho_a,\rho_m,\rho_y$, the R functions \texttt{bku\_worst\_direct} and \texttt{bku\_worst\_indirect} compute the worst point estimate and worst $t$ statistic for the direct and indirect effects, respectively. Third, the R functions \texttt{bku\_rv\_direct} and \texttt{bku\_rv\_indirect} compute the robustness values for both point estimates and 95\% confidence intervals for the direct and indirect effects, respectively. It returns a summary table like Table \ref{tab::ex1} or Table \ref{tab::ex2}. Fourth, the R functions \texttt{bku\_fb\_direct} and \texttt{bku\_fb\_indirect} compute the worst point estimate and $t$ statistic with formal benchmarking for the direct and indirect effects, respectively. Besides the data $a$, $m$, $y$, $c_j$ and $c_{-j}$, users also need to specify the upper bound of $k_a^{(j)}$, $k_m^{(j)}$, $k_y^{(j)}$ in \eqref{eq::ks-vector-u}. 

We take the example in Section \ref{subsec::illustrations-multiple} to illustrate the usage of the above R functions. With prespecified $R$ parameters $\{R_{y\sim u\mid a,m,c}=0.6,R_{m\sim u\mid a,c}=(0.5,0.5)^{\T}, R_{a\sim u\mid c}=0\}$, we can compute the sensitivity bound for the direct and indirect effects using the R code
\begin{verbatim}
bku_direct(y = y, m = m, a = a, c = c, Ry = 0.6, Rm = c(0.5, 0.5), Ra = 0)
bku_indirect(y = y, m = m, a = a, c = c, Ry = 0.6, Rm = c(0.5, 0.5), Ra = 0)
\end{verbatim}
We can compute the robustness values using the R code
\begin{verbatim}
bku_rv(y = y, m = m, a = a, c = c, randomized = T)
\end{verbatim}
With prespecified upper bound $k_a^{(j)}=0$, $k_m^{(j)}\leq 1$, $k_y^{(j)}\leq 1$ in \eqref{eq::ks-vector-u} for covariate $c_1$, we can compute the sensitivity bounds with formal benchmarking 
\begin{verbatim}
bku_fb_direct(y = y, m = m, a = a, c = c, j = 1, 
              ky_bound = 1, km_bound = 1, ka_bound = 0)
bku_fb_indirect(y = y, m = m, a = a, c = c, j = 1, 
                ky_bound = 1, km_bound = 1, ka_bound = 0)
\end{verbatim}

\section{Proofs}\label{sec::proofs}

\subsection{Some useful lemmas and Proof of Lemma \ref{lem::r2-and-r}}\label{subsec::further-useful-lemmas}

First, we present formulas of covariance of residuals in ordinary least squares in Lemma \ref{lem::cov-formula}, which is useful for the properties of both $R$ parameters and $R^2$ parameters. 

\begin{lemma}
\label{lem::cov-formula}
For centered vectors $x,y,z$, we have
\begin{equation*}
\begin{split}
\mathrm{cov}(y_{\perp x})&=\mathrm{cov}(y)-\mathrm{cov}(y,x)\mathrm{cov}(x)^{-1}\mathrm{cov}(x,y),\\
\mathrm{cov}(y_{\perp x},z_{\perp x})&=\mathrm{cov}(y,z)-\mathrm{cov}(y,x)\mathrm{cov}(x)^{-1}\mathrm{cov}(x,z),\\
\end{split}
\end{equation*}
if $\mathrm{cov}(x)$ is invertible. 
\end{lemma}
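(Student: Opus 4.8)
The plan is to reduce both identities to the first-order (normal-equation) property of the population least squares residual, namely that $y_{\perp x}$ is uncorrelated with the regressor $x$. Write the population regression of $y$ on $x$ as $y=\beta^\T x+y_{\perp x}$ in the sense of Definition \ref{def::least-squares}. The first-order condition of the minimization $\min_b E\{\|y-b^\T x\|_2^2\}$ gives the orthogonality $\mathrm{cov}(x,y_{\perp x})=0$, and solving the resulting normal equation $\mathrm{cov}(x,y)-\mathrm{cov}(x)\beta=0$ yields the closed form $\beta=\mathrm{cov}(x)^{-1}\mathrm{cov}(x,y)$, equivalently $\beta^\T=\mathrm{cov}(y,x)\mathrm{cov}(x)^{-1}$, where invertibility of $\mathrm{cov}(x)$ is exactly the hypothesis.

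First I would record the elementary consequence that for every vector $w$,
\[
\mathrm{cov}(y_{\perp x},w)=\mathrm{cov}(y-\beta^\T x,w)=\mathrm{cov}(y,w)-\mathrm{cov}(y,x)\mathrm{cov}(x)^{-1}\mathrm{cov}(x,w),
\]
which is pure bilinearity of the covariance together with the formula for $\beta$, and uses no orthogonality yet. Setting $w=x$ here recovers $\mathrm{cov}(y_{\perp x},x)=0$, confirming internal consistency.

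Next, to obtain the second identity, I would peel off the projection of $z$: writing $z=\beta_z^\T x+z_{\perp x}$ and using bilinearity,
\[
\mathrm{cov}(y_{\perp x},z_{\perp x})=\mathrm{cov}(y_{\perp x},z)-\mathrm{cov}(y_{\perp x},x)\beta_z=\mathrm{cov}(y_{\perp x},z),
\]
since $\mathrm{cov}(y_{\perp x},x)=0$. Applying the displayed formula with $w=z$ then gives exactly $\mathrm{cov}(y_{\perp x},z_{\perp x})=\mathrm{cov}(y,z)-\mathrm{cov}(y,x)\mathrm{cov}(x)^{-1}\mathrm{cov}(x,z)$. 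The first identity is the special case $z=y$, using $\mathrm{cov}(y_{\perp x},y_{\perp x})=\mathrm{cov}(y_{\perp x})$.

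The only genuine subtlety, and the place I would be most careful, is the covariance-level orthogonality $\mathrm{cov}(x,y_{\perp x})=0$: the raw normal equation delivers the second-moment identity $E(x\,y_{\perp x}^\T)=0$, which coincides with the covariance statement precisely because the residual is mean-zero (the regressions here include the constant, or equivalently the vectors are centered, as in the conventions following Definition \ref{def::new-r}). Beyond this point the argument is just bookkeeping of matrix transposes, e.g.\ keeping $\mathrm{cov}(x,y)^\T=\mathrm{cov}(y,x)$ straight so that $\beta^\T$ lands on the correct side of each factor.
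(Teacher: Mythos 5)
Your proof is correct and follows essentially the same route as the paper's: the paper likewise writes $y_{\perp x}=y-\mathrm{cov}(y,x)\mathrm{cov}(x)^{-1}x$ and $z_{\perp x}=z-\mathrm{cov}(z,x)\mathrm{cov}(x)^{-1}x$ and obtains both identities by direct expansion using bilinearity of the covariance. Your use of the orthogonality $\mathrm{cov}(y_{\perp x},x)=0$ to shorten the expansion, and your remark about centering versus raw second moments, are harmless refinements of the same calculation.
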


\begin{proof}[Proof of Lemma \ref{lem::cov-formula}]
Consider the population least squares $y=\beta x+y_{\perp x}$ with 
$\beta=\mathrm{cov}(y,x)\mathrm{cov}(x)^{-1}$
and 
$y_{\perp x}=y-\mathrm{cov}(y,x)\mathrm{cov}(x)^{-1}x.$
Similarly, we have
$z_{\perp x}=z-\mathrm{cov}(z,x)\mathrm{cov}(x)^{-1}x.$
We can obtain the desired results by direct calculation. 
\end{proof}

Next, we give a proof of Lemma \ref{lem::r2-and-r}, which connects $R$ parameters and $R^2$ parameters. Then, we present some more useful properties of $R$ parameters in Lemma \ref{lem::r-exp} and $R^2$ parameters in Lemma \ref{lem::rep1}--\ref{lem::rep2} below. 

\begin{proof}[Proof of Lemma \ref{lem::r2-and-r}]
For centered scalar $y$ and centered vector $x$, by definition, we have
$$\|R_{y\sim x}^{\T}\|_2^2=\frac{\mathrm{cov}(y,x)\mathrm{cov}(x)^{-1}\mathrm{cov}(x,y)}{\mathrm{var}(y)}$$
and by Lemma \ref{lem::cov-formula}, we have
$$R_{y\sim x}^2=1-\frac{\mathrm{var}(y_{\perp x})}{\mathrm{var}(y)}=\frac{\mathrm{cov}(y,x)\mathrm{cov}(x)^{-1}\mathrm{cov}(x,y)}{\mathrm{var}(y)}.$$
Thus, we have $\|R_{y\sim x}^{\T}\|_2^2=R_{x\sim y}^2$. For centered vectors $y$ and $z$, $\|R_{y\sim x}\|_2^2$ is the largest eigenvalue of $R_{y\sim x}R_{y\sim x}^{\T}$. That is, 
\begin{equation*}
\begin{split}
\|R_{y\sim x}^{\T}\|_2^2&=\max_{\beta\not=0}\frac{\beta^{\T} \mathrm{cov}(y)^{-1/2}\mathrm{cov}(y,x)\mathrm{cov}(x)^{-1}\mathrm{cov}(x,y)\mathrm{cov}(y)^{-1/2}\beta}{\beta^{\T} \beta}\\
&=\max_{\alpha\not=0}\frac{\alpha^{\T}\mathrm{cov}(y,x)\mathrm{cov}(x)^{-1}\mathrm{cov}(x,y)\alpha}{\alpha^{\T}\mathrm{cov}(y)^{-1}\alpha}\\
&=\max_{\alpha\not=0} R_{\alpha^{\T}y\sim x}^2. 
\end{split}
\end{equation*}
\end{proof}

\begin{lemma}
\label{lem::r-exp}
For centered vectors $x,y,z$, we have
$$R_{y\sim z\mid x}=\mathrm{cov}(y_{\perp x})^{-1/2}\mathrm{cov}(y)^{1/2}(R_{y\sim z}-R_{y\sim x}R_{x\sim z})\mathrm{cov}(z)^{1/2}\mathrm{cov}(z_{\perp x})^{-1/2},$$
if $\mathrm{cov}\{(x,y,z)\}$ is invertible. 
\end{lemma}

\begin{proof}[Proof of Lemma \ref{lem::r-exp}]
By Lemma \ref{lem::cov-formula}, we have
\begin{equation*}
\begin{split}
&\mathrm{cov}(y_{\perp x})^{-1/2}\mathrm{cov}(y)^{1/2}(R_{y\sim z}-R_{y\sim x}R_{x\sim z})\mathrm{cov}(z)^{1/2}\mathrm{cov}(z_{\perp x})^{-1/2}\\
=&\mathrm{cov}(y_{\perp x})^{-1/2}\left\{\mathrm{cov}(y,z)-\mathrm{cov}(y,x)\mathrm{cov}(x)^{-1}\mathrm{cov}(x,z)\right\}\mathrm{cov}(z_{\perp x})^{-1/2}\\
=&\mathrm{cov}(y_{\perp x})^{-1/2}\mathrm{cov}(y_{\perp x},z_{\perp x})^{-1/2}\mathrm{cov}(z_{\perp x})^{-1/2}\\
=&R_{y\sim z\mid x}.
\end{split}
\end{equation*}
\end{proof}

\begin{lemma}
\label{lem::rep1}
For scalar $y$ and vectors $x$ and $z$, we have
$$R_{y\sim z\mid x}^2=\frac{R_{y\sim (z,x)}^2-R_{y\sim x}^2}{1-R_{y\sim x}^2}=1-\frac{(1-R_{y\sim x\mid z}^2)(1-R_{y\sim z}^2)}{1-R_{y\sim x}^2},$$
if $\mathrm{cov}\{(x,y,z)\}$ is invertible. 
\end{lemma}
\begin{proof}[Proof of Lemma \ref{lem::rep1}]
By Definition \ref{def::traditional-r2}, we have
\begin{equation}
\label{eq::lem-rep1-1}
R_{y\sim z\mid x}^2=\frac{\mathrm{var}(y_{\perp x})-\mathrm{var}(y_{\perp x,z})}{\mathrm{var}(y_{\perp x})}=\frac{R_{y\sim (z,x)}^2-R_{y\sim x}^2}{1-R_{y\sim x}^2}.
\end{equation}
Swapping $x$ and $z$, we also have
\begin{equation}
\label{eq::lem-rep1-2}
R_{y\sim x\mid z}^2=\frac{R_{y\sim (x,z)}^2-R_{y\sim z}^2}{1-R_{y\sim z}^2}.
\end{equation}
Combining \eqref{eq::lem-rep1-1}--\eqref{eq::lem-rep1-2}, we obtain the desired result. 
\end{proof}

\begin{lemma}
\label{lem::rep2}
For centered nonzero scalars $y,x,z$, we have
$$
R_{y\sim z}^2=\left[S(R_{y\sim x}^2R_{z\sim x}^2)^{1/2}+S'\{R_{y\sim z\mid x}^2(1-R_{y\sim x}^2)(1-R_{z\sim x}^2)\}^{1/2}\right]^2
$$
if $\mathrm{cov}\{(x,y,z)\}$ is invertible, where 
$$
S=\mathrm{sgn}\{\mathrm{corr}(y,x)\cdot\mathrm{corr}(z,x)\},\quad 
S'=\mathrm{sgn}\{\mathrm{corr}(y_{\perp x},z_{\perp x})\}.
$$ 
\end{lemma}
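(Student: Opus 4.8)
The plan is to derive a signed (un-squared) version of the identity first and then square it. The starting point is the scalar partial-correlation formula already established in Lemma~\ref{lem::r-exp}: for centered scalars $y,z$ and (here scalar) $x$,
$$R_{y\sim z\mid x}=\frac{R_{y\sim z}-R_{y\sim x}R_{x\sim z}}{(1-R_{y\sim x}^2)^{1/2}(1-R_{z\sim x}^2)^{1/2}}.$$
Since $\mathrm{cov}\{(x,y,z)\}$ is invertible, each marginal correlation has magnitude strictly below $1$, so both square-root factors in the denominator are strictly positive and the rearrangement
$$R_{y\sim z}=R_{y\sim x}R_{z\sim x}+R_{y\sim z\mid x}(1-R_{y\sim x}^2)^{1/2}(1-R_{z\sim x}^2)^{1/2}$$
is valid, using $R_{x\sim z}=R_{z\sim x}$ for scalars.

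Next I would rewrite each of the two summands as its absolute value times a sign. For the first term, $R_{y\sim x}R_{z\sim x}=\mathrm{sgn}(R_{y\sim x}R_{z\sim x})\,|R_{y\sim x}R_{z\sim x}|=S\,(R_{y\sim x}^2R_{z\sim x}^2)^{1/2}$, where $S=\mathrm{sgn}\{\mathrm{corr}(y,x)\,\mathrm{corr}(z,x)\}$ matches the definition in the statement because $R_{y\sim x}=\mathrm{corr}(y,x)$ and $R_{z\sim x}=\mathrm{corr}(z,x)$ for scalars. For the second term, the two square-root factors are positive, so its sign is $\mathrm{sgn}(R_{y\sim z\mid x})$; invoking the unification identity $R_{y\sim z\mid x}=R_{y_{\perp x}\sim z_{\perp x}}=\mathrm{corr}(y_{\perp x},z_{\perp x})$ noted after Definition~\ref{def::new-r}, this sign equals $S'=\mathrm{sgn}\{\mathrm{corr}(y_{\perp x},z_{\perp x})\}$. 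Hence the second term equals $S'\{R_{y\sim z\mid x}^2(1-R_{y\sim x}^2)(1-R_{z\sim x}^2)\}^{1/2}$.

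Substituting both expressions yields the signed identity
$$R_{y\sim z}=S(R_{y\sim x}^2R_{z\sim x}^2)^{1/2}+S'\{R_{y\sim z\mid x}^2(1-R_{y\sim x}^2)(1-R_{z\sim x}^2)\}^{1/2},$$
and squaring both sides gives the claim, recalling that the left-hand side of the lemma is $R_{y\sim z}^2$ in the sense of Definition~\ref{def::traditional-r2}.

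There is no serious analytic obstacle here; the only thing requiring care is the bookkeeping of signs, specifically confirming that the factors $S$ and $S'$ in the statement coincide with $\mathrm{sgn}(R_{y\sim x}R_{z\sim x})$ and $\mathrm{sgn}(R_{y\sim z\mid x})$, respectively. The identification of $S'$ rests on the residual-correlation interpretation of the scalar partial correlation, so I would make that step explicit to guard against a sign error that would become invisible once both sides are squared.
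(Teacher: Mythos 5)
Your proof is correct and follows exactly the route the paper intends: the paper dismisses this lemma as ``a direct corollary of Lemma~\ref{lem::r-exp},'' and your argument is precisely the rearrangement of that scalar partial-correlation identity into a signed sum, followed by the sign bookkeeping ($S=\mathrm{sgn}(R_{y\sim x}R_{z\sim x})$, $S'=\mathrm{sgn}(R_{y\sim z\mid x})$ via the residual-correlation interpretation) and squaring. No gaps.
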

Lemma \ref{lem::rep2} is a direct corollary of Lemma \ref{lem::r-exp}. The following Lemmas \ref{lem::block-matrix-psd}--\ref{lem::general-matrix-sharpness} are important to derive our sharpness results. 
\begin{lemma}
\label{lem::block-matrix-psd}
Assume $A$ is symmetric and $C$ is positive definite. Then
\begin{equation*}
M=\begin{pmatrix}
A & B \\
B^{\T}  & C
\end{pmatrix}
\end{equation*}
is positive definite if and only if $A-BC^{-1}B^{\T} $ is positive definite. 
\end{lemma}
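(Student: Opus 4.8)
The plan is to exhibit an explicit block congruence that block-diagonalizes $M$, after which the equivalence reduces to inspecting the two diagonal blocks. Concretely, I would introduce the unit upper-triangular matrix $L=\begin{pmatrix} I & BC^{-1} \\ 0 & I \end{pmatrix}$ and the block-diagonal matrix $D=\begin{pmatrix} A-BC^{-1}B^\T & 0 \\ 0 & C \end{pmatrix}$, and claim the factorization $M=L\,D\,L^\T$. This is the matrix analogue of completing the square in the quadratic form associated with $M$.

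First I would verify the factorization by direct block multiplication, using that $C$ is symmetric so that $(BC^{-1})^\T=C^{-1}B^\T$ and the off-diagonal contributions cancel correctly, recovering $A$ in the top-left corner and $B$, $B^\T$ in the off-diagonal positions. Since $L$ is unit upper triangular it is invertible (indeed $\det L=1$), so the map $v\mapsto L^\T v$ is a linear bijection and $v^\T M v=(L^\T v)^\T D (L^\T v)$; thus $M$ and $D$ are congruent, and $M$ is positive definite if and only if $D$ is.

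Next I would observe that $D$ is block diagonal, so it is positive definite if and only if each diagonal block is. The block $C$ is positive definite by hypothesis, so $D$ is positive definite if and only if the remaining block $A-BC^{-1}B^\T$ is. Chaining the two equivalences gives the claim.

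I do not anticipate a genuine obstacle here, as this is the standard Schur complement characterization; the only point requiring care is the bookkeeping of transposes in the congruence—specifically using symmetry of $C$ (and of $A$, to keep $D$ symmetric) so that $L^\T$ really is the stated lower-triangular factor and the cross terms $\pm BC^{-1}B^\T$ cancel. An equivalent route, if one prefers to avoid writing down the factorization, is to complete the square directly: for $v=(x^\T,z^\T)^\T$ one has $v^\T M v=x^\T(A-BC^{-1}B^\T)x+(z+C^{-1}B^\T x)^\T C\,(z+C^{-1}B^\T x)$, and both directions then follow by choosing $z$ freely (for sufficiency, noting both terms are nonnegative and vanish only at $v=0$) or by setting $z=-C^{-1}B^\T x$ (for necessity, which isolates $x^\T(A-BC^{-1}B^\T)x$).
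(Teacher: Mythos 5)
Your argument is correct and complete: the congruence $M=LDL^\T$ with $L$ unit block upper-triangular (equivalently, completing the square in the quadratic form) is exactly the standard Schur complement characterization that the paper invokes, stating only that it ``is a standard property of the Schur complement'' and omitting the proof. There is nothing to compare against in the paper itself, and your bookkeeping of transposes and the reduction to the two diagonal blocks of $D$ is accurate.
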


Lemma \ref{lem::block-matrix-psd} is a standard property of the Schur complement. We omit its proof. 

\begin{lemma}
\label{lem::general-matrix-sharpness}
(i) Assume any given random vectors $a,b$ such that the covariance matrix $\mathrm{cov}\{(a,b)\}$ is invertible. For any positive definite matrices
\begin{equation*}\Sigma_1=
\begin{pmatrix}
\Sigma_{aa} &  \Sigma_{au}\\
\Sigma_{ua} & \Sigma_{uu}
\end{pmatrix},\quad \Sigma_2=
\begin{pmatrix}
\Sigma_{bb\mid a} & \Sigma_{bu\mid a}\\
\Sigma_{ub\mid a} & \Sigma_{uu\mid a}
\end{pmatrix},
\end{equation*}
satisfying $\Sigma_{uu\mid a}=\Sigma_{uu}-\Sigma_{ua}\Sigma_{aa}^{-1}\Sigma_{au}$, $\mathrm{cov}(a)=\Sigma_{aa}$, and $\mathrm{cov}(b_{\perp a})=\Sigma_{bb\mid a}$, then there exists a vector $u$ such that $\mathrm{cov}\{(a,u)\}=\Sigma_1$, $\mathrm{cov}\{(b_{\perp a},u_{\perp a})\}=\Sigma_2$, and $\mathrm{cov}\{(a,b,u)\}$ is invertible. 

(ii) Assume any given random vectors $a,b,c$ such that the covariance matrix $\mathrm{cov}\{(a,b,c)\}$ is invertible. For any positive definite matrices
\begin{equation*}\Sigma_1=
\begin{pmatrix}
\Sigma_{aa} &  \Sigma_{au}\\
\Sigma_{ua} & \Sigma_{uu}
\end{pmatrix},\quad \Sigma_2=
\begin{pmatrix}
\Sigma_{bb\mid a} & \Sigma_{bu\mid a}\\
\Sigma_{ub\mid a} & \Sigma_{uu\mid a}
\end{pmatrix},\quad \Sigma_3=
\begin{pmatrix}
\Sigma_{cc\mid a,b} & \Sigma_{cu\mid a,b}\\
\Sigma_{uc\mid a,b} & \Sigma_{uu\mid a,b}
\end{pmatrix}
\end{equation*}
satisfying $\Sigma_{uu\mid a}=\Sigma_{uu}-\Sigma_{ua}\Sigma_{aa}^{-1}\Sigma_{au}$, $\Sigma_{uu\mid a,b}=\Sigma_{uu\mid a}-\Sigma_{ub\mid a}\Sigma_{bb\mid a}^{-1}\Sigma_{bu\mid a}$, $\mathrm{cov}(a)=\Sigma_{aa}$, $\mathrm{cov}(b_{\perp a})=\Sigma_{bb\mid a}$ and $\mathrm{cov}(c_{\perp a,b})=\Sigma_{cc\mid a,b}$, then there exists a vector $u$ such that $\mathrm{cov}\{(a,u)\}=\Sigma_1$, $\mathrm{cov}\{(b_{\perp a},u_{\perp a})\}=\Sigma_2$, $\mathrm{cov}\{(c_{\perp a,b},u_{\perp a,b})\}=\Sigma_3$, and $\mathrm{cov}\{(a,b,c,u)\}$ is invertible. 
\end{lemma}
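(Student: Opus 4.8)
The plan is to construct the required confounder $u$ explicitly, as a linear combination of $a$, the successive residuals of the observables, and an independent fresh noise term, and then to verify the prescribed covariance identities by exploiting the mutual orthogonality of the residual blocks. Throughout I treat all variables as centered and freely enlarge the probability space to adjoin a mean-zero noise vector independent of the observables. Since a covariance matrix is positive semidefinite and an invertible one is therefore positive definite, every principal block of $\mathrm{cov}\{(a,b)\}$ (resp.\ $\mathrm{cov}\{(a,b,c)\}$) is positive definite; in particular $\Sigma_{aa}=\mathrm{cov}(a)$ is invertible, while $\Sigma_2\succ0$ forces $\Sigma_{bb\mid a}\succ0$ and $\Sigma_3\succ0$ forces $\Sigma_{cc\mid a,b}\succ0$, so all the inverses appearing below exist.

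For part (i), I would set
$$u=\Sigma_{ua}\Sigma_{aa}^{-1}a+\Sigma_{ub\mid a}\Sigma_{bb\mid a}^{-1}b_{\perp a}+\varepsilon,$$
where $\varepsilon$ is independent of $(a,b)$ with $\mathrm{cov}(\varepsilon)=\Sigma_{uu\mid a}-\Sigma_{ub\mid a}\Sigma_{bb\mid a}^{-1}\Sigma_{bu\mid a}$. Applying Lemma \ref{lem::block-matrix-psd} to $\Sigma_2$ shows this Schur complement is positive definite, so such an $\varepsilon$ exists. The key elementary facts are that $a$ and $b_{\perp a}$ are orthogonal, both are orthogonal to $\varepsilon$, and $u-\Sigma_{ua}\Sigma_{aa}^{-1}a=u_{\perp a}$. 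A direct covariance calculation then yields $\mathrm{cov}(a,u)=\Sigma_{au}$, $\mathrm{cov}(b_{\perp a},u_{\perp a})=\Sigma_{bu\mid a}$, and $\mathrm{cov}(u_{\perp a})=\Sigma_{uu\mid a}$, while $\mathrm{cov}(u)=\Sigma_{uu}$ follows after adding back $\Sigma_{ua}\Sigma_{aa}^{-1}\Sigma_{au}$ and invoking the hypothesis $\Sigma_{uu\mid a}=\Sigma_{uu}-\Sigma_{ua}\Sigma_{aa}^{-1}\Sigma_{au}$. This gives $\mathrm{cov}\{(a,u)\}=\Sigma_1$ and $\mathrm{cov}\{(b_{\perp a},u_{\perp a})\}=\Sigma_2$.

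For part (ii), I would append one more projection layer, setting
$$u=\Sigma_{ua}\Sigma_{aa}^{-1}a+\Sigma_{ub\mid a}\Sigma_{bb\mid a}^{-1}b_{\perp a}+\Sigma_{uc\mid a,b}\Sigma_{cc\mid a,b}^{-1}c_{\perp a,b}+\varepsilon,$$
with $\varepsilon$ independent of $(a,b,c)$ and $\mathrm{cov}(\varepsilon)=\Sigma_{uu\mid a,b}-\Sigma_{uc\mid a,b}\Sigma_{cc\mid a,b}^{-1}\Sigma_{cu\mid a,b}$, again positive definite by Lemma \ref{lem::block-matrix-psd} applied to $\Sigma_3$. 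The three blocks $a$, $b_{\perp a}$, $c_{\perp a,b}$ are mutually orthogonal, so each target covariance is read off from a single term, and $\mathrm{cov}(u)=\Sigma_{uu}$ follows by a telescoping sum in which the two compatibility hypotheses, $\Sigma_{uu\mid a}=\Sigma_{uu}-\Sigma_{ua}\Sigma_{aa}^{-1}\Sigma_{au}$ and $\Sigma_{uu\mid a,b}=\Sigma_{uu\mid a}-\Sigma_{ub\mid a}\Sigma_{bb\mid a}^{-1}\Sigma_{bu\mid a}$, cancel the intermediate terms. The same orthogonality identifies $u_{\perp a}$ and $u_{\perp a,b}$ as the corresponding truncations of the sum, from which the identities matching $\Sigma_2$ and $\Sigma_3$ drop out.

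The remaining point, and the one I would flag as requiring the most care, is invertibility of the full covariance matrix. Here I would observe that in each construction $u$ equals a linear function of the observables plus the fresh noise $\varepsilon$, so that $u_{\perp a,b}=\varepsilon$ in part (i) and $u_{\perp a,b,c}=\varepsilon$ in part (ii). Consequently the Schur complement of the invertible observable block $\mathrm{cov}\{(a,b)\}$ (resp.\ $\mathrm{cov}\{(a,b,c)\}$) inside the full matrix is exactly $\mathrm{cov}(\varepsilon)$, which is positive definite; Lemma \ref{lem::block-matrix-psd} then certifies that $\mathrm{cov}\{(a,b,u)\}$ (resp.\ $\mathrm{cov}\{(a,b,c,u)\}$) is positive definite and hence invertible. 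The only genuine subtlety is bookkeeping the nested Schur complements correctly so that the fresh-noise covariance stays positive definite, and it is precisely the positive definiteness of $\Sigma_2$ and $\Sigma_3$, fed through Lemma \ref{lem::block-matrix-psd}, that both guarantees this and delivers the invertibility of the joint covariance.
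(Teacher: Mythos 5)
Your construction is exactly the paper's: define $u$ as the sum of the successive regression predictions $\Sigma_{ua}\Sigma_{aa}^{-1}a+\Sigma_{ub\mid a}\Sigma_{bb\mid a}^{-1}b_{\perp a}\ (+\,\Sigma_{uc\mid a,b}\Sigma_{cc\mid a,b}^{-1}c_{\perp a,b})$ plus an independent noise whose covariance is the final Schur complement, shown positive definite via Lemma \ref{lem::block-matrix-psd}. The paper states the construction and says ``we can check that the conditions hold,'' whereas you carry out that verification (orthogonality of the residual blocks, telescoping of the compatibility hypotheses, and the Schur-complement argument for invertibility); the proof is correct and follows the same route.
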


\begin{proof}[Proof of Lemma \ref{lem::general-matrix-sharpness}]
We only prove (ii) because the proof of (i) is similar. Let 
$$\Sigma_{vv}=\Sigma_{uu\mid a,b}-\Sigma_{uc\mid a,b}\Sigma_{cc\mid a,b}^{-1}\Sigma_{cu\mid a,b}.$$
By Lemma \ref{lem::block-matrix-psd}, $\Sigma_{vv}$ is positive definite. Construct 
$$u=\Sigma_{ua}\Sigma_{aa}^{-1}a+\Sigma_{ub\mid a}\Sigma_{bb\mid a}^{-1}b_{\perp a}+\Sigma_{uc\mid a,b}\Sigma_{cc\mid a,b}^{-1}c_{\perp a,b}+v,$$
where $v$ is independent of $(a,b,c)$ with $\mathrm{cov}(v)=\Sigma_{vv}$. We can check that the conditions hold. 
\end{proof}

Finally, the following Lemma \ref{lem::linear-algebra-matrix} is useful to prove Proposition \ref{prop::general-sensitivity}. 

\begin{lemma}\label{lem::linear-algebra-matrix}
Let $X\in \mathbb{R}^{p_1\times q}$, $Y\in \mathbb{R}^{p_2\times q}$, $Z\in \mathbb{R}^{p_3\times q}$ with $\|X\|_2,\|Y\|_2,\|Z\|_2<1$. Then $X(I-Y^{\T}Y)^{1/2}Z^{\T}$ and $X(I-Y^{\T}Y)^{-1/2}Z^{\T}$ depend only on $XY^{\T}, XZ^{\T}, YY^{\T}, YZ^{\T}$. 
\end{lemma}

\begin{proof}[Proof of Lemma \ref{lem::linear-algebra-matrix}]
Since matrix multiplication is associative and linear, it suffices to prove the statement for the case where $p_1 = p_3 = 1$, i.e., when $X$ and $Z$ are row vectors. The general case follows by applying the result row-wise. 

Suppose the matrix $YY^{\T}$ has $k$ nonzero eigenvalues $\lambda_1,\dots,\lambda_k$, corresponding to $k$ orthonormal eigenvectors $v_1,\dots,v_k$, respectively. Then we have the eigendecomposition of $YY^{\T}$:
$$YY^{\T}=\sum_{i=1}^k\lambda_iv_iv_i^{\T},$$
which implies
$$Y^{\T}YY^{\T}v_i=\lambda_iY^{\T}v_i,\quad \text{for all }1\leq i\leq k,$$
and
$$v_i^{\T}YY^{\T}v_j=0,\quad \text{for all }1\leq i<j\leq k.$$
By normalizing, the matrix $Y^{\T}Y$ has $k$ nonzero eigenvalues $\lambda_1,\dots,\lambda_k$, corresponding to $k$ orthonormal eigenvectors $Y^{\T}v_1/\|Y^{\T}v_1\|_2$,\dots, $Y^{\T}v_k/\|Y^{\T}v_k\|_2$, respectively. Thus, $(I-Y^{\T}Y)^{1/2}$ has $k$ eigenvalues $(1-\lambda_1)^{1/2},\dots,(1-\lambda_k)^{1/2}$, corresponding to $k$ orthonormal eigenvectors $Y^{\T}v_1/\|Y^{\T}v_1\|_2$,\dots, $Y^{\T}v_k/\|Y^{\T}v_k\|_2$, respectively, and has $(q-k)$ eigenvalues equal to 1, corresponding to $(q-k)$ orthonormal vectors $w_1,\dots,w_{q-k}\in\mathbb{R}^q$, which form an orthonormal basis for the null space of $Y$ (the subspace orthogonal to its row space). Therefore, the eigendecomposition of $(I-Y^{\T}Y)^{1/2}$ is
$$(I-Y^{\T}Y)^{1/2}=\sum_{i=1}^k(1-\lambda_i)^{1/2}\frac{Y^{\T}v_i}{\|Y^{\T}v_i\|_2}\left(\frac{Y^{\T}v_i}{\|Y^{\T}v_i\|_2}\right)^{\T}+\sum_{i=1}^{q-k}w_iw_i^{\T},$$
which implies
    $$X(I-Y^{\T}Y)^{1/2}Z^{\T}=\sum_{i=1}^k(1-\lambda_i)^{1/2}\frac{XY^{\T}v_iv_i^{\T}YZ^{\T}}{v_i^{\T}YY^{\T}v_i}+\sum_{i=1}^{q-k}Xw_iw_i^{\T}Z^{\T}.$$
Since $\lambda_i$ and $v_i$ depend only on $YY^{\T}$, it remains to show that the second term $\sum_{i=1}^{q-k}Xw_iw_i^{\T}Z^{\T}$ depends only on $XY^{\T}$, $XZ^{\T}$, $YY^{\T}$, and $YZ^{\T}$.  

The term $\sum_{i=1}^{q-k} w_i w_i^{\T}$ is the projection matrix onto the null space of $Y$. Since $Y$ may be rank-deficient, this projection matrix can be written as
$$\sum_{i=1}^{q-k}w_iw_i^{\T}=I-Y^{\T}(YY^{\T})^\dagger Y,$$
where $(YY^{\T})^\dagger$ is the Moore--Penrose pseudoinverse of $YY^{\T}$. Thus, applying this projection to $X$ and $Z$, we obtain
    $$\sum_{i=1}^{q-k}Xw_iw_i^{\T}Z^{\T}=XZ^{\T}-XY^{\T}(YY^{\T})^\dagger YZ^{\T}.$$
Since all terms on the right-hand side depend only on $XY^{\T}$, $XZ^{\T}$, $YY^{\T}$, and $YZ^{\T}$, it follows that $X(I - Y^{\T} Y)^{1/2} Z^{\T}$ does as well. 
    
    Similarly, we can show that $X(I-Y^{\T}Y)^{-1/2}Z^{\T}$ depends only on $XY^{\T}, XZ^{\T}, YY^{\T}, YZ^{\T}$. 
\end{proof}

\subsection{Proofs of Theorems \ref{thm::multi-ols}--\ref{thm::most-general-ovb}}\label{subsec::proofs-theorem}

\begin{proof}[Proof of Theorem \ref{thm::multi-ols}]
For (i), let $\rho_1$ be the coefficient of $a$ of the population least squares of $u$ on $(a,c)$. By Cochran's formula, we have
\begin{equation*}
\begin{split}
\gamma_1-\tilde{\gamma}_1&=-\gamma_3\rho_1\\
&=-\frac{\mathrm{cov}(y_{\perp a,c},u_{\perp a,c})\mathrm{cov}(u_{\perp c},a_{\perp c})\mathrm{cov}(a_{\perp c})^{-1}}{\mathrm{var}(u_{\perp a,c})}\\
&=-\frac{\mathrm{cov}(y_{\perp a,c})^{1/2}R_{y\sim u\mid a,c}R_{u\sim a\mid c}\mathrm{cov}(a_{\perp c})^{-1/2}}{\mathrm{var}(u_{\perp a,c})^{-1/2}\mathrm{var}(u_{\perp c})^{1/2}}\\
&=-\frac{\mathrm{cov}(y_{\perp a,c})^{1/2}R_{y\sim u\mid a,c}R_{a\sim u\mid c}^{\T}\mathrm{cov}(a_{\perp c})^{-1/2}}{(1-R_{u\sim a\mid c}^2)^{1/2}}\\
&=-\frac{\mathrm{cov}(y_{\perp a,c})^{1/2}R_{y\sim u\mid a,c}R_{a\sim u\mid c}^{\T}\mathrm{cov}(a_{\perp c})^{-1/2}}{(1-\|R_{a\sim u\mid c}\|_2^2)^{1/2}}. 
\end{split}
\end{equation*}

For (ii), for any $R_1\in \mathbb{B}_{\mathrm{dim}(a)}$ and $R_2\in \mathbb{B}_{\mathrm{dim}(y)}$, Lemma \ref{lem::block-matrix-psd} implies that
\begin{equation*}\Sigma_1=
\begin{pmatrix}
\mathrm{cov}(a_{\perp c}) &  \mathrm{cov}(a_{\perp c})^{1/2}R_1\mathrm{var}(u_{\perp c})^{1/2}\\
\mathrm{var}(u_{\perp c})^{1/2}R_1^{\T} \mathrm{cov}(a_{\perp c})^{1/2} & \mathrm{var}(u_{\perp c})
\end{pmatrix}
\end{equation*}
and 
\begin{equation*}
\Sigma_2=
\begin{pmatrix}
\mathrm{cov}(y_{\perp a,c}) & \mathrm{cov}(y_{\perp a,c})^{1/2}R_2\mathrm{var}(u_{\perp a,c})^{1/2}\\
 \mathrm{var}(u_{\perp a,c})^{1/2}R_2^{\T} \mathrm{cov}(y_{\perp a,c})^{1/2} & \mathrm{var}(u_{\perp a,c})
\end{pmatrix}
\end{equation*}
are positive definite. By Lemma \ref{lem::general-matrix-sharpness}, there exists $u$ such that $R_1=R_{a\sim u\mid c}$ and $R_2=R_{y\sim u\mid a,c}$. Thus, $R_{a\sim u\mid c}\times R_{y\sim u\mid a,c}$ can take arbitrary values in $\mathbb{B}_{\mathrm{dim}(y)}\times \mathbb{B}_{\mathrm{dim}(a)}$. 
\end{proof}

\begin{proof}[Proof of Theorem \ref{thm::most-general-ovb}]
For (i), let $\rho_1$ be the coefficient of $a$ of the population least squares of $u$ on $(a,c)$. By Cochran's formula, we have
\begin{equation*}
\begin{split}
\gamma_1-\tilde{\gamma}_1&=-\gamma_3\rho_1\\
&=-\mathrm{cov}(y_{\perp a,c},u_{\perp a,c})\mathrm{cov}(u_{\perp a,c})^{-1}\mathrm{cov}(u_{\perp c},a_{\perp c})\mathrm{cov}(a_{\perp c})^{-1}\\
&=-\mathrm{cov}(y_{\perp a,c})^{1/2}R_{y\sim u\mid a,c}\mathrm{cov}(u_{\perp a,c})^{-1/2}\mathrm{cov}(u_{\perp c})^{1/2}R_{u\sim a\mid c}\mathrm{cov}(a_{\perp c})^{-1/2}\\
&=-\mathrm{cov}(y_{\perp a,c})^{1/2}R_{y\sim u\mid a,c}\mathrm{cov}(u_{\perp a,c})^{-1/2}\mathrm{cov}(u_{\perp c})^{1/2}R_{a\sim u\mid c}^{\T}\mathrm{cov}(a_{\perp c})^{-1/2}.\\
\end{split}
\end{equation*}
Moreover, consider the population least squares $u_{\perp c}=\mathrm{cov}(u_{\perp c},a_{\perp c})\mathrm{cov}(a_{\perp c})^{-1}a_{\perp c}+u_{\perp a,c}$. This implies 
\begin{equation*}
\begin{split}
\mathrm{cov}(u_{\perp a,c})&=\mathrm{cov}(u_{\perp c})-\mathrm{cov}(u_{\perp c},a_{\perp c})\mathrm{cov}(a_{\perp c})^{-1}\mathrm{cov}(a_{\perp c},u_{\perp c})\\
&=\mathrm{cov}(u_{\perp c})-\mathrm{cov}(u_{\perp c})^{1/2}R_{a\sim u\mid c}^{\T} R_{a\sim u\mid c}\mathrm{cov}(u_{\perp c})^{1/2}.
\end{split}
\end{equation*}

For (ii), for any $Q_1\in \mathbb{S}_{\mathrm{dim}(u)}^{++}$, $R_1\in \mathbb{B}_{\mathrm{dim}(a)\times \mathrm{dim}(u)}$, $R_2\in \mathbb{B}_{\mathrm{dim}(y)\times \mathrm{dim}(u)}$, and $Q_2=Q_1-Q_1^{1/2}R_1^{\T} R_1Q_1^{1/2}$, Lemma \ref{lem::block-matrix-psd} implies that
\begin{equation*}\Sigma_1=
\begin{pmatrix}
\mathrm{cov}(a_{\perp c}) &  \mathrm{cov}(a_{\perp c})^{1/2}R_1Q_1^{1/2}\\
Q_1^{1/2}R_1^{\T} \mathrm{cov}(a_{\perp c})^{1/2} & Q_1
\end{pmatrix}
\end{equation*}
and 
\begin{equation*}
\Sigma_2=
\begin{pmatrix}
\mathrm{cov}(y_{\perp a,c}) & \mathrm{cov}(y_{\perp a,c})^{1/2}R_2Q_2^{1/2}\\
Q_2^{1/2}R_2^{\T} \mathrm{cov}(y_{\perp a,c})^{1/2} & Q_2
\end{pmatrix}.
\end{equation*}
are positive definite. By Lemma \ref{lem::general-matrix-sharpness}, there exists $u$ such that $Q_1=\mathrm{cov}(u_{\perp c})$, $R_1=R_{a\sim u\mid c}$ and $R_2=R_{y\sim u\mid a,c}$. This shows that $\{R_{y\sim u\mid a,c},R_{a\sim u\mid c},\mathrm{cov}(u_{\perp c})\}$ can take arbitrary values in $\mathbb{B}_{\mathrm{dim}(y)\times \mathrm{dim}(u)}\times \mathbb{B}_{\mathrm{dim}(a)\times \mathrm{dim}(u)}\times \mathbb{S}_{\mathrm{dim}(u)}^{++}$. 

\end{proof}

\subsection{Proofs of Propositions \ref{prop::aucm-decom}--\ref{prop::range-connection} and \ref{prop::assume-cov-u-ols}--\ref{prop::general-sensitivity}}\label{subsec::proofs-proposition}

Propositions \ref{prop::aucm-decom}, \ref{prop::aucm-vu}, \ref{prop::benchmarking} are corollaries of Lemma \ref{lem::r-exp}. 

\begin{proof}[Proof of Proposition \ref{prop::scalar-u-range}]

We first show the range of $(\phi_1,\phi_2)$ is the set of all possible $(x_1,x_2)\in \mathbb{R}\times \mathbb{R}^{\mathrm{dim}(m)}$ satisfying
$$x_1^2\leq \frac{\rho_a(\rho_y+\|x_2\|_2^2)}{1-\rho_a},\quad \|x_2\|_2^2\leq \frac{\rho_m\rho_y}{1-\rho_m}.$$
\begin{itemize}
\item Since 
$$\|\phi_2\|_2^2=\frac{R_{y\sim u\mid a,m,c}^2\|R_{m\sim u\mid a,c}\|_2^2}{1-\|R_{m\sim u\mid a,c}\|_2^2}\leq \frac{\rho_m\rho_y}{1-\rho_m}$$
and 
$$\|\phi_1\|_2^2=\frac{R_{a\sim u\mid c}^2}{1-R_{a\sim u\mid c}^2}\cdot (R_{y\sim u\mid a,m,c}^2+\|\phi_2\|_2^2)\leq \frac{\rho_a(\rho_y+\|\phi_2\|_2^2)}{1-\rho_a}$$
we conclude that the range of $(\phi_1,\phi_2)$ is a subset of all possible $(x_1,x_2)\in \mathbb{R}\times \mathbb{R}^{\mathrm{dim}(m)}$ satisfying
$$x_1^2\leq \frac{\rho_a(\rho_y+\|x_2\|_2^2)}{1-\rho_a},\quad \|x_2\|_2^2\leq \frac{\rho_m\rho_y}{1-\rho_m}.$$

\item For any $(x_1,x_2)\in \mathbb{R}\times \mathbb{R}^{\mathrm{dim}(m)}$ satisfying
$$x_1^2\leq \frac{\rho_a(\rho_y+\|x_2\|_2^2)}{1-\rho_a},\quad \|x_2\|_2^2\leq \frac{\rho_m\rho_y}{1-\rho_m},$$
we can choose 
$$R_{a\sim u\mid c}=\frac{\rho_m^{1/2}x_1}{(\rho_mx_1^2+x_2^2)^{1/2}},\quad R_{m\sim u\mid a,c}=\frac{\rho_m^{1/2}x_2}{\|x_2\|_2},\quad R_{y\sim u\mid a,m,c}=\frac{\|x_2\|_2(1-\rho_m)^{1/2}}{\rho_m^{1/2}}.$$
Then, we have
\begin{equation*}
\begin{split}
\phi_1&=\frac{R_{y\sim u\mid a,m,c}R_{a\sim u\mid c}}{(1-R_{a\sim u\mid c}^2)^{1/2}(1-\|R_{m\sim u\mid a,c}\|_2^2)^{1/2}}=x_1,\\
\phi_2&=\frac{R_{y\sim u\mid a,m,c}\cdot R_{m\sim u\mid a,c}}{(1-\|R_{m\sim u\mid a,c}\|_2^2)^{1/2}}=x_2,
\end{split}
\end{equation*}
and $R_{a\sim u\mid c}^2\leq \rho_a$, $\|R_{m\sim u\mid a,c}\|_2^2\leq \rho_m$, and $R_{y\sim u\mid a,m,c}^2\leq \rho_y$. Therefore, $(\phi_1,\phi_2)$ can take all possible $(x_1,x_2)\in \mathbb{R}\times \mathbb{R}^{\mathrm{dim}(m)}$ satisfying
$$x_1^2\leq \frac{\rho_a(\rho_y+\|x_2\|_2^2)}{1-\rho_a},\quad \|x_2\|_2^2\leq \frac{\rho_m\rho_y}{1-\rho_m}.$$
\end{itemize}

We then show the range of $(\phi_2,\phi_3)$ is the set of all possible $(x_2,x_3)\in \mathbb{R}^{\mathrm{dim}(m)}\times \mathbb{R}^{\mathrm{dim}(m)}$ satisfying 
$$\|x_2\|_2^2\leq \frac{\rho_m\rho_y}{1-\rho_m},\quad \|x_3\|_2^2\leq \frac{\rho_a\rho_m}{1-\rho_a},$$
and there exists a real number $t$ such that $x_2=tx_3$. 
\begin{itemize}
\item Since 
$$\|\phi_2\|_2^2=\frac{R_{y\sim u\mid a,m,c}^2\|R_{m\sim u\mid a,c}\|_2^2}{1-\|R_{m\sim u\mid a,c}\|_2^2}\leq \frac{\rho_m\rho_y}{1-\rho_m},$$
$$\|\phi_3\|_2^2=\frac{R_{a\sim u\mid c}^2\|R_{m\sim u\mid a,c}\|_2^2}{1-\|R_{m\sim u\mid a,c}\|_2^2}\leq \frac{\rho_a\rho_m}{1-\rho_m},$$
and $\phi_2,\phi_3$ must be in the same direction as $R_{m\sim u\mid a,c}$, the range of $(\phi_2,\phi_3)$ is a subset of all possible $(x_2,x_3)\in \mathbb{R}^{\mathrm{dim}(m)}\times \mathbb{R}^{\mathrm{dim}(m)}$ satisfying 
$$\|x_2\|_2^2\leq \frac{\rho_m\rho_y}{1-\rho_m},\quad \|x_3\|_2^2\leq \frac{\rho_a\rho_m}{1-\rho_a},$$
and there exists a real number $t$ such that $x_2=tx_3$.

\item For any $(x_2,x_3)\in \mathbb{R}^{\mathrm{dim}(m)}\times \mathbb{R}^{\mathrm{dim}(m)}$ satisfying 
$$\|x_2\|_2^2\leq \frac{\rho_m\rho_y}{1-\rho_m},\quad \|x_3\|_2^2\leq \frac{\rho_a\rho_m}{1-\rho_a},$$
and there exists a real number $t$ such that $x_2=tx_3$, when $x_2\not=0$, we can choose 
$$R_{a\sim u\mid c}=\frac{(1-\rho_m)^{1/2}\|x_2\|_2}{t\rho_m^{1/2}},\quad R_{m\sim u\mid a,c}=\frac{\rho_m^{1/2}x_2}{\|x_2\|_2},\quad R_{y\sim u\mid a,m,c}=\frac{\|x_2\|_2(1-\rho_m)^{1/2}}{\rho_m^{1/2}};$$
and when $x_2=0$ and $x_3\not=0$, we can choose
$$R_{a\sim u\mid c}=\frac{(1-\rho_m)^{1/2}\|x_3\|_2}{\rho_m^{1/2}},\quad R_{m\sim u\mid a,c}=\frac{\rho_m^{1/2}x_3}{\|x_3\|_2},\quad R_{y\sim u\mid a,m,c}=0;$$
and when $x_2=x_3=0$, we can choose all $R$ parameters to be 0. Then, we have
\begin{equation*}
\begin{split}
\phi_2&=\frac{R_{y\sim u\mid a,m,c}\cdot R_{m\sim u\mid a,c}}{(1-\|R_{m\sim u\mid a,c}\|_2^2)^{1/2}},\\
\phi_3&=\frac{R_{a\sim u\mid c}\cdot R_{m\sim u\mid a,c}}{(1-R_{a\sim u\mid c}^2)^{1/2}},
\end{split}
\end{equation*}
and $R_{a\sim u\mid c}^2\leq \rho_a$, $\|R_{m\sim u\mid a,c}\|_2^2\leq \rho_m$, and $R_{y\sim u\mid a,m,c}^2\leq \rho_y$. Therefore, $(\phi_2,\phi_3)$ can take all possible $(x_2,x_3)\in \mathbb{R}^{\mathrm{dim}(m)}\times \mathbb{R}^{\mathrm{dim}(m)}$ satisfying 
$$\|x_2\|_2^2\leq \frac{\rho_m\rho_y}{1-\rho_m},\quad \|x_3\|_2^2\leq \frac{\rho_a\rho_m}{1-\rho_a},$$
and there exists a real number $t$ such that $x_2=tx_3$. 
\end{itemize}
\end{proof}

\begin{proof}[Proof of Proposition \ref{prop::vector-u-range}]

We first show the range of $(\phi_1^*,\phi_2^*)$ is the set of all possible $(x_1,x_2)\in \mathbb{R}\times \mathbb{R}^{\mathrm{dim}(m)}$ satisfying 
$$x_1^2\leq \frac{\rho_a(\rho_y+\|x_2\|_2^2)}{1-\rho_a},\quad \|x_2\|_2^2\leq \frac{\rho_m\rho_y}{1-\rho_m}.$$
We have 
$$\|\phi_2^*\|_2^2\leq \frac{\|R_{y\sim u\mid a,m,c}\|_2^2\|R_{m\sim u\mid a,c}\|_2^2}{1-\|R_{m\sim u\mid a,c}\|_2^2}\leq \frac{\rho_m\rho_y}{1-\rho_m}$$
and 
$$R_{y\sim u\mid a,m,c} (I-R_{m\sim u\mid a,c}^{\T}R_{m\sim u\mid a,c})^{-1}R_{y\sim u\mid a,m,c}^\T=\|\phi_2^*\|_2^2+\|R_{y\sim u\mid a,m,c}\|_2^2\leq \|\phi_2^*\|_2^2+\rho_y.$$
Therefore, the range of $(\phi_1^*,\phi_2^*)$ is a subset of all possible $(x_1,x_2)\in \mathbb{R}\times \mathbb{R}^{\mathrm{dim}(m)}$ satisfying
$$x_1^2\leq \frac{\rho_a(\rho_y+\|x_2\|_2^2)}{1-\rho_a},\quad \|x_2\|_2^2\leq \frac{\rho_m\rho_y}{1-\rho_m}.$$
Moreover, by Proposition \ref{prop::scalar-u-range}, $(\phi_1^*,\phi_2^*)$ can take all possible $(x_1,x_2)\in \mathbb{R}\times \mathbb{R}^{\mathrm{dim}(m)}$ satisfying
$$x_1^2\leq \frac{\rho_a(\rho_y+\|x_2\|_2^2)}{1-\rho_a},\quad \|x_2\|_2^2\leq \frac{\rho_m\rho_y}{1-\rho_m}.$$

We then show the range of $(\phi_2^*,\phi_3^*)$ is the set of all possible $(x_2,x_3)\in \mathbb{R}^{\mathrm{dim}(m)}\times \mathbb{R}^{\mathrm{dim}(m)}$ satisfying 
$$\|x_2\|_2^2\leq \frac{\rho_m\rho_y}{1-\rho_m},\quad \|x_3\|_2^2\leq \frac{\rho_a\rho_m}{1-\rho_a}.$$

\begin{itemize}
\item Since 
$$\|\phi_2^*\|_2^2\leq \frac{\|R_{y\sim u\mid a,m,c}\|_2^2\|R_{m\sim u\mid a,c}\|_2^2}{1-\|R_{m\sim u\mid a,c}\|_2^2}\leq \frac{\rho_m\rho_y}{1-\rho_m}$$
$$\|\phi_3^*\|_2^2\leq \frac{\|R_{a\sim u\mid c}\|_2^2\|R_{m\sim u\mid a,c}\|_2^2}{1-\|R_{m\sim u\mid a,c}\|_2^2}\leq \frac{\rho_a\rho_m}{1-\rho_m}$$
the range of $(\phi_2^*,\phi_3^*)$ is a subset of all possible $(x_2,x_3)\in \mathbb{R}^{\mathrm{dim}(m)}\times \mathbb{R}^{\mathrm{dim}(m)}$ satisfying 
$$\|x_2\|_2^2\leq \frac{\rho_m\rho_y}{1-\rho_m},\quad \|x_3\|_2^2\leq \frac{\rho_a\rho_m}{1-\rho_a},$$
\item For any $(x_2,x_3)\in \mathbb{R}^{\mathrm{dim}(m)}\times \mathbb{R}^{\mathrm{dim}(m)}$ satisfying 
$$\|x_2\|_2^2\leq \frac{\rho_m\rho_y}{1-\rho_m},\quad \|x_3\|_2^2\leq \frac{\rho_a\rho_m}{1-\rho_a},$$
Let 
$$t_1=\left(\frac{1-\rho_m}{\rho_m\rho_y}\right)^{1/2}x_2,\quad t_2=\left(\frac{1-\rho_a}{\rho_a\rho_m}\right)^{1/2}x_3$$
and let $\{v_1,v_2\}$ be the orthogonal unit vectors in $\mathbb{R}^{\mathrm{dim}(m)}$ such that $t_1,t_2$ are in the linear space spanned by $\{v_1,v_2\}$. Let $t_1=\lambda_{11}v_1+\lambda_{12}v_2$ and $t_2=\lambda_{21}v_1+\lambda_{22}v_2$. Then, we have $\lambda_{11}^2+\lambda_{12}^2\leq 1$ and $\lambda_{21}^2+\lambda_{22}^2\leq 1$. We can choose
\begin{equation*}
\begin{split}
R_{m\sim u\mid a,c}&=\rho_m^{1/2}\begin{pmatrix}
v_1 & v_2 & 0 & \cdots & 0
\end{pmatrix},\\
R_{y\sim u\mid a,m,c}&=\rho_y^{1/2}\begin{pmatrix}
\lambda_{21} & \lambda_{22} & 0 & \cdots & 0
\end{pmatrix},\\
R_{a\sim u\mid c}&=\left\{\frac{\rho_a}{\rho_a(\lambda_{11}^2+\lambda_{12}^2)-\rho_a+1}\right\}^{1/2}\begin{pmatrix}
\lambda_{11} & \lambda_{12} & 0 & \cdots & 0
\end{pmatrix},\\
\end{split}
\end{equation*}
which satisfy 
\begin{equation*}
\begin{split}
\phi_2^*&=R_{m\sim u\mid a,c}(I-R_{m\sim u\mid a,c}^{\T} R_{m\sim u\mid a,c})^{-1/2}R_{y\sim u\mid a,m,c}^{\T},\\
\phi_3^*&=\frac{R_{m\sim u\mid a,c}R_{a\sim u\mid c}^{\T}}{(1-\|R_{a\sim u\mid c}\|_2^2)^{1/2}},
\end{split}
\end{equation*}
and $\|R_{a\sim u\mid c}\|_2^2\leq \rho_a$, $\|R_{m\sim u\mid a,c}\|_2^2\leq \rho_m$, and $\|R_{y\sim u\mid a,m,c}\|_2^2\leq \rho_y$. Therefore, $(\phi_2^*,\phi_3^*)$ can take all possible $(x_2,x_3)\in \mathbb{R}^{\mathrm{dim}(m)}\times \mathbb{R}^{\mathrm{dim}(m)}$ satisfying 
$$\|x_2\|_2^2\leq \frac{\rho_m\rho_y}{1-\rho_m},\quad \|x_3\|_2^2\leq \frac{\rho_a\rho_m}{1-\rho_a}.$$
\end{itemize}

\end{proof}

Proposition \ref{prop::range-connection} is a direct corollary of Proposition \ref{prop::scalar-u-range} and Proposition \ref{prop::vector-u-range}.

 \begin{proof}[Proof of Proposition \ref{prop::assume-cov-u-ols}]
By Theorem \ref{thm::most-general-ovb}(i), we have
$$R_{y\sim u\mid a,c}\mathrm{cov}(u_{\perp a,c})^{-1/2}\mathrm{cov}(u_{\perp c})^{1/2}R_{a\sim u\mid c}^{\T}=\mathrm{cov}(y_{\perp a,c})^{-1/2}(\tilde\gamma_1-\gamma_1)\mathrm{cov}(a_{\perp c})^{1/2},$$
where $\gamma_1$ is invariant to any inveritable transformation of $u$, and $\tilde{\gamma}_1$, $\mathrm{cov}(y_{\perp a,c})$, and $\mathrm{cov}(a_{\perp c})$ are independent of $u$. Therefore, 
$$f(R_{y\sim u\mid a,c}\mathrm{cov}(u_{\perp a,c})^{-1/2}\mathrm{cov}(u_{\perp c})^{1/2}R_{a\sim u\mid c}^{\T})$$
is invariant to any inveritable transformation of $u$ for any function $f$. For any specific $u$ that attains the minimum of the optimization problem and for any given invertible matrix $\Sigma$, we take $\tilde{u}=\Sigma^{1/2}\mathrm{cov}(u_{\perp c})^{-1/2}u_{\perp c}$ to ensure $\mathrm{cov}(\tilde{u}_{\perp c})=\Sigma$. Moreover, we can verify that $\|R_{y\sim u\mid a,c}\|_2=\|R_{y\sim \tilde{u}\mid a,c}\|_2$ and $\|R_{a\sim u\mid c}\|_2=\|R_{a\sim \tilde{u}\mid c}\|_2$. Therefore, such $\tilde{u}$ also attains the minimum of the optimization problem. 

When $\Sigma=(I-R_{a\sim u\mid c}^{\T}R_{a\sim u\mid c})^{-1}$, we have $\mathrm{cov}(u_{\perp a,c})=I$ and it is equivalent to solving the optimization problem
\begin{equation*}
\begin{aligned}
&\min\quad && f(R_{y\sim u\mid a,c}(I-R_{a\sim u\mid c}^{\T} R_{a\sim u\mid c})^{-1/2}R_{a\sim u\mid c}^{\T})\\
&\text{subject to }\quad &&\|R_{a\sim u\mid c}\|_2^2 \leq \rho_a, \quad \|R_{y\sim u\mid a,c}\|_2^2 \leq \rho_y.
\end{aligned}
\end{equation*}

\end{proof}

The proof of Propositions \ref{prop::assume-cov-u-direct}--\ref{prop::assume-cov-u-indirect} are similar to the proof of Proposition \ref{prop::assume-cov-u-ols}.

 \begin{proof}[Proof of Proposition \ref{prop::general-sensitivity}]

    The covariance matrix of $(a_{\perp c,u}, m_{\perp c,u},y_{\perp c,u})$ is determined by the following six elements: 
    $$\mathrm{var}(a_{\perp c,u}),\quad \mathrm{cov}(m_{\perp c,u}, a_{\perp c,u}),\quad \mathrm{cov}(m_{\perp c,u}),\quad \mathrm{cov}(y_{\perp c,u}, a_{\perp c,u}),\quad \mathrm{cov}(y_{\perp c,u}, m_{\perp c,u}),\quad \mathrm{var}(y_{\perp c,u}).$$
    Below, we will show that each of them depends only on moments of observables and $RR^{\T}$. 
    \begin{enumerate}
        \item We have 
        \begin{equation*}
            \begin{split}
                \mathrm{var}(a_{\perp c,u})=\mathrm{var}(a_{\perp c})-\mathrm{cov}(a_{\perp c},u_{\perp c})\mathrm{cov}(u_{\perp c})^{-1}\mathrm{cov}(u_{\perp c},a_{\perp c})
                =\mathrm{var}(a_{\perp c})(1-R_{a\sim u\mid c}R_{a\sim u\mid c}^{\T})
            \end{split}
        \end{equation*}
        which depends only on the moments of observables and $RR^{\T}$. 
        \item We have
        \begin{equation*}
            \begin{split}
                \mathrm{cov}(m_{\perp c,u},a_{\perp c,u})&=\mathrm{cov}(m_{\perp c},a_{\perp c})-\mathrm{cov}(m_{\perp c},u_{\perp c})\mathrm{cov}(u_{\perp c})^{-1}\mathrm{cov}(u_{\perp c},a_{\perp c})\\
                &=\mathrm{cov}(m_{\perp c},a_{\perp c})-\mathrm{cov}(m_{\perp c})^{1/2}R_{m\sim u\mid c}R_{a\sim u\mid c}^{\T}\mathrm{var}(a_{\perp c})^{1/2},\\
            \end{split}
        \end{equation*}
        where
        \begin{equation*}
            \begin{split}
                R_{m\sim u\mid c}=\mathrm{cov}(m_{\perp c})^{-1/2}\mathrm{cov}(m_{\perp a,c})^{1/2}R_{m\sim u\mid a,c}\mathrm{cov}(u_{\perp a,c})^{1/2}\mathrm{cov}(u_{\perp c})^{-1/2}+R_{m\sim a\mid c}R_{a\sim u\mid c}.
            \end{split}
        \end{equation*}
        Since $\mathrm{cov}(u_{\perp a,c})=I$, we have $\mathrm{cov}(u_{\perp c})=(I-R_{a\sim u\mid c}^{\T}R_{a\sim u\mid c})^{-1}$ and thus
        \begin{equation*}
            \begin{split}
                R_{m\sim u\mid a,c}\mathrm{cov}(u_{\perp a,c})^{1/2}\mathrm{cov}(u_{\perp c})^{-1/2}R_{a\sim u\mid c}^{\T}
                =&R_{m\sim u\mid a,c}(I-R_{a\sim u\mid c}^{\T}R_{a\sim u\mid c})^{-1/2}R_{a\sim u\mid c}^{\T}\\
                =&\frac{R_{m\sim u\mid a,c}R_{a\sim u\mid c}^{\T}}{(1-R_{a\sim u\mid c}R_{a\sim u\mid c}^{\T})^{1/2}}
            \end{split}
        \end{equation*}
        depends only on the moments of observables and $RR^{\T}$. Therefore, $\mathrm{cov}(m_{\perp c,u},a_{\perp c,u})$ depends only on the moments of observables and $RR^{\T}$. 
        \item We have
        \begin{equation*}
                \mathrm{cov}(m_{\perp c,u})=\mathrm{cov}(m_{\perp a,c,u})+\mathrm{cov}(m_{\perp c,u},a_{\perp c,u})\mathrm{var}(a_{\perp c,u})^{-1}\mathrm{cov}(a_{\perp c,u},m_{\perp c,u})
        \end{equation*}
        where $\mathrm{cov}(m_{\perp c,u},a_{\perp c,u})\mathrm{var}(a_{\perp c,u})^{-1}\mathrm{cov}(a_{\perp c,u},m_{\perp c,u})$ depends only on moments of observables and $RR^{\T}$ by the previous parts, and
        \begin{equation*}
            \begin{split}
                \mathrm{cov}(m_{\perp a,c,u})&=\mathrm{cov}(m_{\perp a,c})-\mathrm{cov}(m_{\perp a,c},u_{\perp a,c})\mathrm{cov}(u_{\perp a,c})^{-1}\mathrm{cov}(u_{\perp a,c},m_{\perp a,c})\\
                &=\mathrm{cov}(m_{\perp a,c})^{1/2}(I-R_{m\sim u\mid a,c}R_{m\sim u\mid a,c}^{\T})\mathrm{cov}(m_{\perp a,c})^{1/2}
            \end{split}
        \end{equation*}
        depends only on the moments of observables and $RR^{\T}$. Therefore, $\mathrm{cov}(m_{\perp c,u})$ depends only on the moments of observables and $RR^{\T}$. 

                \item We have
        \begin{equation*}
            \begin{split}
                \mathrm{cov}(y_{\perp c,u},a_{\perp c,u})&=\mathrm{cov}(y_{\perp c},a_{\perp c})-\mathrm{cov}(y_{\perp c},u_{\perp c})\mathrm{cov}(u_{\perp c})^{-1}\mathrm{cov}(u_{\perp c},a_{\perp c})\\
                &=\mathrm{cov}(y_{\perp c},a_{\perp c})-\mathrm{var}(y_{\perp c})^{1/2}R_{y\sim u\mid c}R_{a\sim u\mid c}^{\T}\mathrm{var}(a_{\perp c})^{1/2},\\
            \end{split}
        \end{equation*}
        where 
        \begin{equation*}
            \begin{split}
                R_{y\sim u\mid c}=&\mathrm{var}(y_{\perp c})^{-1/2}\mathrm{var}(y_{\perp a,c})^{1/2}R_{y\sim u\mid a,c}\mathrm{cov}(u_{\perp a,c})^{1/2}\mathrm{cov}(u_{\perp c})^{-1/2}+R_{y\sim a\mid c}R_{a\sim u\mid c},\\
                R_{y\sim u\mid a,c}=&\mathrm{var}(y_{\perp a,c})^{-1/2}\mathrm{var}(y_{\perp a,m,c})^{1/2}R_{y\sim u\mid a,m,c}\mathrm{cov}(u_{\perp a,m,c})^{1/2}\mathrm{cov}(u_{\perp a,c})^{-1/2}\\
                &+R_{y\sim m\mid a,c}R_{m\sim u\mid a,c}.\\
            \end{split}
        \end{equation*}
        We only need to show $R_{y\sim u\mid a,m,c}\mathrm{cov}(u_{\perp a,m,c})^{1/2}\mathrm{cov}(u_{\perp c})^{-1/2}R_{a\sim u\mid c}^{\T}$ depends only on moments of observables and $RR^{\T}$. 
        
        Since $\mathrm{cov}(u_{\perp a,c})=I$, we have $\mathrm{cov}(u_{\perp c})=(I-R_{a\sim u\mid c}^{\T}R_{a\sim u\mid c})^{-1}$ and $\mathrm{cov}(u_{\perp a,m,c})=I-R_{m\sim u\mid a,c}^{\T}R_{m\sim u\mid a,c}$ and thus
        \begin{equation*}
            \begin{split}
                &R_{y\sim u\mid a,m,c}\mathrm{cov}(u_{\perp a,m,c})^{1/2}\mathrm{cov}(u_{\perp c})^{-1/2}R_{a\sim u\mid c}^{\T}\\
                =&R_{y\sim u\mid a,m,c}(I-R_{m\sim u\mid a,c}^{\T}R_{m\sim u\mid a,c})^{1/2}(I-R_{a\sim u\mid c}^{\T}R_{a\sim u\mid c})^{-1/2}R_{a\sim u\mid c}^{\T}\\
                =&\frac{R_{y\sim u\mid a,m,c}(I-R_{m\sim u\mid a,c}^{\T}R_{m\sim u\mid a,c})^{1/2}R_{a\sim u\mid c}^{\T}}{(1-R_{a\sim u\mid c}R_{a\sim u\mid c}^{\T})^{1/2}}
            \end{split}
        \end{equation*}
        depends only on the moments of observables and $RR^{\T}$. Therefore, $\mathrm{cov}(y_{\perp c,u},a_{\perp c,u})$ depends only on the moments of observables and $RR^{\T}$. 

                \item We have
        \begin{equation*}
            \begin{split}
                \mathrm{cov}(y_{\perp c,u},m_{\perp c,u})&=\mathrm{cov}(y_{\perp c},m_{\perp c})-\mathrm{cov}(y_{\perp c},u_{\perp c})\mathrm{cov}(u_{\perp c})^{-1}\mathrm{cov}(u_{\perp c},m_{\perp c})\\
                &=\mathrm{cov}(y_{\perp c},m_{\perp c})-\mathrm{cov}(y_{\perp c})^{1/2}R_{y\sim u\mid c}R_{m\sim u\mid c}^{\T}\mathrm{cov}(m_{\perp c})^{1/2},\\
            \end{split}
        \end{equation*}
        where the formula of $R_{m\sim u\mid c}$ is given in part 2 and the formula of $R_{y\sim u\mid c}$ is given in part 4. We only need to show $R_{m\sim u\mid a,c}\mathrm{cov}(u_{\perp a,c})^{1/2}\mathrm{cov}(u_{\perp c})^{-1}\mathrm{cov}(u_{\perp a,c})^{1/2}R_{m\sim u\mid a,c}^{\T}$ and 
        \newline
        $R_{y\sim u\mid a,m,c}\mathrm{cov}(u_{\perp a,m,c})^{1/2}\mathrm{cov}(u_{\perp c})^{-1}\mathrm{cov}(u_{\perp a,c})^{1/2}R_{m\sim u\mid a,c}^{\T}$ depend only on moments of observables and $RR^{\T}$ by Lemma \ref{lem::linear-algebra-matrix}.

        We have 
        \begin{equation*}
            \begin{split}
                &R_{m\sim u\mid a,c}\mathrm{cov}(u_{\perp a,c})^{1/2}\mathrm{cov}(u_{\perp c})^{-1}\mathrm{cov}(u_{\perp a,c})^{1/2}R_{m\sim u\mid a,c}^{\T}\\
                =&R_{m\sim u\mid a,c}(I-R_{a\sim u\mid c}^{\T}R_{a\sim u\mid c})^{-1/2}R_{m\sim u\mid a,c}^{\T}\\
            \end{split}
        \end{equation*}
        and
        \begin{equation*}
            \begin{split}
                &R_{y\sim u\mid a,m,c}\mathrm{cov}(u_{\perp a,m,c})^{1/2}\mathrm{cov}(u_{\perp c})^{-1}\mathrm{cov}(u_{\perp a,c})^{1/2}R_{m\sim u\mid a,c}^{\T}\\
                =&R_{y\sim u\mid a,m,c}(I-R_{m\sim u\mid a, c}^{\T}R_{m\sim u\mid a,c})^{-1/2}(I-R_{a\sim u\mid c}^{\T}R_{a\sim u\mid c})R_{m\sim u\mid a,c}^{\T}\\
            \end{split}
        \end{equation*}
        which depend only on the moments of observables and $RR^{\T}$, by Lemma \ref{lem::linear-algebra-matrix}. Therefore, $\mathrm{cov}(y_{\perp c,u},m_{\perp c,u})$ depends only on the moments of observables and $RR^{\T}$. 

        \item We have
        \begin{equation*}
                \mathrm{var}(y_{\perp c,u})=\mathrm{var}(y_{\perp a,c,u})+\mathrm{cov}(y_{\perp c,u},a_{\perp c,u})\mathrm{var}(a_{\perp c,u})^{-1}\mathrm{cov}(a_{\perp c,u},y_{\perp c,u})
        \end{equation*}
        where $\mathrm{cov}(y_{\perp c,u},a_{\perp c,u})\mathrm{var}(a_{\perp c,u})^{-1}\mathrm{cov}(a_{\perp c,u},y_{\perp c,u})$ depends only on moments of observables and $RR^{\T}$ by the previous parts; and
        \begin{equation*}
                \mathrm{var}(y_{\perp a,c,u})=\mathrm{var}(y_{\perp a,m,c,u})+\mathrm{cov}(y_{\perp a,c,u},m_{\perp a,c,u})\mathrm{cov}(m_{\perp a,c,u})^{-1}\mathrm{cov}(m_{\perp a,c,u},y_{\perp a,c,u})
        \end{equation*}
        where we can use similar arguments in parts 1-2 to verify that both $\mathrm{cov}(y_{\perp a,c,u},m_{\perp a,c,u})$ and $\mathrm{cov}(m_{\perp a,c,u})$ depend only on moments of observables and $RR^{\T}$; and
        \begin{equation*}
            \begin{split}
                \mathrm{var}(y_{\perp a,m,c,u})&=\mathrm{var}(y_{\perp a,m,c})-\mathrm{cov}(y_{\perp a,m,c},u_{\perp a,m,c})\mathrm{cov}(u_{\perp a,m,c})^{-1}\mathrm{cov}(u_{\perp a,m,c},y_{\perp a,m,c})\\
                &=\mathrm{var}(m_{\perp a,c})(1-R_{y\sim u\mid a,m,c}R_{y\sim u\mid a,m,c}^{\T})
            \end{split}
        \end{equation*}
        depends only on the moments of observables and $RR^{\T}$. Therefore, $\mathrm{var}(y_{\perp c,u})$ depends only on the moments of observables and $RR^{\T}$. 
    \end{enumerate}

\end{proof}

\subsection{Proofs of Corollaries \ref{cor::direct}--\ref{cor::vu-direct} and \ref{cor::benchmarking}}\label{subsec::proofs-corollary}

\begin{proof}[Proof of Corollary \ref{cor::direct}]

For any $R_1,R_3\in (-1,1)$ and $R_2\in \mathbb{B}_{\mathrm{dim}(m)}$, Lemma \ref{lem::block-matrix-psd} implies that
\begin{equation*}\Sigma_1=
\begin{pmatrix}
\mathrm{var}(a_{\perp c}) &  R_1\mathrm{var}(a_{\perp c})^{1/2}\mathrm{var}(u_{\perp c})^{1/2}\\
R_1\mathrm{var}(u_{\perp c})^{1/2}\mathrm{var}(a_{\perp c})^{1/2} & \mathrm{var}(u_{\perp c})
\end{pmatrix},
\end{equation*}
\begin{equation*}
\Sigma_2=
\begin{pmatrix}
\mathrm{cov}(m_{\perp a,c}) & \mathrm{cov}(m_{\perp a,c})^{1/2}R_2\mathrm{var}(u_{\perp a,c})^{1/2}\\
 \mathrm{var}(u_{\perp a,c})^{1/2}R_2^{\T} \mathrm{cov}(m_{\perp a,c})^{1/2} & \mathrm{var}(u_{\perp a,c})
\end{pmatrix},
\end{equation*}
and 
\begin{equation*}
\Sigma_3=
\begin{pmatrix}
\mathrm{var}(y_{\perp a,m,c}) & R_3\mathrm{var}(y_{\perp a,m,c})^{1/2}\mathrm{var}(u_{\perp a,m,c})^{1/2}\\
R_3\mathrm{var}(u_{\perp a,m,c})^{1/2}\mathrm{var}(y_{\perp a,m,c})^{1/2} & \mathrm{var}(u_{\perp a,m,c})
\end{pmatrix}
\end{equation*}
are positive definite. By Lemma \ref{lem::general-matrix-sharpness}, there exists $u$ such that $R_1=R_{a\sim u\mid c}$, $R_2=R_{m\sim u\mid a,c}$, $R_3=R_{y\sim u\mid a,m,c}$. This shows that $\{R_{y\sim u\mid a,m,c},R_{m\sim u\mid a,c},R_{a\sim u\mid c}\}$ can take arbitrary values in $(-1,1)\times \mathbb{B}_{\mathrm{dim}(m)}\times (-1,1)$. 
\end{proof}

The proof of Corollary \ref{cor::vu-direct} is similar to the proof of Theorem \ref{thm::most-general-ovb} and Corollary \ref{cor::direct}.  The proof of Corollary \ref{cor::benchmarking} is similar to the proof of Corollary \ref{cor::direct}. We omit these proofs.

\end{document}